\definecolor{darkgreen}{rgb}{0,0.5,0}
\definecolor{darkblue}{rgb}{0,0,0.8}
\newtheorem{theorem}{Theorem}[section]
\newtheorem{lemma}{Lemma}[section]
\newtheorem{definition}{Definition}[section]
\newtheorem{corollary}{Corollary}[section]
\newtheorem{observation}{Observation}[section]
\newtheorem{property}{Property}[section]
\newcommand{\vol}{\operatorname{\text{{\rm vol}}}}
\newcommand{\ignore}[1]{}
\algnewcommand\algorithmicswitch{\textbf{switch}}
\algnewcommand\algorithmiccase{\textbf{case}}
\newcommand{\MCM}{{\sc mcm}}
\newcommand{\MWM}{{\sc mwm}}
\newcommand{\MIS}{\operatorname{MIS}}
\newcommand{\MaxIS}{{\sc MaxIS}}
\newcommand{\eps}{\varepsilon}
\newcommand{\congest}{\ensuremath{\mathsf{CONGEST}~}}
\newcommand{\local}{$\mathsf{LOCAL}$\xspace}
\newcommand{\poly}{\operatorname{\text{{\rm poly}}}}
\newcommand{\score}{\operatorname{\text{{\rm score}}}}
\newcommand{\polylog}{\operatorname{\text{{\rm polylog}}}}
\newcommand{\floor}[1]{\lfloor #1 \rfloor}
\newcommand{\ID}{\operatorname{ID}}
\newcommand{\dist}{\operatorname{dist}}
\newcommand{\mix}{\ensuremath{\tau_{\operatorname{mix}}}}
\newcommand{\Er}{E^\mathsf{r}}
\renewcommand{\paragraph}[1]{\vspace{0.15cm}\noindent {\bf #1}:}
\newcommand{\GG}{\mathcal{G}}
\newcommand{\HH}{\mathcal{H}}
\newcommand{\tw}{\operatorname{tw}}
\newcommand{\diam}{\operatorname{diam}}
\title{Narrowing the \local--\congest Gaps in Sparse Networks\\ via Expander Decompositions\thanks{A preliminary version~\cite{10.1145/3519270.3538423} of this paper was presented at the 2022 ACM Symposium on Principles of Distributed Computing (PODC), July 25--29, 2022, Salerno, Italy.}} 
\author{
Yi-Jun Chang
\\
\normalsize National University of Singapore \\
\normalsize \texttt{cyijun@nus.edu.sg} \\
\and
Hsin-Hao Su\\
\normalsize Boston College\\
\normalsize \texttt{hsinhao.su@bc.edu} \\
}	
\begin{document}
\date{}
\maketitle
\thispagestyle{empty}   
\setcounter{page}{0}

\begin{abstract}
Many combinatorial optimization problems, including maximum weighted matching and maximum independent set, can be approximated within $(1 \pm \epsilon)$ factors in  $\poly(\log n, 1/\epsilon)$ rounds in the \local model via network decompositions [Ghaffari, Kuhn, and Maus, STOC 2018]. These approaches, however, require sending messages of unlimited size, so they do not extend to the more realistic \congest model, which restricts the message size to be $O(\log n)$ bits.  For example, despite the long line of research devoted to the distributed matching problem,  it still remains a major open problem whether an $(1-\epsilon)$-approximate maximum weighted matching can be computed in $\poly(\log n, 1/\epsilon)$ rounds in the \congest model.

In this paper, we develop a generic framework for obtaining $\poly(\log n, 1/\epsilon)$-round $(1\pm \epsilon)$-approximation algorithms for many combinatorial optimization problems, including maximum weighted matching, maximum independent set, and correlation clustering, in graphs excluding a fixed minor in the \congest model. This class of graphs covers many  sparse network classes  that have been studied in the literature, including planar graphs, bounded-genus graphs, and bounded-treewidth graphs. 

Furthermore, we show that our framework can be applied to 
give 
an efficient distributed property testing algorithm for an arbitrary minor-closed graph property that is closed under taking disjoint union, significantly generalizing the previous distributed property testing algorithm for planarity in [Levi,  Medina, and Ron, PODC 2018 \& Distributed Computing 2021].

Our framework uses distributed expander decomposition algorithms [Chang and Saranurak, FOCS 2020] to decompose the graph into clusters of high conductance. We show that any graph excluding a fixed minor admits small edge separators. Using this result, we show the existence of a high-degree vertex in each cluster in an expander decomposition, which allows the entire graph topology of the cluster to be routed to a vertex. Similar to the use of network decompositions in the \local model, the vertex will be able to perform any local computation on the subgraph induced by the cluster and broadcast the result over the cluster.
\end{abstract}
\maketitle

\sloppy
\newpage
\thispagestyle{empty}   
\setcounter{page}{0}
\tableofcontents
\newpage

\section{Introduction}
The \local and  \congest models are two prominent vertex-centric models for studying distributed graph algorithms. In these models, vertices host processors and operate in synchronized rounds. In each round, each vertex sends a message to each of its neighbors, receives messages from its neighbors, and performs local computations. The time complexity of an algorithm is defined to be the number of rounds used. The main difference between the two models is the restriction on the message size. In the \local model, we allow messages of unlimited size to be sent across each link; while in the \congest model, an upper bound of $O(\log n)$ bits is imposed on the message size, where $n$ is the number of nodes.  Algorithms designed for the vertex-centric models can be optimized by Pregel-like systems \cite{MABDHLC10} such as GraphX \cite{GXDCFS14} and Gigraph \cite{Giraph} to process massive graph data, see \cite{MWM15} for a comprehensive survey. Since algorithms designed for the \congest model use smaller messages, it is likely they will be converted to more efficient processes than their counterparts in the \local model. 

Combinatorial optimization problems, such as matching and independent set, are central in the area of distributed graph algorithms.
Many combinatorial optimization problems are known to be efficiently solvable in the \local model. Ghaffari, Kuhn, and Maus \cite{GhaffariKM17} gave a general framework for approximating packing and covering integer linear programming problems within $(1 \pm \epsilon)$ of the optimality in $\poly(\log n, 1/\epsilon)$ rounds. The framework covers, for example, the maximum weighted matching problem and the maximum independent set problem.  With the recent breakthrough of Rozho\v{n} and Ghaffari~\cite{RozhonG20} on deterministic network decompositions, their approach can even be implemented deterministically. The approach of \cite{GhaffariKM17}, however, requires sending messages of unlimited size, so the complexities of many of these problems remain to be tackled in the \congest model. For example, in contrary to the \local model, it is still unclear whether a $\poly(\log n, 1/\epsilon)$-round  $(1-\epsilon)$-approximate algorithm for maximum weighted matching exists in the \congest model.  Moreover, it is known that some problems cannot be computed efficiently in the \congest model in general \cite{BachrachCDELP19,EfronGK20}. For example, there is a constant $\epsilon > 0$ such that finding an $(1-\epsilon)$-approximate maximum independent set requires $\tilde{\Omega}(n^2)$ rounds.



\paragraph{Our Contribution}
In this paper, we develop a new tool set for solving combinatorial optimization problems in the \congest model on a wide range of sparse network classes that have been studied in the literature. Our framework applies to any graph classes that are minor closed, covering many natural graph classes such as planar graphs, bounded-genus graphs, and bounded-treewidth graphs.
 
 Our approach is as follows.
We use an expander decomposition to decompose the graph into components of high conductance. The existence of small edge separators guarantees the existence of a high-degree vertex in each component, which allows the entire graph topology of the component to be routed to a vertex. Similar to the use of the network decompositions in the \local model, the vertex will then be able to solve the problem locally and broadcast the result over the component. 

We show that our framework can be applied to give efficient algorithms to solve various combinatorial optimization problems, property testing problems, and graph decomposition problems in the \congest model, narrowing the gaps of these problems between the \congest model and the \local model in $H$-minor-free networks. 


\paragraph{Notation} Throughout this paper, $n = |V|$ denotes the number of the vertices and $\Delta = \max_{v \in V} \deg(v)$ denotes the maximum degree of the graph $G=(V,E)$ under consideration. We say that an algorithm succeeds with high probability (w.h.p.)~if it succeeds with probability  $1  - 1/\poly(n)$. We write $\tilde{O}(\cdot)$, $\tilde{\Omega}(\cdot)$, and $\tilde{\Theta}(\cdot)$ to compress a $\log^{\pm O(1)} n$ factor.

\subsection{Our Results}

\paragraph{Matching} A {\it matching} is a set of edges that do not share endpoints. Given a weighted graph $G = (V,E,w)$, the maximum weight matching (\MWM{}) problem is to compute a matching $M$ with the maximum weight, where the weight of $M$ is defined as $\sum_{e \in M} w(e)$. Given an unweighted graph $G = (V,E)$, the maximum cardinality matching (\MCM{}) problem is to compute a matching $M$ such that $|M|$ is maximized. Clearly, the \MCM{} problem is a special case of the \MWM{} problem.  For \MWM, we assume that all the edge weights $w(e)$ are positive integers, and we write $W$ to denote the maximum weight $\max_{e \in E} w(e)$.

\begin{table}[htbp]\centering
\caption{Previous results on \MCM{} and \MWM{} in the \congest model and the \local model.  }\label{table:matching}
\begin{adjustbox}{width=\textwidth,center}
\begin{tabular}{llllll}
Citation                                          & Problem                                                                           & Ratio                            & Running Time                                                                                             & Type              & Model                        \\ \hline
\multicolumn{1}{|l|}{\cite{II86}}            & \multicolumn{1}{l|}{\MCM}                                                          & \multicolumn{1}{l|}{$\frac{1}{2}$}              & \multicolumn{1}{l|}{$O(\log n)$}                                                                         & \multicolumn{1}{l|}{Rand.} & \multicolumn{1}{l|}{\congest} \\ \hline
\multicolumn{1}{|l|}{\cite{ABI86} }                 & \multicolumn{1}{l|}{\MCM}                                                          & \multicolumn{1}{l|}{$\frac{1}{2}$}              & \multicolumn{1}{l|}{$O(\log n)$}                                                                         & \multicolumn{1}{l|}{Rand.} & \multicolumn{1}{l|}{\congest} \\ \hline
\multicolumn{1}{|l|}{\cite{Luby86}}                        & \multicolumn{1}{l|}{\MCM}                                                          & \multicolumn{1}{l|}{$\frac{1}{2}$}              & \multicolumn{1}{l|}{$O(\log n)$}                                                                         & \multicolumn{1}{l|}{Rand.} & \multicolumn{1}{l|}{\congest} \\ \hline
\multicolumn{1}{|l|}{\cite{HKP01}}           & \multicolumn{1}{l|}{\MCM}                                                          & \multicolumn{1}{l|}{$\frac{1}{2}$}              & \multicolumn{1}{l|}{$O(\log^4  n)$}                                                                         & \multicolumn{1}{l|}{Det.}  & \multicolumn{1}{l|}{\congest} \\ \hline
\multicolumn{1}{|l|}{\cite{WW04}} & \multicolumn{1}{l|}{\MWM}                                                          & \multicolumn{1}{l|}{$\frac{1}{5}$}              & \multicolumn{1}{l|}{$O(\log^2 n)$}                                                                       & \multicolumn{1}{l|}{Rand.} & \multicolumn{1}{l|}{\congest} \\ \hline
\multicolumn{1}{|l|}{\cite{LPR09}}               & \multicolumn{1}{l|}{\MWM}                                                          & \multicolumn{1}{l|}{$\frac{1}{4} - \epsilon$} & \multicolumn{1}{l|}{$O(\epsilon^{-1}\log \epsilon^{-1} \log n)$}                                         & \multicolumn{1}{l|}{Rand.}      & \multicolumn{1}{l|}{\congest} \\ \hline
\multicolumn{1}{|l|}{\multirow{3}{*}{\begin{tabular}[c]{@{}l@{}} \cite{LPP15}\end{tabular}}}                & \multicolumn{1}{l|}{\begin{tabular}[c]{@{}l@{}}\MCM\\  (bipartite)\end{tabular}} & \multicolumn{1}{l|}{$1-\epsilon$}     & \multicolumn{1}{l|}{$O(\log n/\epsilon^3)$}                                                              & \multicolumn{1}{l|}{Rand.} & \multicolumn{1}{l|}{\congest} \\ \cline{2-6}
\multicolumn{1}{|l|}{}                            & \multicolumn{1}{l|}{\MCM}                                                          & \multicolumn{1}{l|}{$1-\epsilon$}     & \multicolumn{1}{l|}{$2^{O(1/\epsilon)} \cdot O(   \epsilon^{-4} \log \epsilon^{-1} \cdot \log n)$} & \multicolumn{1}{l|}{Rand.}      & \multicolumn{1}{l|}{\congest} \\ \cline{2-6}
\multicolumn{1}{|l|}{}                            & \multicolumn{1}{l|}{\MWM}                                                          & \multicolumn{1}{l|}{$\frac{1}{2} - \epsilon$} & \multicolumn{1}{l|}{$O(\log(1/\epsilon) \cdot \log n)$}                                                  & \multicolumn{1}{l|}{Rand.} & \multicolumn{1}{l|}{\congest} \\ \hline
\multicolumn{1}{|l|}{\multirow{4}{*}{\begin{tabular}[c]{@{}l@{}} \cite{BCGS17} \end{tabular}}}          & \multicolumn{1}{l|}{\MWM}                                                          & \multicolumn{1}{l|}{$\frac{1}{2}$}              & \multicolumn{1}{l|}{$O(\log n \cdot \log W)$}                                                            & \multicolumn{1}{l|}{Rand.} & \multicolumn{1}{l|}{\congest} \\ \cline{2-6}
\multicolumn{1}{|l|}{}                            & \multicolumn{1}{l|}{\MWM}                                                          & \multicolumn{1}{l|}{$\frac{1}{2}$}              & \multicolumn{1}{l|}{$O(\Delta + \log n)$}                                                                & \multicolumn{1}{l|}{Det.}  & \multicolumn{1}{l|}{\congest} \\ \cline{2-6}
\multicolumn{1}{|l|}{}                            & \multicolumn{1}{l|}{\MWM}                                                          & \multicolumn{1}{l|}{$\frac{1}{2}-\epsilon$}   & \multicolumn{1}{l|}{$O(\log \Delta/ \log \log \Delta)$}                                                  & \multicolumn{1}{l|}{Rand.} & \multicolumn{1}{l|}{\congest} \\ \cline{2-6}
\multicolumn{1}{|l|}{}                            & \multicolumn{1}{l|}{\MCM}                                                          & \multicolumn{1}{l|}{$1-\epsilon$}     & \multicolumn{1}{l|}{$2^{O(1/\epsilon)}\cdot O(\log \Delta/ \log \log \Delta)$}                                                  & \multicolumn{1}{l|}{Rand.} & \multicolumn{1}{l|}{\congest} \\ \hline
\multicolumn{1}{|l|}{\multirow{2}{*}{\begin{tabular}[c]{@{}l@{}} \cite{Fischer17}\end{tabular}}}                         & \multicolumn{1}{l|}{\MCM}                                                          & \multicolumn{1}{l|}{$\frac{1}{2}$}              & \multicolumn{1}{l|}{$O(\log^2 \Delta \cdot \log n)$}                                                     & \multicolumn{1}{l|}{Det.}  & \multicolumn{1}{l|}{\congest} \\ \cline{2-6}
\multicolumn{1}{|l|}{}                            & \multicolumn{1}{l|}{\MWM}                                                          & \multicolumn{1}{l|}{$\frac{1}{2}-\epsilon$}   & \multicolumn{1}{l|}{$O(\log^2 \Delta \cdot \log \epsilon^{-1} + \log^{*} n)$}                              & \multicolumn{1}{l|}{Det.}  & \multicolumn{1}{l|}{\congest} \\ \hline
\multicolumn{1}{|l|}{\multirow{2}{*}{\begin{tabular}[c]{@{}l@{}} \cite{AKO18}\end{tabular}}}              & \multicolumn{1}{l|}{\begin{tabular}[c]{@{}l@{}}\MWM \\  (bipartite)\end{tabular}} & \multicolumn{1}{l|}{$1-\epsilon$}     & \multicolumn{1}{l|}{$O(\frac{\log(\Delta W )}{\epsilon^2} + \frac{\log^2 \Delta + \log^{*} n} {\epsilon})$}           & \multicolumn{1}{l|}{Det.}  & \multicolumn{1}{l|}{\congest} \\ \cline{2-6}
\multicolumn{1}{|l|}{}                            & \multicolumn{1}{l|}{\MWM}                                                          & \multicolumn{1}{l|}{$\frac{2}{3} - \epsilon$} & \multicolumn{1}{l|}{$O(\frac{\log(\Delta W )}{\epsilon^2} + \frac{\log^2 \Delta + \log^{*} n} {\epsilon})$}           & \multicolumn{1}{l|}{Det.}  & \multicolumn{1}{l|}{\congest} \\ \hline
\multicolumn{1}{|l|}{\cite{FFK21}}            & \multicolumn{1}{l|}{\MWM}                                                          & \multicolumn{1}{l|}{$1-\epsilon$}              & \multicolumn{1}{l|}{$2^{O(1/\epsilon)}\cdot \polylog(n)$}                                                                         & \multicolumn{1}{l|}{Det.} & \multicolumn{1}{l|}{\congest} \\ \hline
\multicolumn{1}{|l|}{\cite{FMU22}}            & \multicolumn{1}{l|}{\MCM}                                                          & \multicolumn{1}{l|}{$1-\epsilon$}              & \multicolumn{1}{l|}{$\poly(\log n, 1/\epsilon)$}                                                                         & \multicolumn{1}{l|}{Det.} & \multicolumn{1}{l|}{\congest} \\ \hline
\multicolumn{1}{|l|}{\cite{czygrinow2008fast}}                      & \multicolumn{1}{l|}{\begin{tabular}[c]{@{}l@{}}\MCM \\  (planar)\end{tabular}}                                                          & \multicolumn{1}{l|}{$1-\epsilon$}     & \multicolumn{1}{l|}{$O(\log(1/\epsilon)\cdot \log^{*} n)$}                                                                                    & \multicolumn{1}{l|}{Det.}  & \multicolumn{1}{l|}{\local}   \\ \hline
\multicolumn{1}{|l|}{\cite{CzygrinowHS09}}                      & \multicolumn{1}{l|}{\begin{tabular}[c]{@{}l@{}}\MCM \\  (bounded arb.)\end{tabular}}                                                          & \multicolumn{1}{l|}{$1-\epsilon$}     & \multicolumn{1}{l|}{$(1/\epsilon)^{O(1/\epsilon)}+O(\log^{*} n)$}                                                                                    & \multicolumn{1}{l|}{Det.}  & \multicolumn{1}{l|}{\local}   \\ \hline
\multicolumn{1}{|l|}{\cite{Nieberg08}}                     & \multicolumn{1}{l|}{\MWM}                                                          & \multicolumn{1}{l|}{$1-\epsilon$}     & \multicolumn{1}{l|}{$O(\epsilon^{-2} \log n \cdot T_{\MIS}(n^{O(1/\epsilon)}))$}                          & \multicolumn{1}{l|}{}      & \multicolumn{1}{l|}{\local}   \\ \hline
\multicolumn{1}{|l|}{\cite{BEPS16}}            & \multicolumn{1}{l|}{\MCM}                                                          & \multicolumn{1}{l|}{$\frac{1}{2}$}              & \multicolumn{1}{l|}{$O(\log \Delta + \log^{4} \log n)$}                                                  & \multicolumn{1}{l|}{Rand.} & \multicolumn{1}{l|}{\local}   \\ \hline
\multicolumn{1}{|l|}{\multirow{2}{*}{\begin{tabular}[c]{@{}l@{}} \cite{EvenMR15}\end{tabular}}}                         & \multicolumn{1}{l|}{\MCM}                                                          & \multicolumn{1}{l|}{$1-\epsilon$}              & \multicolumn{1}{l|}{$\Delta^{O(1/\epsilon)}+O(\log^{*} n / \epsilon^2)$}                                                     & \multicolumn{1}{l|}{Det.}  & \multicolumn{1}{l|}{\local} \\ \cline{2-6}
\multicolumn{1}{|l|}{}                            & \multicolumn{1}{l|}{\MWM}                                                          & \multicolumn{1}{l|}{$1-\epsilon$}   & \multicolumn{1}{l|}{
$O(\epsilon^{-2}\log \epsilon^{-1}) \cdot \log^\ast n + \Delta^{O(1/\epsilon)} \cdot O(\log \Delta)$
}                              & \multicolumn{1}{l|}{Det.}  & \multicolumn{1}{l|}{\local} \\ \hline
\multicolumn{1}{|l|}{\cite{FGK17}}              & \multicolumn{1}{l|}{\MCM}                                                          & \multicolumn{1}{l|}{$1-\epsilon$}     & \multicolumn{1}{l|}{$O(\Delta^{1/\epsilon} + \poly(\frac{1}{\epsilon}) \log^{*} n)$}                                                                                    & \multicolumn{1}{l|}{Det.}  & \multicolumn{1}{l|}{\local}   \\ \hline
\multicolumn{1}{|l|}{\multirow{2}{*}{\begin{tabular}[c]{@{}l@{}}\cite{GhaffariKM17} $+$\\ ~\cite{RozhonG20,GhaffariGR21} \end{tabular}}}             & \multicolumn{1}{l|}{\MWM}                                                          & \multicolumn{1}{l|}{$1-\epsilon$}     & \multicolumn{1}{l|}{$O(\epsilon^{-1} \log^3 n)$}                                                                                    & \multicolumn{1}{l|}{Rand.}  & \multicolumn{1}{l|}{\local}  \\ \cline{2-6}
\multicolumn{1}{|l|}{}                            & \multicolumn{1}{l|}{\MWM}                                                          & \multicolumn{1}{l|}{$1 - \epsilon$} & \multicolumn{1}{l|}{$O(\epsilon^{-1} \log^7 n)$}           & \multicolumn{1}{l|}{Det.}  & \multicolumn{1}{l|}{\local}  \\ \hline
\multicolumn{1}{|l|}{\cite{GHK18}}             & \multicolumn{1}{l|}{\MCM}                                                          & \multicolumn{1}{l|}{$1-\epsilon$}     & \multicolumn{1}{l|}{$O(\epsilon^{-9} \log^{5} \Delta \log^2 n)$}                                                                                    & \multicolumn{1}{l|}{Det.}  & \multicolumn{1}{l|}{\local}   \\ \hline
\multicolumn{1}{|l|}{\begin{tabular}[c]{@{}l@{}}\cite{GHK18} $+$\\ \cite{GhaffariKMU18} \end{tabular}}             & \multicolumn{1}{l|}{\MWM}                                                          & \multicolumn{1}{l|}{$1-\epsilon$}     & \multicolumn{1}{l|}{$O(\epsilon^{-7} \log^{4} \Delta \log^3 n)$}                                                                                    & \multicolumn{1}{l|}{Det.}  & \multicolumn{1}{l|}{\local}   \\ \hline
\multicolumn{1}{|l|}{\multirow{2}{*}{\begin{tabular}[c]{@{}l@{}} \\ \cite{Harris19}\end{tabular}}}                      & \multicolumn{1}{l|}{\MWM}                                                          & \multicolumn{1}{l|}{$1-\epsilon$}     & \multicolumn{1}{l|}{$O(\epsilon^{-4} \log^2 \Delta + \epsilon^{-1} \log^{*} n)$}                                                                                    & \multicolumn{1}{l|}{Det.}  & \multicolumn{1}{l|}{\local} \\ \cline{2-6}
\multicolumn{1}{|l|}{}                            & \multicolumn{1}{l|}{\MWM}                                                          & \multicolumn{1}{l|}{$1 - \epsilon$} & \multicolumn{1}{l|}{\begin{tabular}[c]{@{}l@{}}$O(\epsilon^{-3}\log (\Delta + \log \log n)$ \\$+\epsilon^{-2}\cdot (\log \log n)^2 )$\end{tabular}}           & \multicolumn{1}{l|}{Rand.}  & \multicolumn{1}{l|}{\local}  \\ \hline
\end{tabular}
\end{adjustbox}
\end{table}

 For a comprehensive survey on matching in the \congest and the \local models, see  \cref{table:matching}.
In the \congest model, \cite{LPP15, BCGS17} showed that a $(1-\epsilon)$ approximate \MCM{} can be computed in rounds with exponential dependencies on $(1/\epsilon)$. Very recently, and independently from our work, \cite{FMU22} showed that a $(1 - \epsilon)$-approximate \MCM{} can be computed in $\poly(\log n, 1/\epsilon)$ rounds. However, for the \MWM{} problem in general graphs, currently the best approximation ratio one can get in $\poly(\log n, 1/\epsilon)$ rounds is $(2/3 - \epsilon)$ by the rounding approach of \cite{AKO18}. Using exponential in $(1/\epsilon)$ rounds, recently \cite{FFK21} showed that a $(1 - \epsilon)$-approximate \MWM{} can be computed in general graphs. Also in bipartite graphs, a $(1-\epsilon)$-approximate \MWM{} is known to be obtainable in $\poly(\log n, 1/\epsilon)$ rounds \cite{LPP15, Fischer17}. 

On the other hand, as shown in \cref{table:matching}, in the \local model, many fast $\poly(\log n, 1/\epsilon)$-round algorithms for computing a $(1-\epsilon)$-approximate \MWM{} in general graphs have been developed.  Using our framework, we obtain the first $\poly(\log n, 1/\epsilon)$-round algorithms for computing $(1-\epsilon)$-approximate \MWM{} in non-trivial graph classes outside bipartite graphs in the \congest model.


\begin{restatable}{theorem}{thmmatching}
\label{thm:matching}
A $(1-\eps)$-approximate maximum weighted matching of an $H$-minor-free network $G$ can be computed in  $\eps^{-O(1)} \log^{O(1)} n$ rounds with high probability in the \congest model.
\end{restatable}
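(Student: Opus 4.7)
The plan is to reduce $(1-\eps)$-approximate MWM in $H$-minor-free \congest\ to a centralized computation on small, gatherable clusters, using the paper's expander-decomposition framework.

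First, I would invoke the distributed expander decomposition of Chang--Saranurak, lifted to this paper's setting, with a parameter chosen so that the resulting vertex-disjoint partition $V(G) = C_1 \sqcup \dots \sqcup C_k$ satisfies (i) each induced subgraph $G[C_i]$ has conductance at least $\phi = 1/\poly(\log n, 1/\eps)$, and (ii) each edge of $G$ is inter-cluster with probability at most $\eps$. Property (ii), which is stronger than the standard ``total boundary of size $O(\eps m)$'' guarantee, can be obtained by combining the decomposition's inherent randomization with a Baker-style random shift on a low-radius partition, or by repeated sampling.

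Next, I would apply the paper's main structural result. Each $G[C_i]$ is $H$-minor-free, so it admits small edge separators, which combined with the high conductance forces the existence of a high-degree vertex in $G[C_i]$. By the abstract's framework, this enables a leader of the cluster to collect the whole induced subgraph $G[C_i]$ (vertices, edges, and edge weights) in $\poly(\log n, 1/\eps)$ \congest\ rounds via expander routing. The leader then computes a maximum weighted matching $M_i$ of $G[C_i]$ centrally (e.g., by Edmonds' blossom algorithm) and broadcasts it back across the cluster in the same asymptotic time.

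The output matching is $M = \bigcup_i M_i$, which is valid because the clusters are vertex-disjoint. For the approximation guarantee, fix an optimal MWM $M^\ast$ of $G$; property (ii) implies $\E\bigl[w\bigl(M^\ast \cap \bigcup_i E(G[C_i])\bigr)\bigr] \ge (1-\eps)\, w(M^\ast)$, and since $M_i$ is optimal within its cluster, $\E[w(M)] \ge (1-\eps)\, w(M^\ast)$. A standard boosting argument by $O(\log n)$ independent repetitions---selecting the heaviest output, or invoking Chernoff-style concentration on the edges of $M^\ast$---upgrades this to a w.h.p.\ statement at the cost of an additional $\polylog(n)$ factor, staying within the claimed $\eps^{-O(1)}\log^{O(1)} n$ bound.

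The main obstacle I expect is establishing property (ii) itself: standard expander decomposition only controls the \emph{total} number of boundary edges, not the per-edge cut probability, so an adversarial $M^\ast$ could concentrate on a few heavy cut edges. I would address this by interleaving the expander step with a randomly shifted low-diameter partition (using that $H$-minor-free graphs admit padded decompositions), so that any fixed edge survives intact with probability $1-O(\eps)$, at which point the cluster-local argument above closes the proof.
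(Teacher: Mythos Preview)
Your proposal has a genuine gap, and it is precisely the one the paper singles out as the reason the weighted case is hard. You correctly identify that a single expander decomposition only bounds the \emph{total} number of boundary edges, so an adversarial $M^\ast$ can sit on a few heavy cut edges. Your fix is to upgrade to a per-edge guarantee (ii), claiming it follows from ``the decomposition's inherent randomization'', a ``Baker-style random shift'', or a ``randomly shifted low-diameter partition''. None of these delivers what you need. The Chang--Saranurak decomposition is structural: which edges get cut is dictated by the conductance profile of the graph, and repeated sampling will cut the same sparse cuts every time. A padded or KPR-style low-diameter decomposition does give a per-edge cutting probability $O(\eps)$ in $H$-minor-free graphs, but its clusters are merely low-diameter, not $\phi$-expanders; without the expander property, the paper's key mechanism (a high-degree vertex forced by the small-edge-separator lemma, plus expander routing to that vertex) is unavailable, and you have no way to gather $G[C_i]$ at a single node in $\poly(\log n,1/\eps)$ \congest rounds. ``Interleaving'' the two does not help: whichever decomposition you apply last to obtain gatherable clusters will again cut edges adversarially, and the per-edge guarantee is lost at that stage.

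The paper's proof takes a completely different route. It never tries to make a single cut oblivious to $M^\ast$. Instead it runs the Duan--Pettie primal--dual scaling algorithm and applies a fresh expander decomposition in \emph{every} iteration, but only to the current eligible subgraph. Inter-cluster eligible edges are not removed; their weights are perturbed by $\pm\delta_i$ via a weight modifier $\Delta w$, which merely makes them ineligible for that iteration while keeping the relaxed complementary slackness invariants intact. The total perturbation is then bounded by (number of iterations)$\times$(edges cut per iteration)$\times\delta_i$, and a separate preprocessing step (removing high-degree free vertices before the augmentation decomposition, and non-blossom-forming free vertices before the blossom-shrinking decomposition) ensures the number of cut edges is $O(\eps''|M|)$ rather than $O(\eps''|E_{\mathrm{elig}}|)$, which via $w(M^\ast)\ge |M|\cdot W/2^{i+2}$ converts into an $O(\eps' w(M^\ast))$ bound on $\sum_e|\Delta w(e)|$. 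Additional care is needed to handle blossoms that straddle a cut (the $\textsc{Cut}$ procedure with frozen free vertices). In short, the approximation loss is amortized across $O((\log W)/\eps)$ decompositions inside a primal--dual framework, not absorbed by a single randomized cut.
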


Throughout the paper, although the  hidden leading constant in the round complexity in our algorithms for $H$-minor-free networks depend on $H$, we emphasize that the constants $O(1)$ in the exponents of the round complexity   $\eps^{-O(1)} \log^{O(1)} n$ are independent of $H$.

\paragraph{Maximum Independent Set}
An independent set is a set of non-adjacent vertices. The maximum independent set (\MaxIS) problem is to find an independent set whose cardinality is maximum over all possible independent sets.  Note that a maximal independent set is a $(1/\Delta)$-approximation to the \MaxIS{} problem. Therefore, in the \congest model, a $(1/\Delta)$-approximate solution can be computed in $\MIS(n,\Delta)$ time, where $\MIS(n,\Delta)$ is the number rounds needed to compute a maximal independent set in the \congest model. The weighted version of the problem  was considered in~\cite{BCGS17}, and they gave an algorithm that finds a $(1/\Delta)$-approximate weighted \MaxIS{} in $O(\MIS(n,\Delta)\cdot \log W) $ rounds, where $W$ is the maximum weight. Later, it was shown in \cite{KawarabayashiKS20}  that a $((1-\epsilon)/\Delta)$-approximate weighted \MaxIS{} can be computed in $\poly(\log \log n)\cdot O(1/\epsilon)$ rounds with high probability. Moreover, they also showed that a $((1-\epsilon)/8\alpha)$-approximate weighted \MaxIS{} in graphs of arboricity $\alpha$ can be obtained in $\tilde{O}(\log n / \epsilon)$ rounds with high probability. For the unweighted version, \cite{KawarabayashiKS20} also showed that a $((1-\epsilon)/\Delta)$-approximate \MaxIS{} can be computed in $O(1/\epsilon)$ rounds with high probability when $\Delta \leq n/\log n$.  

In the \local model, Ghaffari, Kuhn, Maus \cite{GhaffariKM17} showed that an $(1-\epsilon)$-approximation to the \MaxIS{} problem can be computed in $\poly(\log n, 1/\epsilon)$ rounds. No analogous $(1-\epsilon)$-approximation algorithms are known in the \congest model as there are lower bounds showing algorithms with constant approximation ratios require $n^{\Theta(1)}$ rounds \cite{BachrachCDELP19,EfronGK20}. Using our framework, we show:

\begin{restatable}{theorem}{thmindependentset}
\label{thm:independentset}
A $(1-\eps)$-approximate maximum independent set of an $H$-minor-free network $G$ can be computed in  $\eps^{-O(1)} \log^{O(1)} n$ rounds with high probability and $\eps^{-O(1)} 2^{O(\sqrt{\log n \log \log n})}$  rounds deterministically in the \congest model.
\end{restatable}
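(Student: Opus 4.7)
The plan is to instantiate the expander-decomposition framework outlined in the introduction. First, I would invoke the distributed expander decomposition algorithm of Chang and Saranurak with conductance parameter $\phi = \Theta(\eps)$ to partition $V$ into clusters $V_1,\ldots,V_k$ such that each induced subgraph $G[V_i]$ has high conductance and the total number of inter-cluster edges is at most $\phi \cdot |E|$. Since $G$ is $H$-minor-free we have $|E| = O(n)$, so at most $O(\eps n)$ edges are cut. The randomized version runs in $\eps^{-O(1)} \log^{O(1)} n$ rounds, while the deterministic version incurs the $2^{O(\sqrt{\log n \log \log n})}$ overhead that appears in the theorem statement.

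Within each cluster $V_i$, I would use the small edge separator property of $H$-minor-free graphs (established earlier in the paper) to locate a vertex $v_i \in V_i$ whose degree is large enough that, combined with the high conductance of $G[V_i]$, the entire topology of $G[V_i]$ can be routed to $v_i$ in $\eps^{-O(1)} \log^{O(1)} n$ rounds. Once $v_i$ knows $G[V_i]$ (which is itself $H$-minor-free), it runs a centralized polynomial-time PTAS for maximum independent set on $H$-minor-free graphs (such as Baker's layering or Grohe's generalization) to produce an independent set $I_i \subseteq V_i$ with $|I_i| \geq (1-\eps) \cdot \mathrm{OPT}(G[V_i])$, then broadcasts $I_i$ back over the cluster in $\eps^{-O(1)} \log^{O(1)} n$ rounds.

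The candidate set $I = \bigcup_i I_i$ is independent in $\bigcup_i G[V_i]$ but may violate independence in $G$ via inter-cluster edges. I resolve each such conflict by dropping one endpoint, removing at most $O(\eps n)$ vertices. For the approximation analysis, the restriction of any optimum $I^\ast$ of $G$ to each $V_i$ is independent in $G[V_i]$, so $\sum_i \mathrm{OPT}(G[V_i]) \geq \mathrm{OPT}(G)$, hence $\sum_i |I_i| \geq (1-\eps) \mathrm{OPT}(G)$. Using that $H$-minor-free graphs have bounded chromatic number $\chi_H$, we have $\mathrm{OPT}(G) \geq n/\chi_H = \Omega(n)$, so the $O(\eps n)$ discarded vertices amount to only an $O(\eps)$ fraction of $\mathrm{OPT}(G)$, yielding an overall $(1-O(\eps))$-approximation after rescaling $\eps$.

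The step I expect to be the main obstacle is the careful tuning of the expander-decomposition parameters so that the cut-edge budget $\phi |E|$ is small enough to be absorbed into the approximation ratio while the conductance still remains high enough to route the full topology of each cluster to its hub vertex within $\eps^{-O(1)} \log^{O(1)} n$ rounds; this balance is what makes it essential to know both that $|E(G)| = O(n)$ and that $G[V_i]$ is itself $H$-minor-free so that a centralized PTAS applies. The rest of the argument (broadcasting the local solutions, discarding conflict endpoints, and propagating the deterministic expander-decomposition overhead into the final bound) is routine.
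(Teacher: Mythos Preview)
Your proposal is correct and follows essentially the same route as the paper: cluster via expander decomposition, route each cluster's topology to a hub vertex, solve locally, union the solutions, and repair inter-cluster conflicts using $\alpha(G) = \Omega(n)$. The only differences are cosmetic: the paper simply computes the \emph{exact} maximum independent set at each hub (local computation is unbounded in \congest, so no PTAS is needed), and the parameter-balancing concern you flag as the main obstacle is already packaged once and for all in \cref{thm:routing-main}.
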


\paragraph{Correlation Clustering} The correlation clustering problem introduced by Bansal, Blum, and Chawla~\cite{BBC04} is known to have various applications in spam detection, gene clustering, chat disentanglement, and co-reference resolution \cite{CDK14, BGL14, ES09, ARS09, EIV07}. In this problem, each edge is labeled with a positive label or a negative label that denotes whether the two endpoints of the edge are positively correlated or negatively correlated.  

The goal is to partition the vertices $V$ into clusters $V_1, V_2, \cdots, V_k$ such that they are as {\it consistent} with the labels as possible. Let $E^{+}$ denote the positively-labeled edges and $E^{-}$ denote the negatively-labeled edges.  There are two versions of the problem: In the agreement maximization version, the goal is to maximize $\sum_{i=1}^k |E^+ \cap (V_i \times V_i)| + \sum_{1 \leq i < j \leq k} |E^- \cap (V_i \times V_j)|$. In the disagreement minimization version, the goal is to minimize $\sum_{i=1}^k |E^- \cap (V_i \times V_i)| + \sum_{1 \leq i < j \leq k} |E^+ \cap (V_i \times V_j)|$. Note that two versions of the problem are equivalent if one is looking for the exact solution. 

We focus on approximate solutions for the agreement maximization version of the problem. In the centralized setting, the problem is shown to be APX-Hard in general graphs \cite{EF03, CGW05}. In particular, Charikar, Guruswami, and Wirth \cite{CGW05} showed that it is NP-hard to approximate the problem within a factor of $115/116+\epsilon$ for any $\epsilon > 0$. On the positive side, they gave a 0.7664-approximation algorithm for the problem. Later, Swamy \cite{Swamy04} gave a 0.7666-approxmation algorithm for the problem. In the distributed setting, while there are $O(1)$-approximation parallel algorithms on complete graphs \cite{CDK14, PPORRJ15, CambusCMU21} for the disagreement minimization problem, to our knowledge, no efficient algorithms for the \congest model or the \local model have been proposed outside of complete graphs for both versions of the problem.\footnote{It is, however, not hard to see that a $\poly(\log n, 1/\epsilon)$-round $(1-\epsilon)$-approximate algorithm for the agreement maximization problem in general graphs can be obtained via low diameter decompositions in the \local model.} Using our framework, we show:

\begin{restatable}{theorem}{thmclustering}
\label{thm:clustering}
A $(1-\eps)$-approximate agreement maximization correlation clustering of an $H$-minor-free network $G$ can be computed in  $\eps^{-O(1)} \log^{O(1)} n$ rounds with high probability and $\eps^{-O(1)} 2^{O(\sqrt{\log n \log \log n})}$  rounds deterministically in the \congest model.
\end{restatable}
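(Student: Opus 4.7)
My plan is to apply the paper's general ``expander-decompose, gather at a high-degree vertex, solve locally'' template, with the trivial lower bound $\mathrm{OPT}\geq |E|/2$ doing the work of converting an additive decomposition loss into a multiplicative approximation guarantee. First I would invoke the distributed expander decomposition (the very same primitive used for \cref{thm:matching} and \cref{thm:independentset}) on $G$, with the conductance parameter and cut budget tuned so that the total number of inter-cluster edges is at most $\eps|E|$ and each induced subgraph $G[V_i]$ is a sufficiently strong expander. This step costs $\eps^{-O(1)}\log^{O(1)} n$ rounds with high probability, or $\eps^{-O(1)} 2^{O(\sqrt{\log n \log\log n})}$ rounds deterministically, matching the claimed round complexities.

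In each cluster $V_i$, the paper's structural lemma---every $H$-minor-free expander contains a vertex of sufficiently high degree, via the small-edge-separator theorem for $H$-minor-free graphs---provides a vertex $r_i$ to which the entire labeled subgraph $G[V_i]$ (topology plus one bit per edge encoding its $+/-$ label) can be routed in $\eps^{-O(1)}\log^{O(1)} n$ \congest rounds using the topology-gathering routine of the paper. Since local computation is unconstrained, at $r_i$ I would compute an \emph{exact} maximum-agreement correlation clustering $\mathcal{C}_i$ of $G[V_i]$. Broadcasting the cluster identifiers of $\mathcal{C}_i$ along the same routing tree back to every vertex of $V_i$ assigns each vertex its output label, and the final partition $\mathcal{C} = \bigsqcup_{i=1}^k \mathcal{C}_i$ is obtained by keeping clusters from distinct $V_i$ distinct.

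For the approximation analysis, let $\mathrm{OPT}$ denote the global optimum and $\mathrm{OPT}_i$ the optimum of $G[V_i]$ considered in isolation. Restricting any optimal global clustering to $V_i$ is a legal clustering of $G[V_i]$, so its intra-$V_i$ agreement is at most $\mathrm{OPT}_i$; combining with the trivial upper bound on the inter-cluster contribution gives $\mathrm{OPT} \leq \sum_i \mathrm{OPT}_i + \eps|E|$. My output attains $\sum_i \mathrm{OPT}_i$ on intra-cluster edges (any agreement gained on negative inter-cluster edges is a non-negative bonus), so its value is at least $\mathrm{OPT} - \eps|E|$. Using $\mathrm{OPT} \geq |E|/2$---witnessed by taking either the single-cluster or the all-singletons partition, whichever achieves the larger of $|E^+|$, $|E^-|$---yields an agreement of at least $(1 - 2\eps)\,\mathrm{OPT}$, and rescaling $\eps$ by a constant factor establishes the claimed $(1-\eps)$-approximation.

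The main obstacle is not in the correlation-clustering argument itself, which is essentially a one-line substitution of the centralized subroutine in the paper's template. Rather, the delicate step is to verify that the expander decomposition can be tuned so that the inter-cluster edge count is at most $\eps|E|$ while the intra-cluster conductance remains strong enough to trigger the high-degree-vertex lemma and fit the topology-gathering primitive inside the claimed round budget. These are precisely the parameter choices already worked out for \cref{thm:matching} and \cref{thm:independentset}, so I would reuse them verbatim rather than re-derive them here; the only correlation-clustering-specific ingredient is the $\mathrm{OPT}\geq |E|/2$ bound that converts the decomposition's $\eps|E|$ slack into a $(1-O(\eps))$ multiplicative ratio.
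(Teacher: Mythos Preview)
Your proposal is correct and follows essentially the same approach as the paper: invoke \cref{thm:routing-main} with a rescaled parameter, gather each $G[V_i]$ (with edge labels) at the leader, solve optimally there, take the disjoint union of the local clusterings, and turn the $\eps|E|$ additive loss into a $(1-O(\eps))$ multiplicative guarantee via the bound $\mathrm{OPT}\geq |E|/2$. The only cosmetic difference is that the paper sets $\eps'=\eps/2$ upfront rather than rescaling at the end.
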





In addition to approximation algorithms for combinatorial optimization problems, we demonstrate  applications of our framework to the realm of property testing and graph decompositions.

\paragraph{Property Testing} 
A {graph property} $\mathcal{P}$ is a set of graphs. We say that a graph $G$ has property $\mathcal{P}$ if $G \in \mathcal{P}$. We say that an $n$-vertex graph $G=(V,E)$ is \emph{$\eps$-far} from having property $\mathcal{P}$ if removing and adding at most $\epsilon|E|$ edges cannot turn $G$ into a graph in $\mathcal{P}$.
The study of  property testing in the distributed setting was initiated by Censor-Hillel, Fischer, Schwartzman, and Vasudev~\cite{censor2019fast}.
We say that a distributed property testing algorithm $\mathcal{A}$ for a property $\mathcal{P}$ with proximity parameter $\eps$ is correct if it satisfies the following.
\begin{itemize}
    \item If $G$ has property $\mathcal{P}$, then all vertices output {\sf Accept}.
    \item If $G$ is $\eps$-far from having property $\mathcal{P}$, then at least one vertex outputs {\sf Reject}.
\end{itemize}

 Levi, Medina, and Ron~\cite{levi2021property} showed an $\eps^{-O(1)} \cdot O(\log n)$-round distributed algorithm for property testing of planarity in the \congest model with \emph{one-sided error}.
 If $G$ has property $\mathcal{P}$, then all vertices output {\sf Accept}.  If $G$ is $\eps$-far from having property $\mathcal{P}$, then at least one vertex outputs {\sf Reject} with high probability.
 Their algorithm uses the distributed planarity testing algorithm of Ghaffari and Haeupler~\cite{ghaffari2016planar} as a subroutine. 
 
Using our framework, 
we give a simple proof that distributed property testing of planarity can be solved in $\poly(1/\eps, \log n)$ rounds in the randomized setting and in $n^{o(1)} \cdot \poly(1/\eps)$ rounds in the deterministic setting. 
More generally, our algorithm can be generalized to testing an \emph{arbitrary} minor-closed graph property that is closed under taking disjoint union.

\begin{restatable}{theorem}{thmtesting}
\label{thm:testing}
Distributed property testing for any minor-closed graph property $\mathcal{P}$ that is closed under taking disjoint union can be solved in $\eps^{-O(1)} \log^{O(1)} n$ rounds with high probability and $\eps^{-O(1)} 2^{O(\sqrt{\log n \log \log n})}$  rounds deterministically in the \congest model.
\end{restatable}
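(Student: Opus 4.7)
The plan is to deploy the paper's expander-decomposition framework in a simple way that reduces the global property test to independent local tests on clusters, and then to extract a correctness argument from the two closure properties of $\mathcal{P}$. First I would compute a distributed expander decomposition of $G$ using the randomized (or deterministic) variant invoked elsewhere in the paper, with parameters chosen so that each cluster $G[V_i]$ has conductance at least $\phi=1/\polylog n$ and the total number of inter-cluster edges is at most $\eps|E(G)|/2$. Then, for each cluster, I would invoke the topology-gathering subroutine from the framework to route all of $G[V_i]$ to a single leader $v_i$; the leader locally tests whether $G[V_i]\in\mathcal{P}$ and broadcasts the verdict across $V_i$. The global decision is \textsf{Reject} at every vertex of any cluster whose leader reports $G[V_i]\notin\mathcal{P}$, and \textsf{Accept} everywhere otherwise.

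The correctness argument is short and uses both closure properties of $\mathcal{P}$. For completeness, if $G\in\mathcal{P}$ then each $G[V_i]$ is a subgraph and hence a minor of $G$, so $G[V_i]\in\mathcal{P}$ by minor-closure and every leader accepts. For soundness, suppose $G$ is $\eps$-far from $\mathcal{P}$ yet every leader accepts; then $G[V_i]\in\mathcal{P}$ for all $i$, and closure under disjoint union yields $\bigsqcup_i G[V_i]\in\mathcal{P}$. This graph differs from $G$ only in the inter-cluster edges, of which there are at most $\eps|E(G)|/2$, placing $G$ within edit-distance $\eps/2$ of $\mathcal{P}$ and contradicting the hypothesis. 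Hence at least one leader must reject.

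For the running time, both the expander decomposition and the topology-routing step are charged at the rates already established in the paper for the $H$-minor-free setting, where $H$ is drawn from the finite forbidden-minor obstruction set of $\mathcal{P}$: $\eps^{-O(1)}\log^{O(1)} n$ rounds w.h.p.\ using the randomized expander decomposition, and $\eps^{-O(1)} 2^{O(\sqrt{\log n\log\log n})}$ rounds using the deterministic variant.

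The main obstacle is on the soundness side: when $G$ is far from $\mathcal{P}$, the input need not be $H$-minor-free, so the topology-routing lemma does not automatically guarantee efficient termination on every cluster. My plan to handle this is to run the routing step with a hard timeout equal to the stated round budget; any leader whose routing fails to complete within the budget simply outputs \textsf{Reject}. Completeness is preserved because, on any input in $\mathcal{P}$, every cluster is $H$-minor-free and the framework's routing finishes inside the budget; soundness is preserved because whenever all routings do finish on time and all clusters are certified as being in $\mathcal{P}$, the disjoint-union argument above forces $G$ to be $\eps$-close to $\mathcal{P}$.
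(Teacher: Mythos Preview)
Your proposal is essentially the paper's approach and is correct in outline: decompose, gather each cluster to a leader, test locally, and use minor-closure for completeness and closure under disjoint union for soundness. The choice of $H$ (you pick a forbidden minor of $\mathcal{P}$; the paper picks the smallest clique $K_s\notin\mathcal{P}$) is immaterial, and the soundness computation is the same.

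The one place where your treatment is coarser than the paper's is the handling of routing failures. You propose a blanket timeout with \textsf{Reject} on overflow. This is fine for the theorem as literally stated, but it sacrifices one-sided error in the randomized setting: even when $G\in\mathcal{P}$, the random-walk routing of \cref{lem:routing-rand} fails with probability $1/\poly(n)$, and your tester would then falsely reject. The paper instead discriminates between failure modes (see the discussion at the end of \cref{sec:framework_summary}): a cluster rejects only when the high-degree condition $\deg_{G_i}(v_i^\ast)=\Omega(\phi^2)|E_i|$ is violated, which by \cref{lem:separator} can happen only if $G$ is not $H$-minor-free; all other routing failures (which occur with probability at most $1/\poly(n)$, independently of $H$-minor-freeness) trigger \textsf{Accept}. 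This yields perfect completeness and soundness with high probability, matching the one-sided guarantee of Levi--Medina--Ron. If you want to recover that, replace the timeout-reject rule with this failure-type check; otherwise your argument stands as a correct two-sided tester.
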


\paragraph{Graph Decompositions}
An $(\eps, D)$ \emph{low-diameter decomposition} of a graph $G=(V,E)$ is a partition of the vertex set $V = V_1 \cup V_2 \cup \cdots \cup V_k$ such that the number of inter-cluster edges is at most $\eps |E|$ and the diameter of the induced subgraph $G[V_i]$ is at most $D$ for each $1 \leq i \leq k$.

It is well-known~\cite{klein1993excluded,Fakcharoenphol2003improved,Ittai2019padded} that for any $H$-minor-free graph, a low-diameter decomposition  with $D = O(\eps^{-1})$ exists, where the hidden constant in $O(\cdot)$ depends only on $H$. It is straightforward to see that the inverse linear dependence $D = O(\eps^{-1})$  on $\eps$ is the best possible by considering cycle graphs.

In the distributed setting, Czygrinow, Ha{\'n}{\'c}kowiak, and Wawrzyniak~\cite{czygrinow2008fast} designed a distributed algorithm that computes a low-diameter decomposition with $D = \eps^{-O(1)}$ in $\eps^{-O(1)} \cdot O(\log^\ast n)$ rounds for planar networks in the \local model. Their algorithm also applies to the edge-weighted setting where the guarantee of the algorithm is that the summation of the weights of inter-cluster edges is at most $\eps$-fraction of the sum of the weights of all edges. 
Levi, Medina, and Ron~\cite{levi2021property} also designed a distributed algorithm that computes a low-diameter decomposition with $D = \eps^{-O(1)}$ in $\eps^{-O(1)} \cdot O(\log n)$ rounds for $H$-minor-free networks in the \congest model, which is used in their distributed algorithm for property testing of planarity.

Using our framework, 
we improve the inverse polynomial dependence $D = \eps^{-O(1)}$ on $\eps$ to the optimal $D = O(\eps^{-1})$. 
We present a simple proof that a low-diameter decomposition with $D = O(\eps^{-1})$ can be computed in $\eps^{-O(1)} \log^{O(1)} n$ rounds with high probability and $\eps^{-O(1)} 2^{O(\sqrt{\log n \log \log n})}$  rounds deterministically in the \congest model.

\begin{restatable}{theorem}{thmdecomp}
\label{thm:decomp}
Given an $H$-minor free network, a low-diameter decomposition with $D = O(\eps^{-1})$ can be computed in $\eps^{-O(1)} \log^{O(1)} n$ rounds with high probability and $\eps^{-O(1)} 2^{O(\sqrt{\log n \log \log n})}$  rounds deterministically in the \congest model.
\end{restatable}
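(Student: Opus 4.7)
The plan is to combine the framework's two main ingredients — an expander decomposition with small edge boundary, and the routing of each cluster's topology to a designated vertex for local computation — with the classical existence result that every $H$-minor-free graph admits a low-diameter decomposition with $D = O(\eps^{-1})$. This yields a two-level decomposition in which the first level cuts a small fraction of edges to isolate expander-like pieces, and within each piece the optimal decomposition is produced by a centralized subroutine executed at a single vertex.

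First I would invoke the distributed expander decomposition from the framework with conductance parameter $\phi = 1/\polylog(n)$ and boundary budget $\eps/2$, so that the set of inter-cluster edges has size at most $(\eps/2)|E|$. Each resulting cluster $C$ induces an $H$-minor-free subgraph $G[C]$, and by the framework (small edge separators in $H$-minor-free graphs imply the existence of a high-degree vertex inside every expander cluster), the entire topology of $G[C]$ can be routed to one vertex $v_C \in C$ within $\eps^{-O(1)} \log^{O(1)} n$ rounds with high probability, or within the stated deterministic complexity when the deterministic expander decomposition is plugged in.

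Next, at each $v_C$ I would locally execute the classical Klein--Plotkin--Rao / Fakcharoenphol--Rao low-diameter decomposition on $G[C]$ with parameter $\eps/2$. Since $G[C]$ is still $H$-minor-free, this yields a partition $C = V_{C,1} \cup \dots \cup V_{C,k_C}$ cutting at most $(\eps/2)|E(G[C])|$ edges and satisfying $\diam(G[V_{C,j}]) = O(\eps^{-1})$ for every $j$, with the hidden constant depending only on $H$. The vertex $v_C$ then broadcasts the resulting labels back over $C$ using the same routing structure, so that each vertex learns its sub-cluster. Summing the two sources of cut edges gives a total of at most $\eps|E|$ inter-cluster edges in the final decomposition; since the $V_{C,j}$ are vertex-disjoint across different $C$ and distances only grow under taking induced subgraphs, the bound $D = O(\eps^{-1})$ holds in $G[V_{C,j}]$ as required.

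The main obstacle is calibrating the two levels so that both the boundary budgets and the round complexities compose correctly: the framework's routing primitive must be invoked with parameters that make both topology collection and final broadcast fit inside the target $\eps^{-O(1)} \log^{O(1)} n$ bound, and the expander decomposition must be set up so that every cluster has enough internal conductance for the high-degree vertex (given by the separator theorem) to actually gather $|E(G[C])| = O(|C|)$ edges in that time — this is precisely the regime the framework is built for. Once those accounting details are in place, the inverse-linear dependence $D = O(\eps^{-1})$ is inherited directly from the centralized existence theorem, which is what yields the optimal bound that earlier distributed constructions of \cite{czygrinow2008fast,levi2021property} missed by a polynomial factor in $\eps$.
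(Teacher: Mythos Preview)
Your proposal is correct and matches the paper's own proof essentially step for step: apply \cref{thm:routing-main} with parameter $\eps/2$, have each leader $v_i^\ast$ locally run a sequential low-diameter decomposition (Klein--Plotkin--Rao / Fakcharoenphol--Rao) on $G[V_i]$ with parameter $\eps/2$, broadcast the labels back, and sum the two $\eps|E|/2$ contributions to the inter-cluster edges. The only superfluous remark is the one about distances growing under induced subgraphs --- since $V_{C,j}\subseteq C$, the graph $G[V_{C,j}]$ coincides with $(G[C])[V_{C,j}]$, so the diameter bound transfers trivially.
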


\subsection{Our Framework}

Our framework of algorithm design is based on the recently developed distributed constructions of expander decompositions in the \congest model~\cite{ChangS20,Chang2021triangleJACM}. 
 We say that a graph is an \emph{$\phi$-expander} if its {conductance} is at least $\phi$. 
An $(\eps, \phi)$ \emph{expander decomposition} of a graph is a removal of at most $\eps$ fraction of the edges such that each remaining connected component is an $\phi$-expander.  Intuitively, the conductance of a graph measures how well-connected it is. In particular, any random walk converges quickly to its stationary distribution in a high-conductance graph. Expander decompositions have a wide range of applications in theoretical computer science, including linear system solvers~\cite{spielman2004nearly},
 unique games~\cite{AroraBS2015,RaghavendraS2010},
 minimum cut~\cite{KawarabayashiT15}, property testing~\cite{KumarSS2018,GoldreichR1999}, and dynamic algorithms~\cite{NanongkaiSW17,chuzhoy2020deterministic}.



We say that $H$ is a \emph{minor} of $G$ if $H$ can be obtained from $G$ by iteratively  removing vertices and edges and contracting edges. We write $H \preceq G$ if $H$ is a minor of $G$. We say that $G$ is \emph{$H$-minor-free} if $H \npreceq G$.
A class of graphs $\GG$ is \emph{minor-closed} if $G \in \GG$ implies $H \in \GG$ for any $H \preceq G$. 
Many natural graph classes, such as planar graphs, bounded-genus graphs, and bounded-treewidth graphs, are minor-closed.
The graph minor theorem of Robertson and Seymour~\cite{ROBERTSON2004325} implies that for any minor-closed family of graphs $\GG$, there exists a \emph{finite} set of \emph{forbidden minors} $\HH$ such that $G \notin \GG$ if and only if $H \preceq G$ for some $H \in \HH$. For example, if $\GG$ is the set of all planar graphs, then $\HH = \{K_5, K_{3,3}\}$. That is, $G$ is planar if and only if $G$ is $K_{3,3}$-minor-free and $K_5$-minor-free. Note that the graph minor theorem also implies that a minor-closed family of graphs must be a subset of the family of $H$-minor free graphs for some fixed graph $H$.

In this paper, we focus on the class of $H$-minor-free networks for any fixed $H$. The idea of our framework is that we want to use expander decompositions in the \congest model in a way similar to the use of low-diameter decompositions~\cite{LinialS93,RozhonG20} in the \local model. That is, for each low-diameter cluster $V_i$, we want to gather the graph topology $G[V_i]$ to a vertex $v_i^\ast \in V_i$ so that $v_i^\ast$ can run any sequential algorithm on $G[V_i]$ locally and broadcast the result to all other vertices in $V_i$. This approach clearly requires sending messages of unlimited size in the general case.

An \emph{edge separator} of a graph is a cut $\{S, V\setminus S\}$ such that $\min\{|S|, |V\setminus S|\} \geq |V|/3$. The \emph{size} of an edge separator $\{S, V\setminus S\}$ is the number of edges crossing $S$ and $V \setminus S$. If $G[V_i]$ is an $\phi$-expander and  admits a small edge separator, then there must exist a high-degree vertex $v_i^\ast \in V_i$, so the connectivity property of a $\phi$-expander allows us to design an efficient routing algorithm to let $v_i^\ast$  gather the entire graph topology of  $G[V_i]$. These properties are shown in \cref{sec:graph_p}.

It is known~\cite{diks1993separator,Miller86} that planar graphs admit an edge separator of size $O(\sqrt{\Delta |V|})$.  More generally, any graph that can be embedded on a surface of genus $g$  has an edge separator of size $O(\sqrt{g\Delta |V|})$~\cite{SYKORA1993419}.
In this paper, we generalize these results to show that all  $H$-minor-free graphs admit an edge separator of size $O(\sqrt{\Delta |V|})$, so the approach discussed above is applicable to all $H$-minor-free graphs.

\begin{restatable}{theorem}{thmedgeseparator}\label{thm:edge-separator}
For any $H$-minor-free graph $G=(V,E)$, there is a cut $S$ such that $\min\{|S|, |V\setminus S|\} \geq n/3$ and $|\partial(S)|=O(\sqrt{\Delta n})$, where the hidden constant in $O(\cdot)$ depends only on $H$.
\end{restatable}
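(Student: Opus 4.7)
The plan is to combine the Alon--Seymour--Thomas balanced vertex-separator theorem for $K_h$-minor-free graphs---which produces a separator $X$ with $|X|=O(\sqrt{n})$ and $|A|,|B|\le 2n/3$, the hidden constant depending on $h$---with a vertex-splitting reduction to the bounded-degree case. Since $G$ is $H$-minor-free and $H\subseteq K_h$ for $h:=|V(H)|$, $G$ is $K_h$-minor-free, hence every subgraph of $G$ has $O(n)$ edges. Applying the separator theorem to $G$ directly and then converting by the standard rule ``assign each $v\in X$ to the side minimizing its cut edges'' gives only $O(\Delta\sqrt{n})$, short of the target by a factor of $\sqrt{\Delta}$.

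To close this gap, I construct an auxiliary graph $G^\ast$ with $\Delta(G^\ast)=O(\sqrt{\Delta})$ by replacing each $v\in V(G)$ with a path $P_v$ of $\lceil \deg_G(v)/\sqrt{\Delta}\rceil$ copies, distributing the edges at $v$ among the copies so that each copy has at most $\sqrt{\Delta}$ original-neighbors. Because $\sum_v \deg_G(v)=O(n)$, the graph $G^\ast$ has $|V(G^\ast)|=O(n)$. Granting that $G^\ast$ is still $K_{h^\ast}$-minor-free for some $h^\ast=h^\ast(H)$, applying Alon--Seymour--Thomas to $G^\ast$ yields a vertex separator $X^\ast$ with $|X^\ast|=O(\sqrt{n})$ and balanced parts $A^\ast,B^\ast$. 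The minimizing-side conversion in $G^\ast$ now gives an edge cut of size $O(|X^\ast|\cdot \Delta(G^\ast))=O(\sqrt{\Delta n})$. I then lift back to $G$ by assigning each $v\in V(G)$ to the side containing the majority of its copies in $P_v$; path-edges internal to each $P_v$ are not present in $G$, so the cut in $G$ is no larger than the cut in $G^\ast$, and vertex balance in $G$ follows from balance in $G^\ast$ together with $|P_v|=\Theta(1+\deg_G(v)/\sqrt{\Delta})$.

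The principal obstacle is the minor-freeness transfer in the construction of $G^\ast$. Given a hypothetical $K_{h^\ast}$-model $\{M_i^\ast\}$ in $G^\ast$, one would like to contract each $P_v$ back to a single vertex to obtain a $K_{h^\ast}$-model in $G$; this works when every $P_v$ meets only one branch set, but if copies of some $v$ lie in distinct branch sets $M_i^\ast$ and $M_j^\ast$, contracting $P_v$ merges them and lowers the order of the complete minor recovered in $G$. The plan is a counting/rerouting argument: either bound the number of such ``branch-set collisions'' directly (so that enough branch sets survive to exhibit a $K_h$-minor in $G$, contradicting the hypothesis), or reroute each branch set within $G^\ast$ around each colliding $P_v$ using the connectivity of the $M_i^\ast$'s. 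Making this quantitative---showing that $h^\ast=O(h)$ (or $O(h^2)$) suffices---is the key technical step, after which the chain above delivers the claimed $O(\sqrt{\Delta n})$ edge-separator bound.
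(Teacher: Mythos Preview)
Your strategy differs substantially from the paper's, which works directly with the Robertson--Seymour structure theorem: it locates a piece $H_{i^\ast}$ of the clique-sum decomposition whose removal balances the graph, and then combines a BFS layering on the apex-free part with a treewidth bound (via the diameter--treewidth relation for almost-embeddable graphs) and a carefully designed weight function that accounts for vertices outside $V_{i^\ast}$. No vertex-splitting reduction is used.

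The obstacle you yourself flag is in fact fatal: the minor-freeness transfer to $G^\ast$ fails, and no $h^\ast=h^\ast(H)$ exists in general. Take $G=K_{2,k}$ with parts $\{a,b\}$ and $\{v_1,\dots,v_k\}$; this graph is $K_4$-minor-free. With $\Delta=k$ your construction splits $a$ and $b$ into paths $a^{(1)}\cdots a^{(\sqrt{k})}$ and $b^{(1)}\cdots b^{(\sqrt{k})}$, and you are free to distribute edges as follows: $a^{(q)}$ receives $\{v_i:i\equiv q\pmod{\sqrt{k}}\}$ while $b^{(p)}$ receives $\{v_i:\lceil i/\sqrt{k}\rceil=p\}$. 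Now contract each $v_i$ into its $a$-neighbour. Every $a^{(q)}$ becomes adjacent to every $b^{(p)}$, so $G^\ast$ has a $K_{\sqrt{k},\sqrt{k}}$ minor, hence (by Kostochka--Thomason) a $K_t$ minor with $t=\Theta(\sqrt{k}/\sqrt{\log k})$. Thus $G^\ast$ has clique minors of unbounded size while $G$ stays $K_4$-minor-free; neither a counting nor a rerouting argument can rescue the step, because the conclusion you would need is simply false for an arbitrary edge distribution. A distribution that \emph{does} preserve minor-freeness (for planar graphs, the cyclic order around each vertex) exists, but producing one for general $H$-minor-free graphs essentially requires the structure theorem you were trying to avoid.

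There is a second gap in the lifting. Your claim that ``the cut in $G$ is no larger than the cut in $G^\ast$'' is incorrect under majority assignment: if $v\in S$ but a minority copy $v^{(i)}\in T^\ast$ has an original edge to $u^{(j)}\in T^\ast$ with $u\in T$, then $\{v,u\}$ is cut in $G$ while $\{v^{(i)},u^{(j)}\}$ is not cut in $G^\ast$. A single cut path-edge in $P_v$ can place up to $|P_v|/2$ copies on the minority side, each incident to $\sqrt{\Delta}$ original edges, so the excess can be $\Theta(\Delta)$ per split vertex rather than $O(\sqrt{\Delta})$. Absorbing path edges does not compensate for this.
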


\cref{thm:edge-separator} is of independent interest. For example, it was asked~\cite{lason2021modularity} in whether such an edge separator exists for any $H$-minor-free graph.
The proof of \cref{thm:edge-separator} is in \Cref{sect:edge-separator}. The proof is based on  the graph structure theorem of Robertson and Seymour~\cite{ROBERTSON2003nonplanar}, which states that any $H$-minor-free graph can be obtained by gluing graphs that are almost embeddable on some fixed surfaces in a tree structure.

\paragraph{Remark} There is a series~\cite{chuzhoy2019large,krivelevich2021complete,kleinberg1996short,kawarabayashi2010separator} of research investigating the relation between maximum degree and minor containment for expanders. Specifically, define $f(n,\alpha,d)$ to be the maximum number such that every graph $H$ with at most  $f(n,\alpha,d)$ vertices and edges is a minor of every $n$-vertex graph $G=(V,E)$ with maximum degree at most $d$ such that  for every partition $(S, V \setminus S)$ of its vertices into two non-empty parts, the number of edges connecting $S$ to $V \setminus S$ is at least $\alpha \cdot \min \{|A|, |B|\}$.
It was shown in~\cite{chuzhoy2019large} that there exists a universal constant $c$ such that $f(n,\alpha,d) = \Omega\left( \frac{n}{\log n} \cdot \left(\frac{\alpha}{d}\right)^c \right)$, and this bound holds for $c = 2$~\cite{krivelevich2021complete}, see~\cite{chuzhoy2019large} for more details.  
This bound implies that for any fixed $H$, the maximum degree of an $n$-vertex $\phi$-expander is  $\Delta = \Omega\left( \phi \cdot \left(\frac{n}{\log n}\right)^{\frac{1}{c}} \right)$. This lower bound is not suitable for our purpose. To aim for a round complexity of $\poly(1/\phi, \log n)$ for graph topology gathering, it is necessary to obtain a lower bound of the form  $\Delta = \Omega\left( \frac{n}{\poly(1/\phi, \log n)}\right)$, which we show in \cref{lem:separator}.

\subsection{Applying Our Framework 
}\label{sec:matching_intro}

We show that by using our framework, many unweighted optimization problems can be approximated within $(1 \pm \epsilon)$ factors in a straightforward manner. Then, we use the \MWM{} problem as an example to demonstrate that our framework can be applied to solve weighted problems as well. 

\paragraph{Unweighted Problems}
As a warm up, to illustrate how our framework can be used, we first describe how to use our framework to obtain simple $\poly(1/\epsilon, \log n)$-round $(1-\epsilon)$-approximate algorithms for \MCM{} in planar graphs, as well as other unweighted problems such as the \MaxIS{} problem and the correlation clustering problem in $H$-minor-free graphs. 

The idea behind these algorithms is very simple. If the size of an optimal solution is linear in the number of vertices, then we can simply let each cluster of an expander decomposition to compute its local optimal solution by letting a high-degree vertex in the cluster learn the graph topology of the cluster. We just need to show that ignoring the $\epsilon |E|$ inter-cluster edges only worsens the quality solution by a factor of at most $(1 - \eps)$. It is conceivable that this  holds for many unweighted problems. Indeed, this is true for the \MaxIS{} problem in $O(1)$-arboricity graphs and the agreement maximization correlation clustering problem in general graphs, so our framework immediately gives efficient $(1-\epsilon)$-approximate algorithms for these problems in $H$-minor-free graphs,  see \cref{sec:independent_set,sec:correlation_clustering} for details.

For the case of the \MCM{} problem, the size of an optimal solution is not linear in the number of vertices in general, but it is possible to preprocess the graph so that the size of an optimal solution is linear in the number of the vertices by using the preprocessing procedure of \cite{czygrinow2006} for planar graphs. By doing so, we obtain a $(1-\eps)$-approximate algorithm for \MCM{} on planar graphs. We describe such an approach in \Cref{sec:unweighted_matching}.

\paragraph{
Weighted Matching} Extending the framework to weighted problems is significantly more challenging, because when applying the expander decomposition in \cref{thm:routing-main} we do not have control over which edges we will remove. For unweighted problems, the $\epsilon |E|$ edges that we remove usually can only cause a small degrade on the optimal solution. However, in the weighted problem, the small fraction of edges could have very high weights. As a result, the optimal solution could become much worse after removing those edges.

To overcome this obstacle, instead of applying the decomposition only once in the beginning, we embed our method into Duan and Pettie's sequential scaling algorithm \cite{DuanP-approxMWM} for approximating \MWM{}. Roughly speaking, their scaling algorithm is a primal-dual algorithm that consists of multiple iterations. It processes the subgraphs from the ones induced by higher weight edges to the ones induced by lower weight edges over the iterations. Each iteration consists of non-trivial steps such as the augmentation step as well as the blossom shrinking step that are not easily implementable in the \congest model, but implementable in linear time in the centralized setting. For example, the augmentation step involves finding a maximal set of augmenting paths in the working subgraph. Since the length of an augmenting path can as large as $\Theta(n)$, it would not be possible to find it in $\poly(\log n , 1/\epsilon)$ rounds in the \congest model.


We apply our expander decomposition framework to the working subgraph before some of the non-trivial steps. Instead of physically removing the inter-component edges from the graph, we add or subtract a small weight to the edges so they are no longer the ``tight''  edges (i.e.~the edges in the working subgraph) in the primal-dual algorithm. We show that adding or subtracting the small weights would only degrade the optimal solution slightly. Moreover, we show this allows us to process each component independently (e.g.~the long augmenting paths mentioned in the previous paragraph would be broken). Each component can then route the topology to a vertex and let the vertex perform the non-trivial steps locally and broadcast the result back.  

This summarizes the high-level idea. However, there are several technical challenges such as that the expander decomposition may cut through some intermediate structures (i.e.~the active blossoms). In addition, similar to the aforementioned unweighted case, we also need to preprocess the working subgraph before running the expander decomposition to ensure the number of inter-component edges is small relative to the solution. The planar graph preprocessing procedure of \cite{czygrinow2006} does not work for $H$-minor free graphs in general. We discuss how we resolve these issues in \Cref{sec:alg}. 

\subsection{Related Work}

  Chang, Pettie, Saranurak, and Zhang~\cite{Chang2021triangleJACM} gave the first application of expander decompositions to the \congest model of distributed computing. They designed a distributed algorithm for constructing an expander decomposition and applied it to give a near-optimal distributed algorithm for the \emph{triangle listing} problem, based on the following framework.
First construct an $(\eps, \phi)$-expander decomposition to partition the vertex set $V= V_1 \cup V_2 \cup \cdots \cup V_k$ into $\phi$-expanders.
Using existing routing algorithms~\cite{GhaffariKS17,GhaffariL2018} for $\phi$-expanders, existing distributed triangle listing algorithms that make use of non-local communication can be simulated in $\phi$-expanders with small overhead. Based on this approach, all triangles containing at least one edge in $G[V_i]$ can be listed efficiently, for all high-conductance clusters $G[V_i]$ in parallel. Finally, the remaining $\eps$-fraction of the inter-cluster edges are handled using recursive calls. 

Subsequent to the work of~\cite{Chang2021triangleJACM}, expander decomposition has been applied to numerous other problems in the \congest model via this framework of algorithm design~\cite{CensorLL20,censor2021tight,EFFKO19,izumiLM20,LeGall2021}.
So far, all applications of distributed expander decomposition  have been confined to the distributed subgraph finding problems~\cite{censorhillel:LIPIcs.ICALP.2021.3}, except the work of Daga, Henzinger, Nanongkai, and Saranurak~\cite{Daga2019distributed}, where they designed a sublinear-round exact min-cut algorithm in the \congest model by incorporating distributed expander decomposition into the sequential min-cut algorithm of Kawarabayashi and Thorup~\cite{KawarabayashiT15}.

\paragraph{Distributed Algorithms on Minor-closed Networks}
 Many real-world networks have sparse structures.  Over the past few years, much of the research effort has been devoted to designing efficient distributed algorithms in \local and \congest utilizing structural properties of sparse networks, and many natural graph classes studied in the literature, such as planar graphs, bounded-genus graphs, and bounded-treewidth graphs, are minor-closed, so they can be characterized by a finite list of excluded minors. 

\paragraph{Distributed Approximation}
 There is a long line of research studying distributed approximation on graphs with an excluded minor~\cite{akhoondian2016local,amiri2019distributed,bonamy2021tight,Czygrinow2007cocoon,CZYGRINOW2006JDA,czygrinow2006,Czygrinow2006ESA,czygrinow2008fast,czygrinow2014distributed,czygrinow2020distributed,lenzen2013distributed,wawrzyniak2014strengthened}.   Czygrinow, Ha{\'n}{\'c}kowiak, and Wawrzyniak~\cite{czygrinow2008fast} showed that an $(1 \pm \eps)$-approximation of maximum matching, maximum independent set, and minimum dominating set of a planar graph can be constructed in $O(\log^\ast n)$ rounds deterministically in the \local model, for any constant $\eps > 0$.
 The algorithm for minimum dominating set was later extended to $k$-dominating set on  bounded-genus graphs~\cite{amiri2019distributed,czygrinow2008fast,czygrinow2020distributed}.
 These algorithms are based on a generic approach~\cite{amiri2019distributed,czygrinow2020distributed,CZYGRINOW2006JDA, Czygrinow2006ESA,Czygrinow2007cocoon,czygrinow2008fast} using low-diameter decompositions. A common ingredient shared by all these algorithms is a computation of an $(1 \pm \eps)$-approximate solution of each low-diameter cluster via a brute-force information gathering, requiring sending unbounded-size messages and confining all these algorithms to the \local model. Our framework which is based on expander decompositions provides an opportunity to extend this line of research 
  to the \congest model.



\paragraph{Low-congestion Shortcuts and its Applications} There is a line of work designing efficient algorithms on networks with an excluded minor via low-congestion shortcuts~\cite{ghaffari2021low,ghaffari2016distributed,ghaffari2017near,haeupler2016low,haeupler2016near,haeupler2018minor,haeuplerLi2018disc,haeupler2018round}. Given a partition of the vertex set $V$ of a graph $G=(V,E)$ into connected clusters $V = V_1 \cup V_2 \cup \cdots \cup V_k$, a low-congestion shortcut with congestion $c$ and dilation $d$ is a set of subgraphs $H_1, H_2, \ldots, H_k$ such that the diameter of $G[V_i] + H_i$ is at most $d$ and each edge belongs to at most $c$ subgraphs $H_i$. Here  $G[V_i] + H_i$ denotes the subgraph of $G$ induced by the union of the edges in $G[V_i]$ and $H_i$.
For any clustering of an $H$-minor-free graph, there is an $\tilde{O}(D)$-round \congest algorithm computing a low-congestion shortcut with $c = O(D \log n)$ and $d = O(D)$, where $D$ is the diameter of the graph~\cite{ghaffari2021low}. As a result, many graph problems, including minimum spanning tree, minimum cut,
and shortest-path approximations, can be solved in near-optimal $\tilde{O}(D)$ rounds in \congest on any $H$-minor-free graph~\cite{ghaffari2016distributed,haeupler2018round}.

The type of problems efficiently solvable via low-congestion shortcuts is fundamentally very different from the type of problems efficiently solvable via our framework. Low-congestion shortcut is useful in designing near-optimal $\tilde{O}(D)$-round algorithms for \emph{global} problems that already require $\Omega(D)$ rounds to solve. Our framework is useful in designing algorithms that take $\log^{O(1)} n$ or $n^{o(1)}$ rounds for \emph{local} problems that do not have the  $\Omega(D)$ lower bound.

\paragraph{Other Topics} For planar networks in the \congest model, efficient algorithms  for diameter computation~\cite{li2019planar}, reachability~\cite{parter2020distributed}, and depth-first search~\cite{ghaffari2017near} have been designed. An efficient algorithm for distributed planarity testing was given by Ghaffari and Haeupler~\cite{ghaffari2016planar}, which was later used in the property testing algorithm of Levi, Medina, and Ron~\cite{levi2021property}.
  Following a work of Naor, Parter, and Yogev~\cite{Naor2020soda} which demonstrated the existence of a
distributed interactive proof for planarity, there is a series of research studying {local certification} for various minor-closed graph classes~\cite{feuilloley2021compact,DBLP:journals/corr/abs-2108-00059,DBLP:journals/corr/abs-2007-08084,esperet2021local}.

\subsection{Subsequent Developments}

Since the initial publication of this work~\cite{10.1145/3519270.3538423}, several of our results have been improved. Building on the techniques developed here, a subsequent work~\cite{huang20231} showed that a $(1 - \eps)$-approximate \MWM{} in general networks can be computed deterministically in $\poly(\log n, 1/\eps)$ rounds in the \congest model.

Another follow-up work~\cite{chang2023efficient} developed improved decomposition and routing algorithms for $H$-minor-free networks. Combined with our framework, these results yield faster algorithms for many problems in combinatorial optimization and property testing. For example, a $(1 - \eps)$-approximate maximum independent set in an $H$-minor-free network can now be computed deterministically in $O(\eps^{-1} \log^\ast n) + \eps^{-O(1)}$ rounds~\cite{chang2023efficient}. Additionally, property testing for any minor-closed graph property that is closed under disjoint union can be performed deterministically in $O(\log n)$ rounds when $\eps$ is constant, or in $O(\eps^{-1} \log n) + \eps^{-O(1)}$ rounds when the maximum degree $\Delta$ is constant.

\subsection{Organization}\label{sec:organization}

Our framework of algorithm design based on expander decompositions is presented in 
\cref{sec:graph_p}. Using this framework, in \cref{sec:applications}, we give $\poly(1/\eps, \log n)$-round randomized algorithms and $n^{o(1)} \cdot \poly(1/\eps)$-round deterministic algorithms for various optimization, property testing, and graph decomposition problems on planar or $H$-minor-free networks, proving \cref{thm:independentset,thm:clustering,thm:decomp,thm:testing}. In \cref{sec:matching}, we embed our framework into the sequential scaling algorithm of~\cite{DuanP-approxMWM} for approximating \MWM{} to prove \cref{thm:matching}. In \cref{sect:edge-separator}, we show that a small edge separator exists for any $H$-minor-free graph, proving \cref{thm:edge-separator}, which is needed in our framework.

\section{Graph Partitioning}\label{sec:graph_p}
Let $G = (V,E)$ be a graph. Consider the following graph terminology regarding a subset $S \subseteq V$.
\begin{align*}
   \vol(S) & =\sum_{v \in S} \deg(v),\\
   \partial(S) &= E(S, V \setminus {S}) = \{ e = \{u,v\} \in E \ | \ \{u,v\} \cap S \neq \emptyset \ \text{and} \ \{u,v\} \cap (V\setminus S) \neq \emptyset \},\\
   \Phi(S) &=
\begin{cases}
   0, & S = \emptyset \ \text{or} \ S = V,\\
   \frac{|\partial(S)|}{\min\{\vol(S), \vol(V \setminus S)\}}, & S \neq \emptyset \ \text{and} \ S \neq V.\\
\end{cases}   
\end{align*}

We call $\vol(S)$ the \emph{volume} of the vertex set $S$. When $S$ is interpreted as a cut $\{S, V \setminus S\}$, we call $\Phi(S)$ the  \emph{conductance} of the cut $S$.

\paragraph{Graph Conductance}
The \emph{conductance} of a graph $G$ is defined as \[\Phi(G) = \min_{S \subseteq V  \ \text{s.t.} \  S \neq \emptyset \ \text{and} \  S \neq V} \Phi(S).\] 
In other words, $\Phi(G)$ is the minimum value of $\Phi(S)$ over all non-trivial cuts $S \subseteq V$.

\paragraph{Mixing Time} A \emph{uniform lazy random walk} starting at a vertex $v \in V$ is described by the following probability distribution, where $N(u)$ denotes the set of neighbors of $u$.
 \begin{align*}
   p_0^v (u) &=
\begin{cases}
   1, & u = v,\\
   0, & u \neq v,\\
\end{cases}\\
   p_i^v (u) &= \frac{1}{2} \cdot p_{i-1}^v (u) + \frac{1}{2 \deg(u)}  \cdot \sum_{w \in N(u)} p_{i-1}^v (w), & \text{for} \; \; \; i \geq 1.
\end{align*}
 
If $G$ is connected, then the stationary distribution of a uniform lazy random walk is $\pi(u) = \deg(u) / \vol(V)$, regardless of the starting vertex $v$.
 The 
\emph{mixing time} $\mix(G)$ of $G$ is defined as the minimum number $t$ such that  $|p_t^v(u) - \pi(u)| \leq \pi(u) / |V|$ for all $u \in V$ and $v \in V$.
 The following relation~\cite{JerrumS89} between the 
mixing time $\mix(G)$ and conductance $\Phi(G)$ is well-known: 
\[
\Theta\left(\frac{1}{\Phi(G)}\right) \leq \mix(G) \leq \Theta\left(\frac{\log |V|}{\Phi(G)^2}\right).
\]

\paragraph{Expander Decompositions}
 We say that $G$ is an \emph{$\phi$-expander} if $\Phi(G) \geq \phi$. 
An $(\eps, \phi)$ \emph{expander decomposition} of a graph is a removal of at most $\eps$ fraction of the edges such that each remaining connected component has conductance at least $\phi$. 
Formally, an $(\eps,\phi)$-expander decomposition of $G$ is a partition $E=E_{1}\cup E_2 \cup \cdots\cup E_{k} \cup \Er$ of the edge set $E$ meeting the following requirements.
\begin{itemize}
    \item The set of inter-cluster edges $\Er$ satisfies $|\Er| \leq \eps |E|$.
    \item We write $V_i \subseteq V$ to denote the set of vertices incident to an edge in $E_i$. It is required that  $V = V_1 \cup V_2 \cup \cdots \cup V_k$ partitions the vertex set $V$ and $G_i=(V_i, E_i)$ has conductance $\Phi(G_i) \geq \phi$ for each $1 \leq i \leq k$.
\end{itemize}

Existentially, it is
 well known  that for any $n$-vertex graph,  an $(\epsilon,\phi)$-expander decomposition  exists for any $0 < \epsilon < 1$ and $\phi = \Omega(\epsilon/\log n)$~\cite{GoldreichR1999,KannanVV04,spielman2004nearly}, and this bound is tight. After removing any constant fraction of the edges in a hypercube, some remaining
 component must have conductance at most $O(1/\log n)$~\cite{AlevALG18}.

The following distributed algorithms for constructing expander decompositions are  due to Chang and Saranurak~\cite{ChangS20}.

\begin{theorem}\label{thm:expander-decomposition-rand}
For any $0 < \eps < 1$, an $(\eps,\phi)$-expander decomposition of a graph $G=(V,E)$ with  $\phi =  \eps^{O(1)} \log^{-O(1)} n$ can be constructed in $\eps^{-O(1)} \log^{O(1)} n$ rounds with high probability.
\end{theorem}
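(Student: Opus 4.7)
The plan is to adapt the standard recursive framework for expander decomposition to the \congest setting. The top-level algorithm maintains a partition of $V$ into active components. For each active component $U$, I would invoke a subroutine that either certifies $\Phi(G[U]) \geq \phi$, in which case $U$ is finalized as one of the clusters $V_i$, or returns a cut of conductance at most $O(\phi \cdot \polylog n)$ that removes a nontrivial fraction of $\vol(U)$. We then mark the crossing edges as belonging to $\Er$ and recurse on both sides. A standard charging argument in the spirit of Saranurak--Wang shows that if $\phi = \eps^{O(1)} \log^{-O(1)} n$ with sufficiently many $\log n$ factors, the total number of removed edges across all levels is at most $\eps |E|$, and the recursion depth is $O(\log n)$.

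The workhorse is a distributed sparse-cut subroutine. I would implement it using a random-walk approach: from carefully chosen seeds, simulate truncated lazy random walks for $\tilde{O}(1/\phi^2)$ steps (which is a valid upper bound on $\mix$ whenever the component is a $\phi$-expander), then threshold the resulting distributions to extract a sparse cut, following the Spielman--Teng / Andersen--Chung--Lang nibble analysis. In \congest, many short independent walks can be simulated in parallel via the standard token-lumping technique (each vertex only needs to know aggregate counts of walks currently at it), so one invocation runs in $\poly(1/\phi, \log n)$ rounds. Given a candidate cut, a trimming step based on local flow is applied so that the larger side is an expander in its induced subgraph rather than only in the current subgraph. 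Repeating the procedure $O(\log n)$ times boosts the success probability to high probability.

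The main obstacle is coordinating the recursion in \congest without blowing up congestion when many components execute their sparse-cut subroutines concurrently. Different components are vertex-disjoint, so in principle they can run independently, but the random-walk and flow primitives rely on BFS-style dissemination within a component, and such dissemination could in principle conflict on edges that now straddle component boundaries. I would resolve this by strictly confining per-component communication to edges with both endpoints in the component and by maintaining, for each active component, a local BFS tree that is reused across all intra-component subroutines. Multiplying the per-level cost of $\poly(1/\phi, \log n) = \poly(1/\eps, \log n)$ by the $O(\log n)$ recursion depth then yields the claimed $\eps^{-O(1)} \log^{O(1)} n$ round complexity, with high probability inherited from a union bound over the polynomially many invocations of the sparse-cut routine.
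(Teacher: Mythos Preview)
The paper does not prove this theorem at all: it is quoted verbatim as a black-box result from Chang and Saranurak~\cite{ChangS20} (see the sentence preceding the theorem statement). So there is no in-paper proof to compare against; the ``paper's own proof'' is simply the citation.

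Your sketch follows the standard recursive sparse-cut paradigm and is broadly in the right spirit, but as written it has a real gap. The step ``returns a cut of conductance at most $O(\phi \cdot \polylog n)$ that removes a nontrivial fraction of $\vol(U)$'' is doing all the work, and the nibble-style primitive you invoke (Spielman--Teng / ACL) does \emph{not} by itself guarantee a balanced cut: a single nibble may peel off only a $1/\poly(n)$ fraction of the volume, which would make the recursion depth polynomial rather than $O(\log n)$. The actual \congest algorithms handle this via a ``most-balanced sparse cut or certify expansion'' dichotomy, typically implemented through a cut-matching game or an explicit balancing phase (repeated nibbles plus trimming until either a constant fraction of the volume is removed or the remainder is certified as an expander). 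Your one-line ``trimming step based on local flow'' gestures at this but does not supply the mechanism that keeps the recursion shallow. Likewise, the claim that walks and flows run in $\poly(1/\phi,\log n)$ rounds per component is plausible for random walks (expected per-edge congestion is $O(1)$ per step), but the flow-based trimming in \congest is itself a nontrivial primitive whose round complexity you have not bounded.

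In short: the paper treats this theorem as imported; your outline is a reasonable caricature of the cited construction, but the balance guarantee and the \congest implementation of trimming are the hard parts, and both are missing from your proposal.
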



\begin{theorem}\label{thm:expander-decomposition-det}
For any $0 < \eps < 1$, an $(\eps,\phi)$-expander decomposition of a graph $G=(V,E)$ with  $\phi =  \eps^{O(1)} 2^{-O(\sqrt{\log n \log \log n})}$ can be constructed in $\eps^{-O(1)} 2^{O(\sqrt{\log n \log \log n})}$ rounds deterministically.
\end{theorem}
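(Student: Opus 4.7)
The plan is to obtain the deterministic statement by derandomizing the algorithm of Theorem~\ref{thm:expander-decomposition-rand}, replacing its randomized primitives with deterministic counterparts driven by a network decomposition. The randomized algorithm follows the standard recursive template for expander decomposition: at each level, one tries to certify that the current induced subgraph has conductance at least $\phi$; if this fails, one produces a balanced low-conductance cut $\{S, V \setminus S\}$ with $\vol(S)$ and $\vol(V \setminus S)$ both a constant fraction of $\vol(V)$, and recurses on both sides. The randomness in the \congest implementation is concentrated in the cut-finding subroutine, which in the standard treatment invokes short local random walks or a cut-matching game started at a random source vertex.

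First I would fix a deterministic network decomposition routine as the derandomization backbone. The Panconesi--Srinivasan construction partitions the vertex set into $c = 2^{O(\sqrt{\log n \log \log n})}$ color classes whose color-induced connected components have diameter $d = 2^{O(\sqrt{\log n \log \log n})}$; these are exactly the parameters appearing in the target bound. Such a decomposition lets one simulate a sequential routine that acts locally on a cluster by processing the color classes one at a time, paying $O(c \cdot d)$ rounds overall.

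Next I would derandomize the cut-or-expansion test by enumerating candidate source vertices inside each bounded-diameter cluster and simulating the walk (or the cut-matching game) from every candidate in parallel, using the cluster itself as a low-diameter communication backbone. Iterating the outer recursion $\poly(\log n)$ times and combining the sparse cuts produced at each level yields a valid $(\eps,\phi)$ expander decomposition with $\phi = \eps^{O(1)} 2^{-O(\sqrt{\log n \log \log n})}$. Because the polylogarithmic factors inherited from the randomized recursion depth are absorbed by the already sub-exponential overhead of the network decomposition, the total round complexity is $\eps^{-O(1)} 2^{O(\sqrt{\log n \log \log n})}$.

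The hard part will be implementing a single level of balanced sparse cut search deterministically without incurring more than the network-decomposition overhead. In the randomized algorithm a single random start succeeds with constant probability, so $O(\log n)$ parallel trials suffice; deterministically one must argue that inside every cluster of the network decomposition there is a witness vertex from which a short local exploration exposes a sparse cut whenever one exists globally, and then exploit the color classes to schedule the corresponding exhaustive search across clusters without congestion. Balancing the degradation of $\phi$ against the number of recursion levels, so that both the conductance parameter and the round complexity absorb only a single factor of $2^{O(\sqrt{\log n \log \log n})}$, is the delicate accounting step that produces the stated bounds.
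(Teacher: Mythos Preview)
The paper does not prove this theorem at all: it is stated as a black-box result from prior work, attributed (together with Theorem~\ref{thm:expander-decomposition-rand}) to Chang and Saranurak~\cite{ChangS20}. There is therefore no ``paper's own proof'' to compare your proposal against; the correct thing to do here is simply to cite~\cite{ChangS20} and move on.

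As for the content of your sketch: the high-level idea that the deterministic bound comes from replacing randomized primitives with network-decomposition-based deterministic ones is in the right spirit, and the $2^{O(\sqrt{\log n \log \log n})}$ factor does ultimately trace back to deterministic network decomposition. But your description of the derandomization---enumerating start vertices inside low-diameter clusters and running walks from each---is not how the actual argument in~\cite{ChangS20} works, and would not by itself give a correct algorithm. The deterministic expander decomposition there relies on a deterministic balanced-sparse-cut primitive built from careful flow/embedding arguments, not on exhaustive simulation of random walks, and the accounting that keeps both $\phi$ and the round complexity at a single $2^{O(\sqrt{\log n \log \log n})}$ factor is considerably more involved than what you describe. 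Since the present paper only needs the statement as a tool, you should not attempt to reprove it here.
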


\subsection{Existence of a High-degree Vertex}\label{sec:high_deg} Let $E=E_{1}\cup E_2 \cup \cdots\cup E_{k} \cup \Er$ be any $(\eps,\phi)$-expander decomposition. Let  $G_i = (V_i, E_i)$ be the subgraph of $G$ induced by $E_i$. Let $\Delta_i$ be the maximum degree of the graph $G_i$. Recall that  $V = V_1 \cup V_2 \cup \cdots \cup V_k$ partitions the vertex set $V$ and observe that $G_i$ is a subgraph of $G[V_i]$, the subgraph of $G$ induced by the vertex set $V_i$.
We will show that if $G$ is $H$-minor-free, then there must exist a vertex in each $G[V_i]$ whose degree is $ \Omega(\phi^2) |V_i|$ for any $(\eps,\phi)$-expander decomposition of $G$.

\paragraph{Edge Separators}
An \emph{edge separator} of a graph is a cut $\{S, V\setminus S\}$ such that \[\min\{|S|, |V\setminus S|\} \geq |V|/3.\]  
The \emph{size} of an edge separator is the number of cut edges $|\partial(S)|$. 
In \cref{sect:edge-separator}, we show that any $H$-minor-free graph $G$ admits an edge separator of size $O(\sqrt{\Delta |V|})$, where the hidden constant in $O(\cdot)$ depends only on $H$. The following lemma is a  consequence of this result.

\begin{lemma}\label{lem:separator}
If $G$ is $H$-minor-free, then $\Delta_i = \Omega(\phi^2) |V_i|$ for each cluster $G_i = (V_i, E_i)$ of any $(\eps,\phi)$-expander decomposition of $G$. The hidden constant in $\Omega(\cdot)$ depends only on $H$.
\end{lemma}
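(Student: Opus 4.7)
The plan is to combine the edge separator bound from \cref{thm:edge-separator} with the conductance lower bound guaranteed by the expander decomposition. The key observation is that $G_i$ is a subgraph of the $H$-minor-free graph $G$, and taking subgraphs preserves $H$-minor-freeness, so \cref{thm:edge-separator} applies directly to $G_i$.

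First I would invoke \cref{thm:edge-separator} on $G_i=(V_i,E_i)$ to obtain a cut $S \subseteq V_i$ with
\[
\min\{|S|,\,|V_i \setminus S|\} \geq |V_i|/3 \qquad \text{and} \qquad |\partial_{G_i}(S)| = O\!\left(\sqrt{\Delta_i\,|V_i|}\right),
\]
where the hidden constant depends only on $H$. Next I would use the fact that $G_i$ is an $\phi$-expander: since $\Phi(G_i) \geq \phi$, the cut $S$ must satisfy
\[
|\partial_{G_i}(S)| \;\geq\; \phi \cdot \min\{\vol_{G_i}(S),\,\vol_{G_i}(V_i \setminus S)\}.
\]

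The remaining step is to lower bound the volumes. By the definition of $V_i$ in the expander decomposition, every vertex of $V_i$ is incident to at least one edge of $E_i$, so every vertex has degree at least $1$ in $G_i$. Consequently $\vol_{G_i}(S) \geq |S|$ and $\vol_{G_i}(V_i \setminus S) \geq |V_i \setminus S|$, hence
\[
\min\{\vol_{G_i}(S),\,\vol_{G_i}(V_i \setminus S)\} \;\geq\; |V_i|/3.
\]
Chaining the three inequalities yields $\phi \cdot |V_i|/3 \leq O(\sqrt{\Delta_i\,|V_i|})$, which rearranges to $\Delta_i = \Omega(\phi^2)\cdot |V_i|$, with the hidden constant depending only on $H$.

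I do not anticipate a serious obstacle here; this is essentially an algebraic reconciliation between the separator-style bound $|\partial_{G_i}(S)| = O(\sqrt{\Delta_i |V_i|})$ and the expander-style bound $|\partial_{G_i}(S)| = \Omega(\phi |V_i|)$. The only subtlety worth flagging is that \cref{thm:edge-separator} is stated for the \emph{whole} graph $G$ rather than for an arbitrary subgraph, so I would explicitly remark that $G_i$ itself is $H$-minor-free (being a subgraph of $G$) before applying the separator theorem to it, and that the volume lower bound relies on the convention that $V_i$ consists exactly of endpoints of edges in $E_i$.
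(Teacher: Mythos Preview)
Your proposal is correct and follows essentially the same route as the paper: apply \cref{thm:edge-separator} to the $H$-minor-free subgraph $G_i$, lower bound $\min\{\vol(S),\vol(V_i\setminus S)\}$ by $\Omega(|V_i|)$ via the balanced-separator guarantee, compare with the conductance bound $\Phi(G_i)\ge\phi$, and rearrange. The paper's proof is terser (it simply writes $\min\{\vol(S),\vol(V_i\setminus S)\}\ge\min\{|S|,|V_i\setminus S|\}$ without spelling out the degree-$\ge 1$ justification), but your added remarks about $G_i$ being $H$-minor-free and about the definition of $V_i$ are the right clarifications to make.
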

\begin{proof}
We focus on the $H$-minor-free graph $G_i = (V_i, E_i)$ in the proof. 
Consider any $O(\sqrt{\Delta_i |V_i|})$-size edge separator $S$ of $G_i$. The fact that  $\min\{|S|, |V_i\setminus S|\} \geq |V_i|/3$ implies \[\min\{\vol(S), \vol(V_i \setminus S)\} \geq  \min\{|S|, |V_i\setminus S|\} = \Omega(|V_i|),\] and hence \[\phi \leq \Phi(G_i) \leq \Phi(S) = \frac{|\partial(S)|}{\min\{\vol(S), \vol(V \setminus S)\}} = O\left(\frac{\sqrt{\Delta_i |V_i|}}{V_i}\right) = O\left(\sqrt{\frac{\Delta_i}{|V_i|}}\right),\]
which implies $\Delta_i = \Omega(\phi^2) |V_i|$.
\end{proof}

\paragraph{Remark} There is an alternative method to force the existence of a high-degree vertex via  \emph{local treewidth} if $G$ is $H$-minor-free for an apex graph $H$. An \emph{apex graph} is a graph  that contains  a vertex whose removal makes the graph planar. Demaine and Hajiaghayi showed that if $G$ is $H$-minor-free for some apex graph $H$, then the treewidth $\tw(G)$ of $G$ is linear in the diameter of $G$~\cite{Demaine2004localtreewidth}.  

Suppose $G$ is $H$-minor-free for some apex graph $H$.
We show the existence of a vertex $v_i^\ast \in V_i$ with degree $\Omega(\phi^{2} \log^{-1} n) |V_i|$.
Since $G_i = (V_i, E_i)$ has conductance at least $\phi$, the diameter of $G_i$ is $D = O(\phi^{-1} \log n)$, so the treewidth of $G_i$ is $O(\phi^{-1} \log n)$. By a well-known property of bounded-treewidth graphs~\cite[Theorem~2.5]{ROBERTSON1986treewidth}, there exists a subset $S \subseteq V_i$ with $|S| =  O(\phi^{-1} \log n)$ such that each connected component of the subgraph of $G_i$ induced by $V_i \setminus S$ has size at most $|V_i|/2$. Since $G_i$ is a $\phi$-expander, the number of edges incident to $S$ in $G_i$ is at least $\phi  |V_i \setminus S|$. If $|S| \leq |V_i|/2$, then $\phi |V_i \setminus S| = \Omega(\phi) |V_i|$, so at least one vertex $v_i^\ast$ in $S$ has degree in $G_i$ at least $\Omega(\phi) |V_i| / |S| = \Omega(\phi^{2} \log^{-1} n) |V_i|$. Otherwise, $|S| > |V_i|/2$ implies that $|V_i| = O(\phi^{-1} \log n)$, in which case the existence of a vertex  $v_i^\ast$ with degree $\Omega(\phi^2 \log^{-1} n) |V_i|$ is trivial.

This approach however does not generalize to the case  where $H$ is not an apex graph. If a minor-closed class of graphs $\GG$ contains all apex graphs, then the treewidth of $G \in \GG$ is not bounded by any function of the diameter of $G$~\cite{eppstein2000diameter}. This can be easily seen by considering the apex graph formed by adding an edge between a vertex $v$ to each vertex in a $k \times k$ grid. This graph has diameter $D = 1$ and treewidth $w = k+1$. 

\subsection{Routing}

Select $v_i^\ast$ as any vertex $v \in V_i$ that has the maximum degree $\Delta_i$ in $G_i$. We show that the bound given by \cref{lem:separator} implies an efficient algorithm for $v_i^\ast$ to learn the entire graph topology of $G[V_i]$. 

\paragraph{Expander Routing}
Consider a routing task where each vertex $v$ is the source and the destination of at most $L \cdot \deg(v)$ $O(\log n)$-bit messages. If $G$ is an $\phi$-expander, then such a task can be solved in $L \cdot \mix  \cdot 2^{O(\sqrt{\log n})} = L \cdot \phi^{-2} \cdot 2^{O(\sqrt{\log n})}$ rounds with high probability~\cite{GhaffariKS17,GhaffariL2018} or  $L \cdot \phi^{-O(1)} \cdot 2^{O(\sqrt{\log n \log  \log n})}$ rounds deterministically~\cite{chang2024deterministic}.



\paragraph{Edge Density of $H$-minor-free Graphs}
The \emph{edge density} of a graph $G$ is $|E|/|V|$. It is well-known that any $H$-minor-free graph has edge density $O(1)$, where the constant $O(1)$ depends only on $H$. Specifically, Thomason~\cite{THOMASON2001318} showed that any $K_t$-minor-free graph $G=(V,E)$ satisfies $|E| = O(t \sqrt{\log t}) \cdot |V|$. Moreover, Barenboim and Elkin~\cite{BE10} showed that given an upper bound  $d$  on the edge density of a graph $G=(V,E)$, its edge set $E$ can be oriented such that the out-degree of each vertex is at most $O(d)$ in $O(\log n)$ rounds. As a result, for any $H$-minor-free graph $G$, in $O(\log n)$ rounds we can orient its edges such that each $v \in V_i$ has out-degree $O(1)$. 


\paragraph{Information Gathering}
In view of the above discussion, the task of letting $v_i^\ast$ learn the entire graph topology of $G[V_i]$ can be reduced routing $O(1)$ messages of $O(\log n)$ bits from each $v \in V_i$ to $v_i^\ast$, as we can first spend $O(\log n)$ rounds to find an edge orientation of $G[V_i]$ with $O(1)$ out-degree, and then each vertex $v \in V_i$ only has to send information about its outgoing edges in $G[V_i]$ to $v_i^\ast$.

By \cref{lem:separator}, if $G$ is $H$-minor-free, then the degree of  $v_i^\ast$ in $G_i = (V_i, E_i)$ is $\Omega(\phi^2) |V_i|$, so the number of $O(\log n)$-bit messages sent to   $v_i^\ast$  in this routing task is $O(\phi^{-2}) \cdot \deg_{G_i}(v_i^\ast)$.
Therefore, using expander routing, this routing task can be solved in  $\phi^{-2} \cdot 2^{O(\sqrt{\log n})}$ rounds with high probability~\cite{GhaffariKS17,GhaffariL2018} or  $\phi^{-O(1)} \cdot 2^{O(\sqrt{\log n \log  \log n})}$ rounds deterministically~\cite{chang2024deterministic}.

We provide faster randomized and deterministic algorithms for this task in \cref{lem:routing-rand,lem:routing-det}. Due to \cref{lem:separator}, the conditions in \cref{lem:routing-rand,lem:routing-det} are satisfied. In these lemmas, $n$ denotes the number of vertices in the underlying network $G$, not the number of vertices in one cluster $G_i$.
The above discussion on the edge density of  $H$-minor-free graphs implies that for any $H$-minor-free graph $G$,  the degree of  $v_i^\ast$ in $G_i = (V_i, E_i)$ is $\Omega(\phi^2) |V_i| = \Omega(\phi^2) |E_i|$ by \cref{lem:separator}.

\begin{lemma}\label{lem:routing-rand}
Suppose $\deg_{G_i}(v_i^\ast)  = \Omega(\phi^2) |E_i|$. In $O(\phi^{-4} \log^3 n)$ rounds,  an $O(\log n)$-bit message from each  $v \in V_i$ can be routed via the edges $E_i$  to $v_i^\ast$ with high probability. 
\end{lemma}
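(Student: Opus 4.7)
The plan is to route the $|V_i|$ messages to $v_i^\ast$ via many independent lazy random walks, exploiting the large stationary probability of $v_i^\ast$ guaranteed by the hypothesis. Because $G_i$ is a $\phi$-expander, its mixing time is $\tau := \mix(G_i) = O(\phi^{-2} \log n)$, and by assumption
\[
\pi(v_i^\ast) \;=\; \frac{\deg_{G_i}(v_i^\ast)}{2|E_i|} \;=\; \Omega(\phi^2).
\]
Hence, a lazy random walk of length $\tau$ starting at any $v \in V_i$ ends at $v_i^\ast$ with probability at least $\pi(v_i^\ast)(1 - 1/|V_i|) = \Omega(\phi^2)$.

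The algorithm performs $T = \Theta(\phi^{-2} \log n)$ independent trials. In each trial, every vertex $v \in V_i$ injects a token carrying its $O(\log n)$-bit message (labelled with the source ID) that performs $\tau$ lazy random-walk steps on $G_i$; whenever a token arrives at $v_i^\ast$ it is recorded, and duplicates across trials are pruned using the source label. Independence of the trials and a union bound over at most $n$ sources show that every message reaches $v_i^\ast$ in at least one trial with probability $1 - 1/\poly(n)$.

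The main obstacle is bounding the per-edge congestion so that each walk step can be implemented in $O(\log n)$ CONGEST rounds. Let $X_t(u)$ denote the number of tokens at $u$ after $t$ steps of a fixed trial. Reversibility of the lazy walk yields $p_t^v(u) = (\deg_{G_i}(u)/\deg_{G_i}(v)) \cdot p_t^u(v)$, so, starting from one token at every vertex of $V_i$,
\[
\E[X_t(u)] \;=\; \sum_{v \in V_i} p_t^v(u) \;=\; \deg_{G_i}(u)\sum_{v \in V_i} \frac{p_t^u(v)}{\deg_{G_i}(v)} \;\leq\; \deg_{G_i}(u),
\]
where the last inequality uses $\deg_{G_i}(v) \geq 1$ for every $v \in V_i$. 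Since the $|V_i|$ tokens evolve independently, a Chernoff bound gives $X_t(u) = O(\deg_{G_i}(u) + \log n)$ w.h.p., uniformly over all $u$ and all $t \leq \tau$. The expected number of tokens traversing an edge $\{u,w\}$ in step $t+1$ is then at most $X_t(u)/(2\deg_{G_i}(u)) + X_t(w)/(2\deg_{G_i}(w)) = O(\log n)$, and a second Chernoff bound shows the actual per-edge load is $O(\log n)$ w.h.p. Consequently each walk step can be simulated in $O(\log n)$ CONGEST rounds.

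Putting the pieces together, one trial takes $O(\tau \log n) = O(\phi^{-2} \log^2 n)$ rounds, and the $T = O(\phi^{-2} \log n)$ trials take $O(\phi^{-4} \log^3 n)$ rounds in total, matching the claimed bound.
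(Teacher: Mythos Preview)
Your proof is correct and follows essentially the same approach as the paper: lazy random walks from every vertex, each mixing-time segment hitting $v_i^\ast$ with probability $\Omega(\phi^2)$, $O(\phi^{-2}\log n)$ such segments to succeed with high probability, and $O(\log n)$ rounds per walk step via a Chernoff bound on edge congestion. The only cosmetic differences are that the paper runs a single long walk of length $O(\phi^{-4}\log^2 n)$ and checks whether it ever visits $v_i^\ast$ (rather than restarting $T$ independent trials), and the paper bounds the per-edge expectation directly as $O(1)$ in one step rather than first bounding $X_t(u)$ and then the edge load.
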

\begin{proof}
The algorithm runs a lazy random walk of length $O(\phi^{-4} \log^2 n)$ from each vertex $v \in V_i$ in parallel. We claim that each step of the lazy random walk can be simulated in $O(\log n)$ rounds with probability $1 - 1/\poly(n)$, so the overall round complexity is $O(\phi^{-4} \log^3 n)$. To prove this claim, observe that for each  edge $e \in E_i$ and for each $j$, the expected number of random walks traversing $e$ in the $j$th step is $O(1)$, so a Chernoff bound implies that this number is at most $O(\log n)$ with probability $1 - 1/\poly(n)$. By a union bound over all $e \in E_i$ and $1 \leq j \leq O(\phi^{-4} \log^2 n)$, the number of $O(\log n)$-bit messages sent along each edge in each step is $O(\log n)$ with probability $1 - 1/\poly(n)$. 

For the correctness of the algorithm, we show that with probability $1 - 1/\poly(n)$ each random walk passes $v_i^\ast$, so in the end $v_i^\ast$ receives all the messages.
After $\mix = O(\phi^{-2} \log n)$ lazy random walk steps, it lands at a random vertex according to the degree distribution $\pi(u) = \deg_{G_i}(u) / 2 |E_i|$, up to a small additive error $\pm \pi(u)/n$.
In particular, it lands at $v_i^\ast$ with probability $\Omega(\deg_{G_i}(v_i^\ast) / |E_i|) = \Omega(\phi^2)$. Thus, after $s = O(\phi^{-2} \log n)$ segments of random walks of length $\mix = O(\phi^{-2} \log n)$, the probability that the walk never reach  $v_i^\ast$ is at most $(1 - \Omega(\phi^2))^s = n^{-\Omega(1)}$.
\end{proof}

\begin{lemma}\label{lem:routing-det}
Suppose $\deg_{G_i}(v_i^\ast)  = \Omega(\phi^2) |E_i|$. In $O(\phi^{-18}) \cdot 2^{O(\sqrt{\log n})}$ rounds, an $O(\log n)$-bit message from each  $v \in V_i$ can be routed via the edges $E_i$  to $v_i^\ast$ deterministically.
\end{lemma}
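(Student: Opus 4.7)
The plan is to carry out deterministically the same task that \cref{lem:routing-rand} accomplishes via parallel lazy random walks: every $v \in V_i$ must deliver an $O(\log n)$-bit message to $v_i^\ast$ along the edges in $E_i$. First, I would recast this as a standard routing instance on the $\phi$-expander $G_i$. Because $G_i$ inherits the $O(1)$ edge density of the $H$-minor-free host graph, $|V_i| = \Theta(|E_i|)$, and by hypothesis $\deg_{G_i}(v_i^\ast) = \Omega(\phi^{2})|E_i|$; hence $v_i^\ast$ must absorb $|V_i| = O(\phi^{-2})\cdot\deg_{G_i}(v_i^\ast)$ messages, while every other $v \in V_i$ handles at most $1 \le \deg_{G_i}(v)$ messages as a source. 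This fits the usual ``every vertex is source/destination of at most $L\cdot\deg(v)$ messages'' template for expander routing with load $L = O(\phi^{-2})$.

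Second, I would invoke a deterministic expander-routing subroutine on $G_i$. Such a routine completes an $L$-load instance in $L \cdot \phi^{-O(1)} \cdot 2^{O(\sqrt{\log n})}$ rounds; combined with $L = O(\phi^{-2})$ and absorbing all polynomial factors in $\phi^{-1}$, this yields the claimed $O(\phi^{-18})\cdot 2^{O(\sqrt{\log n})}$ bound. No communication outside $G_i$ is used, which is exactly what the lemma requires.

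The main obstacle is obtaining the $2^{O(\sqrt{\log n})}$ sub-polynomial factor deterministically: the off-the-shelf deterministic expander-routing bound cited earlier in the paper, from~\cite{ChangS20}, only gives a $2^{O(\log^{2/3}n\log^{1/3}\log n)}$ factor, which is weaker than what the statement asks for. To close this gap I would exploit the single-sink structure of the task and the very high degree of $v_i^\ast$: partition $V_i$ into $\Omega(\phi^{2})|V_i|$ buckets, one per neighbour of $v_i^\ast$, and gather each bucket to its leader via a deterministic low-diameter structure embedded into $G_i$, obtainable in $\phi^{-O(1)}\cdot 2^{O(\sqrt{\log n})}$ rounds; each leader then forwards the contents of its bucket (of size $O(\phi^{-2})$) to $v_i^\ast$ along its single incident edge. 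Trading a larger polynomial in $\phi^{-1}$ for a sharper sub-polynomial factor in $n$ is precisely what produces the relatively large $\phi^{-18}$ appearing in the stated round complexity, and verifying that the specialized gather routine really attains the $2^{O(\sqrt{\log n})}$ factor deterministically is where the real work lies.
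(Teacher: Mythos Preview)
You correctly identify the obstacle: plugging $L = O(\phi^{-2})$ into the generic deterministic expander-routing primitive of~\cite{ChangS20} only gives a $2^{O(\log^{2/3} n \log^{1/3}\log n)}$ factor, not the $2^{O(\sqrt{\log n})}$ the lemma claims. But your proposed fix does not close this gap. The plan ``partition $V_i$ into buckets, one per neighbour of $v_i^\ast$, and gather each bucket to its leader via a deterministic low-diameter structure embedded into $G_i$'' is the entire content of the lemma pushed into an undefined subroutine. You do not say what the low-diameter structure is, how it is computed deterministically in $\phi^{-O(1)} \cdot 2^{O(\sqrt{\log n})}$ rounds, or why gathering along it respects congestion; you explicitly concede that ``verifying that the specialized gather routine really attains the $2^{O(\sqrt{\log n})}$ factor deterministically is where the real work lies.'' As written, this is a restatement of the goal, not a proof.

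The paper takes a different and concrete route. It does \emph{not} use general expander routing at all. Instead it reduces the single-sink task to an almost-maximal-flow instance and invokes a specific tool, \cite[Lemma~D.10]{ChangS20}, which finds low-congestion low-dilation $S$-to-$T$ paths in a bounded-degree graph of sparsity $\psi$ in $2^{O(\sqrt{\log n})}\cdot \poly(\psi^{-1})$ rounds deterministically. To put the problem in the form that lemma expects, the paper first replaces each vertex $v$ of $G_i$ by a $\deg_{G_i}(v)$-vertex constant-degree constant-conductance gadget (yielding a bounded-degree graph of sparsity $\Omega(\phi)$), and then further inflates each gadget vertex associated with $v_i^\ast$ by an $O(\phi^{-2})$-vertex gadget so that the target set $T$ becomes a majority of the graph (sparsity drops to $\Omega(\phi^{3})$). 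Now $|S| < |T|$, and Lemma~D.10 produces the required path system with congestion $c = O(\psi^{-1}\log^{3/2} n)$ and dilation $d = \psi^{-2}\cdot 2^{O(\sqrt{\log n})}$ in time $t = \psi^{-6}\cdot 2^{O(\sqrt{\log n})}$; with $\psi = \Omega(\phi^{3})$ this is exactly where the $\phi^{-18}$ comes from. The key idea you are missing is this two-step vertex-splitting reduction that turns ``route everything to one high-degree vertex'' into ``route a minority set $S$ into a majority set $T$ in a bounded-degree graph,'' which is precisely the regime in which the deterministic flow primitive of~\cite{ChangS20} achieves the sharper $2^{O(\sqrt{\log n})}$ factor.
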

\begin{proof}
Although this routing task can be solved in $\phi^{-O(1)} \cdot 2^{O(\sqrt{\log n \log  \log n})}$ rounds using deterministic expander routing~\cite{chang2024deterministic}, we provide a faster and more direct algorithm via an almost maximal flow algorithm~\cite[Lemma D.10]{ChangS20}.\footnote{See the full version \href{https://arxiv.org/abs/2007.14898v1}{arXiv:2007.14898v1} of~\cite{ChangS20}.}

In order to apply~\cite[Lemma D.10]{ChangS20}, we need to do some pre-processing to the graph $G_i$.
Let $G_i'$ be the result of replacing each vertex $v$  in $G_i'$ by a $\deg_{G_i}(v)$-vertex graph $X_v$ with $\Theta(1)$ conductance and $\Theta(1)$ maximum degree in such a way that the $\deg_{G_i}(v)$ edges in $E_i$ incident to $v$ are attached to distinct $\deg_{G_i}(v)$ vertices in $X_v$.  Observe that the new graph $G_i'$ has maximum degree $O(1)$.

Define the \emph{sparsity} of a cut $S$ of a graph $G=(V,E)$ as $\Psi(S) = \frac{|\partial(S)|}{\min\{ |S|, |V \setminus S|\}}$ if $S \neq \emptyset$ and $S \neq V$. Define the   \emph{sparsity} of a graph $G=(V,E)$ as the minimum sparsity over all cuts $S \subseteq V$ with $S \neq \emptyset$ and $S \neq V$.
Then we must have $\Psi(G_i') = \Theta(\Phi(G_i)) = \Omega(\phi)$~\cite[Lemma C.2]{ChangS20}. 

Next, define $G_i''$ as the result of replacing each vertex $u \in X_{v_i^\ast}$ by  an $O(\phi^{-2})$-vertex graph $Y_u$ with $\Theta(1)$ conductance and $\Theta(1)$ maximum degree such that  $T = \bigcup_{u \in X_{v_i^\ast}} Y_u$ constitutes more than half of the vertices in $G_i''$. It is clear that the new graph $G_i''$ has sparsity $\Omega(\phi^{3})$~\cite[Lemma C.1]{ChangS20} and maximum degree $O(1)$. 

Now, let $S$ be the vertices in $G_i''$ that are not in $T$. Then the original routing problem is reduced to finding a set of paths from each $v \in S$ to an arbitrary vertex in $T$. Suppose that this set of paths satisfies that the maximum path length is $d$ and each vertex belongs to at most $c$ paths, then the routing can be done with an additional $O(cd)$ rounds.  As $|S| < |T|$,~\cite[Lemma D.10]{ChangS20} shows that such a set of paths with $c = O(\Delta \psi^{-1} \log^{3/2} n)$ and $d = O(\Delta^2 \psi^{-2}) \cdot 2^{O(\sqrt{\log n})}$ can be found in  $t = O(\Delta^6 \psi^{-6}) \cdot 2^{O(\sqrt{\log n})}$ rounds. Here $\Delta = O(1)$ is the maximum degree and $\psi = \Omega(\phi^{3})$ is the sparsity. Therefore, the overall round complexity of routing is $O(t + cd) = O(\phi^{-18}) \cdot 2^{O(\sqrt{\log n})}$.
\end{proof}

Observe that the  routing algorithms of \cref{lem:routing-rand,lem:routing-det} can also be used to deliver an $O(\log n)$-bit message from $v_i^\ast$ to each vertex $v \in V_i$ in $G_i$ by reversing the routing procedure.

\subsection{Summary}\label{sec:framework_summary}
We summarize our results as a theorem.

\begin{theorem}\label{thm:routing-main}
Given any parameter $0 < \eps < 1$, there is an algorithm for finding a partition $V = V_1 \cup V_2 \cup \cdots \cup V_k$ of the vertex set of an $H$-minor-free graph $G=(V,E)$ with the following properties.
\begin{description}
    \item[Inter-cluster Edges:] The number of inter-cluster edges 
    is at most $\eps 
    \min\{|V|, |E|\}$. 
    \item[Construction Time:] The round complexity for partitioning the graph is $\eps^{-O(1)} \log^{O(1)} n$ in the randomized setting and is $\eps^{-O(1)} 2^{O(\sqrt{\log n \log \log n})}$ in the deterministic setting.
    \item[Routing Time:] Each cluster $V_i$ has a leader $v_i^\ast \in V_i$ that knows the entire graph topology of $G[V_i]$. Furthermore, we can let $v_i^\ast$ exchange a distinct $O(\log n)$-bit message with each vertex $v \in V_i$ in $\eps^{-O(1)} \log^{O(1)} n$ rounds in the randomized setting and in $\eps^{-O(1)} 2^{O(\sqrt{\log n \log \log n})}$ rounds in the deterministic setting.    
\end{description}
\end{theorem}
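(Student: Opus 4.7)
The plan is to compose the three building blocks already developed in this section: the distributed expander decomposition of Theorem 2.1 (randomized) and Theorem 2.2 (deterministic), the structural Lemma 2.1 guaranteeing a high-degree vertex in each cluster, and the routing Lemmas 2.2 and 2.3. First, I would invoke the appropriate expander decomposition on $G$ with proximity parameter $\eps' := \eps / c_H$, where $c_H$ is the edge-density constant of $H$-minor-free graphs from Thomason's theorem so that $|E| \leq c_H |V|$. This yields a partition $E = E_1 \cup \cdots \cup E_k \cup \Er$ with $|\Er| \leq \eps' |E|$ and each $G_i = (V_i, E_i)$ a $\phi$-expander, where $\phi = \eps^{O(1)} \log^{-O(1)} n$ in the randomized case and $\phi = \eps^{O(1)} 2^{-O(\sqrt{\log n \log \log n})}$ in the deterministic case. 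The inter-cluster bound is immediate: $|\Er| \leq \eps' |E| \leq \eps |E|$ and $|\Er| \leq \eps' c_H |V| = \eps |V|$ together yield $|\Er| \leq \eps \min\{|V|,|E|\}$.

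Second, for each cluster I apply Lemma 2.1 to the $H$-minor-free graph $G_i$ to obtain a maximum-degree vertex $v_i^\ast \in V_i$ with $\deg_{G_i}(v_i^\ast) = \Omega(\phi^2) |V_i|$. Since $H$-minor-freeness of $G_i$ also forces $|E_i| = O(|V_i|)$, this is equivalently $\deg_{G_i}(v_i^\ast) = \Omega(\phi^2) |E_i|$, matching the hypothesis of the routing lemmas. The leader $v_i^\ast$ can be elected by a max-degree broadcast along a BFS tree of $G_i$ in $O(\phi^{-1} \log n)$ rounds using the conductance-to-diameter bound, a cost absorbed by the routing step below.

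Third, to transmit the topology of $G[V_i]$ to $v_i^\ast$, apply Barenboim--Elkin in $O(\log n)$ rounds to orient the edges of $G$ so that each vertex has $O(1)$ out-neighbors; this uses only the bounded edge density of $H$-minor-free graphs and simultaneously induces an $O(1)$-out-degree orientation on each $G[V_i]$. Each $v \in V_i$ then needs to transmit only $O(1)$ edge descriptions of $O(\log n)$ bits to $v_i^\ast$, routed through the expander edges $E_i$. Lemma 2.2 completes this in $O(\phi^{-4} \log^3 n) = \eps^{-O(1)} \log^{O(1)} n$ randomized rounds, and Lemma 2.3 in $O(\phi^{-18}) \cdot 2^{O(\sqrt{\log n})} = \eps^{-O(1)} 2^{O(\sqrt{\log n \log \log n})}$ deterministic rounds. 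Running the same routing procedure in reverse delivers an $O(\log n)$-bit message from $v_i^\ast$ back to each $v \in V_i$, furnishing the two-way exchange asserted in the theorem. The total round complexity is the sum of the decomposition, orientation, leader-election, and routing costs, all within the claimed bounds.

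The main thing to verify carefully is that the inverse-polynomial blow-up $\phi^{-O(1)}$ incurred by the routing lemmas stays inside the target regimes; this is ensured precisely because the expander decomposition delivers $\phi$ that is polynomial in $\eps$ and only $\log^{O(1)} n$ (respectively $2^{O(\sqrt{\log n \log \log n})}$) smaller than $1$. A boundary case worth noting is that Lemma 2.1 is only informative when $\phi^2 |V_i| \geq 1$; in the residual regime $|V_i| = O(\phi^{-2})$ a BFS tree of depth $O(\phi^{-1} \log n)$ already suffices to pipe the entire cluster topology to any designated vertex within the same time budget, so the stated bounds hold uniformly.
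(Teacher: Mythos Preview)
Your proposal is correct and follows essentially the same approach as the paper: invoke the expander decomposition with a rescaled parameter $\eps' = \eps/c_H$ to handle the $\min\{|V|,|E|\}$ bound, elect the maximum-degree vertex in each cluster via a diameter-bounded broadcast, use the Barenboim--Elkin orientation so each vertex has only $O(1)$ edges to report, and then apply the routing Lemmas 2.2 and 2.3 (whose hypotheses are certified by Lemma 2.1 together with the $O(1)$ edge density). The boundary-case remark about $|V_i| = O(\phi^{-2})$ is harmless but unnecessary, since Lemma 2.1 already guarantees $\deg_{G_i}(v_i^\ast) = \Omega(\phi^2)|E_i|$ unconditionally.
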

\begin{proof}
Since $G$ is $H$-minor-free, there is a constant $t= O(1)$ depending only on $H$ such that $|E|/|V| \leq t$~\cite{THOMASON2001318}.
The partition $V = V_1 \cup V_2 \cup \cdots \cup V_k$  is constructed using the expander decomposition algorithms of \cref{thm:expander-decomposition-rand,thm:expander-decomposition-det} with parameter $\eps' = \eps / t \leq \eps$, so the requirement on the construction time is met. The upper bound on the number of inter-cluster edges $\eps' |E|  = \eps |E| /t \leq \eps |V|$ follows from the definition of an $(\eps, \phi)$-expander decomposition.

For each cluster $G_i = (V_i, E_i)$ in the expander decomposition, in $O(\phi^{-1} \log n)$  rounds the vertices in $G_i$ can select a vertex $v_i^\ast \in V_i$ that has the maximum degree in $G_i$. The algorithm for selecting  $v_i^\ast$ is as follows. In the first step, each vertex $v$ in $G_i$ broadcasts $(\ID(v), \deg_{G_i}(v))$ to its neighbors in $G_i$. After that, in each round each vertex $v$ maintains a pair $(\ID(u), \deg_{G_i}(u))$ that has the highest $\deg_{G_i}(u)$ over all pairs that $v$ has received, breaking the tie by comparing $\ID(u)$, and $v$ broadcasts this pair $(\ID(u), \deg_{G_i}(u))$ to all its neighbors in $G_i$.  The graph $G_i$ has diameter $O(\phi^{-1} \log n)$ because $G_i$ is an $\phi$-expander. Therefore, after $O(\phi^{-1} \log n)$ rounds of communication, all vertices in $G_i$ agree with the same pair $(\ID(u), \deg_{G_i}(u))$ and we may set $v_i^\ast = u$.

For  learning the graph topology of $G[V_i]$ and routing, we apply the  routing algorithms of \cref{lem:routing-rand,lem:routing-det} to $G[V_i]$, in parallel for all $1 \leq i \leq k$. In view of \cref{thm:expander-decomposition-rand,thm:expander-decomposition-det}, we use $\phi = \eps^{O(1)} \log^{-O(1)} n$  in the randomized setting and  $\phi = \eps^{O(1)} 2^{-O(\sqrt{\log n \log \log n})}$ in the deterministic setting.  Due to \cref{lem:separator}, the conditions in \cref{lem:routing-rand,lem:routing-det} are satisfied.
The requirement on the routing time is met in view of the round complexities specified in \cref{lem:routing-rand,lem:routing-det}.
\end{proof}

\paragraph{The Behavior of a Failed Execution} We briefly discuss the behavior of the algorithm of \cref{thm:routing-main} when it fails. For example, if $G$ is not $H$-minor-free, then the algorithm of \cref{thm:routing-main} might not work successfully.
Note that the choice of the parameter $t$ in the algorithm of \cref{thm:routing-main} depends only on $H$, regardless of whether the underlying graph is $H$-minor-free.

Even if $G$ is $H$-minor-free, the algorithm might fail  with a probability of $1/\poly(n)$ in the randomized setting. Understanding the behavior of a failed execution of the algorithm of \cref{thm:routing-main} is crucial to its application in property testing, which we will discuss in \cref{sec:property_testing}, as there is no guarantee that the underlying network $G$ is  $H$-minor-free.

The algorithm of \cref{thm:routing-main}  has two parts, the clustering step and the routing step.

\paragraph{Clustering Step}
In the clustering step we may assume that the algorithm always outputs a clustering  $V = V_1 \cup V_2 \cup \cdots \cup V_k$, even in a failed execution. In particular, if a vertex $v$ is not assigned to any cluster, then $v$ simply assign itself to the cluster $\{v\}$. 

In a successful execution, each  $G[V_i]$ has diameter $O(\phi^{-1} \log n)$ because $G_i$ is an $\phi$-expander and $G_i$ is the result of removing some edges from $G[V_i]$. We can also guarantee that each cluster has this property even in a failed execution, as follows. Choose  $b = O(\phi^{-1} \log n)$ be any upper bound on the cluster diameter for a successful execution of an expander decomposition algorithm. The number $b$ depends only on $\phi$ and $n$.
in $O(\phi^{-1} \log n)$ rounds, we run the following algorithm. Using $b$ rounds, each vertex $v$ computes the maximum $\ID(u)$ over all vertices $u$ within distance $b$ to $v$ in  $G[V_i]$. After that, each vertex $v$ compares its result with its neighbors in $G[V_i]$, and then $v$ marks itself $\ast$ if there is a disagreement. Finally, each vertex $v \in V_i$ checks in $2b+1$ rounds whether there is a vertex  $u \in V_i$ within distance $2b+1$ to $v$ that is marked $\ast$. If such a vertex $u$ exists, then $v$ also marks itself $\ast$. It is that there are two possible outcomes. Either all vertices in $V_i$ are marked $\ast$, or all vertices in $V_i$ are not marked $\ast$.
If the diameter of  $G[V_i]$ is at most $b$, then all vertices in $V_i$ are not marked $\ast$.
If the diameter of  $G[V_i]$ is at least $2b+1$, then all vertices in $V_i$ are marked $\ast$.
Hence if a vertex $v$ is marked $\ast$, it knows that the clustering step has failed, in which case we can let $v$ reset its cluster to be $\{v\}$.

For the number of inter-cluster edges, if $G$ is $H$-minor-free, then the upper bound $\eps \min\{|V|, |E|\}$ is always satisfied in the deterministic setting and it is satisfied with probability $1 - 1/\poly(n)$ in the randomized setting. Recall that the algorithm of \cref{thm:routing-main} is based on an expander decomposition algorithm with parameter $\eps'\leq \eps$. Since the expander decomposition algorithm does not rely on the assumption that $G$ is $H$-minor-free,   the weaker upper bound $\eps|E|$ on the number of inter-cluster edges holds regardless of whether the input graph $G$ is $H$-minor-free or not.

\paragraph{Routing Step} 
We distinguish between different reasons for the routing algorithms of \cref{lem:routing-rand,lem:routing-det} to fail. 
The first reason of failure is that the condition $\deg_{G_i}(v_i^\ast)  = \Omega(\phi^2) |E_i|$ for \cref{lem:routing-rand,lem:routing-det} is not satisfied. In view of \cref{lem:separator}, the only possibility that this condition is not met is when $G$ is not $H$-minor-free. In view of the above discussion, each cluster $G[V_i]$ always has diameter $O(\phi^{-1} \log n)$, so whether the condition $\deg_{G_i}(v_i^\ast)  = \Omega(\phi^2) |E_i|$ is satisfied can be checked in $O(\phi^{-1} \log n)$ rounds.

Even if the condition $\deg_{G_i}(v_i^\ast)  = \Omega(\phi^2) |E_i|$ is met, the routing algorithms of \cref{lem:routing-rand,lem:routing-det} might still fail. There are two possible reasons. One reason is that $\Phi(G_i) < \phi$ is too small due to an error in the expander decomposition algorithm in the clustering step. The other reason is because that in the randomized setting there is a small probability that the algorithm might fail. In either case, the failure occurs with probability $1 / \poly(n)$, regardless of whether $G$ is $H$-minor-free.

In an failed execution of the routing algorithms of \cref{lem:routing-rand,lem:routing-det}, only a subset of all messages are delivered. To detect a failure of delivery of a message, we can simply reverse the execution of the algorithm. Once a vertex $v \in V_i$ detects that some of its messages are not successfully delivered, it broadcasts to all vertices in $V_i$ that the routing algorithm has failed in $O(\phi^{-1} \log n)$ rounds. Hence we can assume that all vertices in a cluster $V_i$ know whether the routing algorithm is successful.

\section{Applications}\label{sec:applications}

Using \cref{thm:routing-main}, we give $\poly(1/\eps, \log n)$-round randomized algorithms and $n^{o(1)} \cdot \poly(1/\eps)$-round deterministic algorithms for various optimization, property testing, and graph decomposition problems on planar or $H$-minor-free networks. 


\subsection{Maximum Independent Set}\label{sec:independent_set}
We design an efficient algorithm for computing a $(1 -\eps)$-approximate maximum independent set of any $H$-minor-free network by combining  \cref{thm:routing-main} with the approach of Czygrinow, Ha{\'n}{\'c}kowiak, and Wawrzyniak~\cite{czygrinow2008fast}.

Let $G$ be an $H$-minor-free graph.  Let $\alpha(G)$ denote the size of the maximum independent set of $G$. Recall that any $H$-minor-free graph has edge density $d = O(1)$, where the constant $O(1)$ depends only on $H$~\cite{THOMASON2001318}. For any $H$-minor-free graph, as $|E|/|V| \leq d$, its  minimum degree is at most $2d$.
Hence $\alpha(G) = \Theta(n)$, as an independent set $I$ of size at least $n / (2d+1)$ can be computed by repeatedly adding a minimum-degree vertex $v$ to $I$ and removing all its neighboring vertices.
For example, if $G$ is planar, then $\alpha(G) \geq n/4$ due to the four color theorem.

Run the algorithm of \cref{thm:routing-main} on $G$ with parameter $\eps' = \eps / (2d+1)$ to partition the vertices into $V = V_1 \cup V_2 \cup \ldots \cup V_k$. For each $V_i$, we route the entire graph topology of ${G}[V_i]$ into  $v^{*}_i$ and let  $v^{*}_i$ compute the maximum independent set $I_i$ of ${G}[V_i]$ locally. Then, $v^{*}_i$ sends a message to each vertex in $V_i$ to inform if it is in $I_i$. 

Let $I = I_1 \cup \ldots \cup I_k$. For each edge $e=\{u,v\}$, if both $u$ and $v$ are in $I$, we add one of  $u$ and $v$ to $Z$. Since this can only happen if $e$ is an inter-cluster edge, we have $|Z| \leq \eps' \cdot n$.
Let $I' = I \setminus Z$. Clearly, $I'$ is an independent set. 

Using the fact that $\alpha(G) \geq n / (2d+1)$ and $\eps' = \eps / (2d+1)$, we have
\begin{align*}
    |I'| &= \left(\sum_{i=1}^{k} |I_i|\right) - |Z| \geq \alpha(G) - |Z| \geq \alpha(G) - \eps' \cdot n \geq \alpha(G) - \eps \alpha(G) = (1-\eps)\alpha(G).
\end{align*}
\vspace{-1.5mm}
Hence we conclude the following theorem.
\thmindependentset*

\subsection{Maximum Cardinality Matching in Planar Graphs}\label{sec:unweighted_matching}

We show that a $(1-\eps)$-approximate maximum cardinality matching of a planar network $G$ can be computed in  $\eps^{-O(1)} \log^{O(1)} n$ rounds with high probability and $\eps^{-O(1)} 2^{O(\sqrt{\log n \log \log n})}$  rounds deterministically.

Let $G$ be a planar graph.
We begin by creating a new graph $\bar{G}=(\bar{V}, \bar{E})$ from $G$ by removing some vertices from $G$ such that the sizes of the maximum matching are the same in $G$ and in $\bar{G}$. Moreover, the size of maximum matching is at least $\Omega(|\bar{V}|)$. After applying the algorithm of \cref{thm:routing-main} on $\bar{G}$, the leader $v_i^\ast$ of each cluster gathers the graph topology of the cluster and compute the maximum matching locally.

A {\it $k$-star} is a subgraph induced by the vertices $\{x, v_1, \ldots, v_k\}$ where for each $1\leq i\leq k$, $\deg(v_i) = 1$ and there is an edge  connecting $x$ and $v_i$. A {\it $k$-double-star} is a subgraph  induced by the vertices $\{x,y, v_1, \ldots, v_k\}$ where for each $1 \leq i \leq k$, $\deg(v_i) = 2$ and there are two edges $\{x, v_i\}$ and $\{y, v_i\}$. The graph $\bar{G}$ is obtained from $G$ by eliminating all 2-stars and 3-double-stars through the deletion of some vertices, as described below.

To eliminate 2-stars, every vertex $u$ with degree 1 sends a token $(u)$ to its neighbor. Then every vertex who has received more than one tokens bounces all the tokens, \emph{except one of them}, back to their originators. Vertices of degree 1 whose token was bounced back are removed from $G$. To eliminate 3-double stars, every vertex $u$ with exactly two neighbors $u_1, u_2$ sends a token $(u, (u_1, u_2))$ to their neighbors. Every vertex then aggregate the tokens based on the second coordinate, the 2-tuple $(u_1, u_2)$. If there are more than 2 tokens with the same second coordinate, all \emph{but two of them} are bounced back to their originators. Vertices whose tokens were bounced back are removed from $G$. 

Note that the eliminations of $2$-stars and $3$-double-stars do not change the size of the maximum matching. Moreover, we have the following property:

\begin{lemma}[{\cite[Lemma 6]{czygrinow2006}}]\label{lem:linear_matching} Let $G = (V,E)$ be a planar graph with $n = |V|$ and no isolated vertices. If $G$ contains no 2-stars and 3-double-stars then the size of the maximum matching of $G$ is $\Omega(n)$. \end{lemma}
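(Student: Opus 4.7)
The plan is to take an arbitrary maximum matching $M$ in $G$, set $A := V(M)$ and $U := V \setminus A$, and prove $|A| = \Omega(n)$, which immediately gives $|M| = |A|/2 = \Omega(n)$. By the maximality of $M$, the set $U$ is independent, and since $G$ has no isolated vertices, every $u \in U$ has at least one neighbor, all of which must lie in $A$. The entire argument is a structural counting that bounds $|U|$ in terms of $|A|$ by exploiting planarity together with the hypotheses that $G$ contains no $2$-star and no $3$-double-star.

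First I would partition $U = U_1 \cup U_2 \cup U_{\ge 3}$ by the degree of each vertex in $G$ and bound each part separately. For $U_1$, the no-$2$-star hypothesis says that no vertex of $G$ has two distinct degree-$1$ neighbors, so the map sending each $u \in U_1$ to its unique neighbor in $A$ is injective and yields $|U_1| \le |A|$. For $U_2$, I would form an auxiliary multigraph $H$ on the vertex set $A$ by placing, for every $u \in U_2$, a single edge between the two neighbors of $u$ in $A$; this is precisely the suppression of degree-$2$ vertices in a plane embedding of the relevant subgraph of $G$, so $H$ is planar, and because three vertices of $U_2$ having the same pair of $A$-neighbors would form a $3$-double-star in $G$, the multiplicity of $H$ is at most $2$, from which standard Euler-formula bookkeeping yields $|U_2| \le 2(3|A| - 6) = O(|A|)$. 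For $U_{\ge 3}$, the bipartite subgraph of $G$ between $U$ and $A$ is simple and planar, hence has at most $2n - 4$ edges, and counting degrees on the $U$ side gives $3 |U_{\ge 3}| \le \sum_{u \in U} \deg_G(u) \le 2n - 4$, whence $|U_{\ge 3}| \le 2n/3$.

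Combining the three bounds with the identity $|U| + |A| = n$ yields
\[
n - |A| \;=\; |U_1| + |U_2| + |U_{\ge 3}| \;\le\; O(|A|) + \tfrac{2n}{3},
\]
which rearranges to $|A| = \Omega(n)$ and therefore $|M| = \Omega(n)$, as required.

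The main obstacle is the planar multigraph step for $U_2$. Two things must be verified: that the suppression multigraph $H$ inherits a planar embedding from $G$ (which follows because suppressing a degree-$2$ vertex is the inverse of an edge subdivision and preserves planar embeddability), and that the edge multiplicity of $H$ is at most $2$ (which is exactly the no-$3$-double-star hypothesis specialized to $U_2$ vertices, whose neighborhoods live in $A$). The bounds for $U_1$ and $U_{\ge 3}$ are routine consequences of the no-$2$-star hypothesis and of Euler's formula for simple bipartite planar graphs, respectively.
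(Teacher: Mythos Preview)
The paper does not give its own proof of this lemma; it simply cites \cite[Lemma~6]{czygrinow2006} and uses the statement as a black box. So there is no ``paper's proof'' to compare against.

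Your argument is correct. The three pieces are sound: the injection $U_1\to A$ follows directly from the no-$2$-star hypothesis; the suppression multigraph $H$ is planar (as a minor of the planar bipartite subgraph between $U_2$ and $A$), has no loops (since $G$ is simple), and has edge multiplicity at most $2$ by the no-$3$-double-star hypothesis, so its underlying simple graph gives $|U_2|\le 2(3|A|-6)$ for $|A|\ge 3$; and the bipartite planar bound $|E(U,A)|\le 2n-4$ handles $U_{\ge 3}$. Putting these together yields $n-|A|\le 7|A|+\tfrac{2n}{3}+O(1)$, hence $|A|\ge n/24-O(1)$ and $|M|=\Omega(n)$. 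The only cosmetic point is to note that the small-$|A|$ edge cases ($|A|\le 2$) are absorbed in the asymptotic statement.
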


By \cref{lem:linear_matching}, the size of the maximum matching of $|\bar{V}|$ is at least $c \cdot |\bar{V}|$ for some constant $c > 0$. Czygrinow, Ha{\'n}{\'c}kowiak, and Wawrzyniak~\cite{czygrinow2008fast} used \cref{lem:linear_matching} to design an  $O(\log^\ast n)$-round deterministic algorithm for computing a $(1-\eps)$-approximate maximum cardinality matching in the \local model, for any constant $\eps > 0$. 
We show that \cref{thm:routing-main} allows us to obtain efficient matching algorithms in the \congest model as well.

Now we run the algorithm of \cref{thm:routing-main} on $\bar{G}$ with parameter $\eps' = c \cdot \eps$ to partition the vertices into $\bar{V} = V_1 \cup V_2 \cup \ldots \cup V_k$. For each $V_i$, we route the entire graph topology of $\bar{G}[V_i]$ into  $v^{*}_i$ and let it compute the maximum matching $M_i$ of $\bar{G}[V_i]$ locally. We claim that the union of the matching $M = M_1 \cup   M_2 \cup \ldots \cup M_k$ is an $(1-\eps)$-approximate maximum matching.

Let $M^{*}$ be a maximum matching. Let $M^{*}_i = M^{*} \cap (V_i \times V_i)$ be $M^{*}$ restricted to $V_i$. We have
\begin{allowdisplaybreaks}
\begin{align*}
|M| &= \sum_{i=1}^{k} |M_i| \geq \sum_{i=1}^{k} |M^{*}_i| = |M^{*}| - (\mbox{\# inter-cluster $M^{*}$-edges}) \geq |M^{*}| - \eps' \cdot |\bar{V}| \\
&= |M^{*}| - \eps \cdot  c \cdot |\bar{V}| \geq |M^{*}| - \eps |M^{*}| = (1-\eps) |M^{*}|.
\end{align*}
\end{allowdisplaybreaks}
Hence we conclude the following theorem.
\begin{theorem}
A $(1-\eps)$-approximate maximum matching of a planar network $G$ can be computed in  $\eps^{-O(1)} \log^{O(1)} n$ rounds with high probability and $\eps^{-O(1)} 2^{O(\sqrt{\log n \log \log n})}$  rounds deterministically in the \congest model.
\end{theorem}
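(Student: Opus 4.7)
The plan is to follow the outline already sketched in the narrative preceding the theorem, filling in the small verifications that are glossed over. First I would implement the preprocessing step that produces $\bar G$ from $G$: the two elimination procedures (for $2$-stars and $3$-double-stars) each take $O(1)$ rounds in the \congest model since every vertex only needs to communicate with its neighbors and forward/bounce constant-size tokens. Because only degree-$1$ and degree-$2$ vertices are removed, and the removed vertices have matching partners retained in $\bar G$, one checks directly that the maximum matching size is preserved: any maximum matching of $G$ can be modified, without decreasing its size, to use only edges incident to the retained vertices. Since $G$ is planar and taking subgraphs preserves planarity, $\bar G$ is planar, hence in particular $H$-minor-free for $H = K_5$, so \cref{thm:routing-main} applies to $\bar G$.

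Next I would invoke \cref{lem:linear_matching} to obtain a constant $c > 0$ (depending only on planarity) such that the maximum matching $M^\ast$ in $\bar G$ has $|M^\ast| \geq c |\bar V|$. Apply \cref{thm:routing-main} to $\bar G$ with parameter $\eps' = c\eps$, producing a partition $\bar V = V_1 \cup \cdots \cup V_k$ with at most $\eps' |\bar V|$ inter-cluster edges, in $(\eps')^{-O(1)} \log^{O(1)} n$ rounds w.h.p.\ or $(\eps')^{-O(1)} 2^{O(\sqrt{\log n \log \log n})}$ rounds deterministically. For each cluster, the leader $v_i^\ast$ learns the entire topology of $\bar G[V_i]$ via the routing guarantee of \cref{thm:routing-main}, computes a maximum matching $M_i$ of $\bar G[V_i]$ locally using any sequential algorithm, and broadcasts each edge assignment back to the appropriate endpoints in the cluster.

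The output matching is $M = M_1 \cup M_2 \cup \cdots \cup M_k$. Validity is immediate since the clusters are vertex-disjoint. For the approximation ratio, restrict $M^\ast$ to each cluster: writing $M^\ast_i = M^\ast \cap (V_i \times V_i)$, each $M^\ast_i$ is a matching in $\bar G[V_i]$ so $|M_i| \geq |M^\ast_i|$, giving
\[
|M| \;=\; \sum_{i=1}^{k} |M_i| \;\geq\; \sum_{i=1}^{k} |M^\ast_i| \;\geq\; |M^\ast| - \eps' |\bar V| \;\geq\; |M^\ast| - \eps \cdot c |\bar V| \;\geq\; (1-\eps)|M^\ast|,
\]
and since the maximum matching sizes of $G$ and $\bar G$ coincide, this is a $(1-\eps)$-approximate maximum matching of $G$.

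The main obstacles are minor but worth flagging. The first is verifying that the $2$-star and $3$-double-star elimination truly preserves the matching number; this requires a short combinatorial argument that any maximum matching can be rerouted to avoid the deleted leaves/pendants. The second is making sure the preprocessing does not destroy the hypothesis required by \cref{thm:routing-main}, which is only planarity, and this is immediate since $\bar G \subseteq G$. Everything else is bookkeeping: the round complexities add only an $O(1)$ preprocessing overhead and a single invocation of \cref{thm:routing-main}, yielding the claimed bounds.
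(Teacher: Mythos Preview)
Your proposal is correct and follows essentially the same approach as the paper: preprocess via the $2$-star and $3$-double-star eliminations of \cite{czygrinow2006}, invoke \cref{lem:linear_matching} to get $|M^\ast| \geq c|\bar V|$, apply \cref{thm:routing-main} with $\eps' = c\eps$, and verify the approximation ratio via exactly the chain of inequalities you wrote. The only detail you might add is that $\bar G$ should also have no isolated vertices (required by \cref{lem:linear_matching}), which is harmless since isolated vertices can be discarded without affecting the matching.
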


In \cref{sec:matching} we will generalize this result to the more difficult maximum weighted matching problem  for an arbitrary $H$-minor-free graph to prove \cref{thm:matching}.

\subsection{Correlation Clustering}\label{sec:correlation_clustering}

In the agreement maximization correlation clustering problem, the edge set is partitioned into $E = E^+ \cup E^-$ two parts, and the goal is to compute a clustering of the vertices $V = V_1 \cup V_2 \cup \cdots \cup V_k$ maximizing $\sum_{i=1}^k |E^+ \cap (V_i \times V_i)| + \sum_{1 \leq i < j \leq k} |E^+ \cap (V_i \times V_j)|$, which is the number of intra-cluster $E^+$-edges plus the number of inter-cluster $E^-$-edges.

Given a partition $E = E^+ \cup E^-$ of the edges in $G$, let $\gamma(G)$ denote the optimal value for the agreement maximization correlation clustering problem. Note that $\gamma(G) \geq |E|/2$ if $G$ is connected. This is because if  $|E^+|  \geq |E|/2$,  we can put each vertex as a standalone cluster and get a score of at least $|E|/2$. Otherwise, putting every vertex in the same cluster  yields a score of at least $|E|/2$.


Apply the algorithm of  \cref{thm:routing-main} on $G$ with parameter $\epsilon' = \epsilon / 2$ to partition the vertices into $V_1\ldots V_k$. For each $V_i$, route the entire graph topology of ${G}[V_i]$ into $v^{*}_i$ and  let $v^{*}_i$ compute an optimal correlation clustering $\mathcal{C}_i$ of $G[V_i]$. Let $\mathcal{C}$ be the union of $\mathcal{C}_1, \mathcal{C}_2, \ldots, \mathcal{C}_k$. Let $\mathcal{C}^{*}$ be an optimal clustering and $\mathcal{C}^{*}_i$ be the restriction of $\mathcal{C}^{*}$ to $V_i$. Formally, $\mathcal{C}^{*}_i$ is constructed by adding $C \cap V_i$ to $\mathcal{C}^{*}_i$ for each cluster $C \in \mathcal{C}^{*}$ such that $C \cap V_i \neq \emptyset$.

Using the fact that $\gamma(G) \geq |E|/2$ and $\epsilon' = \epsilon / 2$, we have
\begin{align*}
\score(\mathcal{C}) &\geq \sum_{i=1}^{k}  \score(\mathcal{C}_i) \geq \sum_{i=1}^{k} \score(\mathcal{C}^{*}_i)\geq \score(\mathcal{C}^{*}) - \epsilon' |E| \geq \gamma(G) - \epsilon \gamma(G) = (1-\epsilon)\gamma(G).
\end{align*}

Hence we conclude the following theorem. Note that the requirement that $G$ is $H$-minor-free is only used in applying \cref{thm:routing-main}.

\thmclustering*

\subsection{Property Testing}\label{sec:property_testing}

We design an efficient algorithm for testing an arbitrary minor-closed property  $\mathcal{P}$ that is closed under taking disjoint union. This covers many natural graph classes, including planar graphs, outerplanar graphs, graphs with treewidth at most $w$, and $H$-minor-free graphs for a fixed {connected} graph $H$.

We pick $s$ to be the smallest positive integer such that $K_s \notin \mathcal{P}$, i.e.~the $s$-vertex clique does not have property $\mathcal{P}$. If such a number $s$ does not exist, then $\mathcal{P}$ contains the set of all cliques. Since $\mathcal{P}$ is minor-closed and any finite graph is a minor of some clique, $\mathcal{P}$ must be the trivial property that contains all graphs, in which case we have a trivial property tester that works by letting each vertex output {\sf Accept}.



\paragraph{Algorithm}
From now on we assume that $s$ exists, and we let $H = K_s$ be the $s$-clique.
Our property testing algorithm applies \cref{thm:routing-main} under the assumption that the underlying graph is $H$-minor-free, and then each vertex $v$ makes its decision as follows.
\begin{itemize}
    \item Suppose that the routing algorithm of \cref{thm:routing-main} works successfully for a cluster $V_i$, then $v_i^\ast$ knows the graph topology of $G[V_i]$.
We let $v_i^\ast$ locally check whether $G[V_i]$ has property $\mathcal{P}$ and broadcast the results to all vertices in $V_i$. If $G[V_i]$ does not have property $\mathcal{P}$, then  all vertices in $V_i$  outputs {\sf Reject}, otherwise all vertices in $V_i$  outputs {\sf Accept}.
\item Suppose that the routing algorithm of \cref{thm:routing-main} does not work successfully for a cluster $V_i$. If it fails because the condition $\deg_{G_i}(v_i^\ast)  = \Omega(\phi^2) |E_i|$ is not met, then all vertices in $V_i$  outputs {\sf Reject}, otherwise all vertices in $V_i$  outputs {\sf Accept}.
\end{itemize}

Here we recall from the discussion in  \cref{sec:framework_summary} that each cluster $V_i$ is able to check whether the routing algorithm works successfully and whether the condition $\deg_{G_i}(v_i^\ast)  = \Omega(\phi^2) |E_i|$ is met. Therefore, each vertex $v$ is able to decide whether to output  {\sf Accept} or {\sf Reject} in the above algorithm.

\paragraph{Analysis}
Suppose $G$ has property $\mathcal{P}$.
Because $\mathcal{P}$ is minor-closed, $G[V_i]$ also has property $\mathcal{P}$.
Moreover, $G \in \mathcal{P}$ implies that $G$ is $H$-minor-free.
Recall the discussion in \cref{sec:framework_summary} that the condition $\deg_{G_i}(v_i^\ast)  = \Omega(\phi^2) |E_i|$ is not met only when $G$ is not $H$-minor-free. Therefore, from the description of our algorithm, all vertices will output {\sf Accept} with probability one.

Suppose $G$ is $\eps$-far from having property $\mathcal{P}$. There are two cases. The first case is that the algorithm of \cref{thm:routing-main} does not fail. Recall that the algorithm of \cref{thm:routing-main} is based on an expander decomposition algorithm with parameter $\eps'\leq \eps$. Therefore, as long as the execution of the expander decomposition algorithm is successful, the number of inter-cluster edges is at most $\eps|E|$, regardless of whether the input graph $G$ is $H$-minor-free or not. Since $G$ is $\eps$-far from having property $\mathcal{P}$, the graph $G'$ resulting from removing all inter-cluster edges also does not have property $\mathcal{P}$.
Since $G'$ is the disjoint union of all clusters  $G[V_i]$ and $\mathcal{P}$ is closed under taking disjoint union, there must be at least one cluster $G[V_i]$ that does not have property $\mathcal{P}$, so all vertices $V_i$ in this cluster will output {\sf Reject}, as required.

The second case is that the algorithm of \cref{thm:routing-main} fails.  If it fails because the condition $\deg_{G_i}(v_i^\ast)  = \Omega(\phi^2) |E_i|$ is not met, then all vertices in $V_i$  outputs {\sf Reject}, as required. As discussed in  \cref{sec:framework_summary}, the probability that  algorithm of \cref{thm:routing-main} fails due to other reasons is at most $1/ \poly(n)$, so the probability that all vertices in the graph output {\sf Accept} is at most  $1/ \poly(n)$. 
In particular, a failure in the expander decomposition algorithm might cause the number of inter-cluster edges to be significantly higher than $\eps|E|$, potentially causing all $G[V_i]$ to have property $\mathcal{P}$. Although such a failure might not be detected, it occurs with probability at most  $1/ \poly(n)$.

Hence we conclude the following theorem. Note that in the randomized setting our algorithm  has one-sided error in that all vertices output {\sf Accept}  with probability one if $G$ has property $\mathcal{P}$.

\thmtesting*

\paragraph{Lower Bound}  We give a concrete example of a minor-closed property  $\mathcal{P}$ that is \emph{not} closed under taking disjoint union and requires $\Omega(n)$ rounds to test even for constant $\eps > 0$ and in the \local model. Therefore, the  requirement in \cref{thm:testing} that the graph property is closed under taking disjoint union is, in a sense, necessary.  

Let  $\mathcal{P}$ be any graph property that is
 not closed under taking disjoint union. Then there exist two graphs $G_1 \in \mathcal{P}$ and $G_2 \in \mathcal{P}$ such that their disjoint union $G' = G_1 \cup G_2$ does not have property  $\mathcal{P}$. Therefore, if we are in the setting where we allow the underlying network $G$ to be disconnected and that the number of vertices $n=|V|$ is not a global knowledge, then it is impossible to decide whether the underlying graph has property  $\mathcal{P}$, because a vertex in $G_1$ can never know whether the underlying network $G$ is $G_1$ or $G_1 \cup G_2$.

For the case $G$ is guaranteed to be connected and $n=|V|$ is a global knowledge, we give a concrete example of a minor-closed property  $\mathcal{P}$ that is not closed under taking disjoint union and requires $\Omega(n)$ rounds to test even for constant $\eps > 0$ and in the \local model. 

Let $H$ be a $5$-clique on the vertex set $\{v_1, v_2, v_3, v_4, v_5\}$. Let $H_1$ be the result of removing the two edges $\{v_1, v_2\}$ and $\{v_1, v_3\}$ from $H$.  Let $H_2$ be the result of removing the two edges $\{v_1, v_2\}$ and $\{v_2, v_3\}$ from $H$. Let $\tilde{H} = H_1 \cup H_2$ be the disjoint union of $H_1$ and $H_2$. Let $\mathcal{P}$ be the set of all $\tilde{H}$-minor-free graphs. It is clear that  $\mathcal{P}$ is minor-closed and it is not closed under taking disjoint union. For example, we have $H_1 \in \mathcal{P}$ and $H_2 \in \mathcal{P}$, but their union $\tilde{H} = H_1 \cup H_2$  does not have property $\mathcal{P}$.
Let $t$ be any positive integer and let $s = 2t+1$. For any $i \in \{1,2\}$ and $j\in\{1,2\}$, define the graph $G_{i,j}^t$ as follows. 
\begin{itemize}
    \item Start with an $3s$-vertex path $P=(u_1, u_2, \ldots, u_{3s})$. 
    \item For each $1 \leq x \leq s$, attach a copy of $H_i$ to $u_x$ by adding an edge connecting $u_x$ and the vertex $v_1$ of $H_i$. 
    \item For each $2s+1 \leq x \leq 3s$, attach a copy of $H_j$ to $u_x$ by adding an edge connecting $u_x$ and the vertex $v_1$ of  $H_j$.
\end{itemize}

It is clear that $G_{1,1}^t$, $G_{2,2}^t$, and $G_{1,2}^t = G_{2,1}^t$ have the same number of vertices $n = 23s$ and the same number of edges $m = 27s-1$. Both $G_{1,1}^t$ and  $G_{2,2}^t$ are $\tilde{H}$-minor-free. For $G_{1,2}^t = G_{2,1}^t$, at least $s$ edges need to be removed to turn it into a $\tilde{H}$-minor-free graph, so  $G_{1,2}^t = G_{2,1}^t$ is $\eps$-far from being $\tilde{H}$-minor-free with $\eps = 1/27$.

Suppose we have a deterministic distributed property testing algorithm $\mathcal{A}$ for $\mathcal{P}$ with $\eps = 1/27$. We claim that $\mathcal{A}$ needs more than $t$ rounds on graphs with $n= 23s$ vertices. The reason is that the output of at least one vertex $v$ in $G_{1,2}^t = G_{2,1}^t$ must be {\sf Reject}. However, our construction of $G_{i,j}^t$ guarantees that the radius-$t$ neighborhood of $v$ in $G_{1,2}^t = G_{2,1}^t$ must be isomorphic to the radius-$t$ neighborhood of a vertex $u$ in  $G_{1,1}^t$ or $G_{2,2}^t$. Therefore, if the number of rounds of $\mathcal{A}$ is at most $t$, then  {\sf Reject} is also a possible output when we run $\mathcal{A}$ on $G_{1,1}^t$ or $G_{2,2}^t$, so $\mathcal{A}$ cannot be correct. This lower bound also extends to the randomized setting. We omit this extension because it is straightforward and tedious.

\subsection{Low-diameter Decompositions}\label{sec:low_diam_decomposition}
 
Using \cref{thm:routing-main}, we design an efficient algorithm that finds a partition of the vertex set $V = V_1 \cup V_2 \cup \cdots \cup V_k$ such that the number of inter-cluster edges $\sum_{1\leq i \leq k}|\partial(V_i)|/2$ is at most $\eps |E|$ and the diameter of the induced subgraph $G[V_i]$ is at most $D = O(\eps^{-1})$ for each $1 \leq i \leq k$.

We first run \cref{thm:routing-main} with parameter $\tilde{\eps} = \eps/2$ to obtain a clustering $V = V_1 \cup V_2 \cup \cdots \cup V_k$ such that the number of inter-cluster edges is at most $\tilde{\eps}  |E| \leq \eps|E|/2$. We then refine the cluster $V_i$ by letting $v_i^\ast$ compute a low-diameter decomposition of $G[V_i]$ with $\tilde{\eps}  = \eps/2$ and $\tilde{D}=  O(\tilde{\eps}^{-1})$ using any known sequential algorithm~\cite{klein1993excluded,Fakcharoenphol2003improved,Ittai2019padded} for this task. Hence each cluster in the final clustering has diameter $O(\eps^{-1})$. This step introduces at most $\eps|E|/2$ inter-cluster edges, so the total number of inter-cluster edges is at most $\eps|E|/2 + \eps|E|/2 = \eps|E|$, as required.

\thmdecomp*

\section{Maximum Weighted Matching}\label{sec:matching}

In this section, we show how to apply our framework to get a $\poly(\log n, 1/\epsilon)$-round algorithm in the \congest model for $(1-\epsilon)$-\MWM{} in $H$-minor free graphs. 

\subsection{Preliminaries}


The input graph is a weighted $H$-minor free graph $G = (V,E,\hat{w})$, where $\hat{w}:E \to \{1 ,\ldots, W\}$ and $\max_{X \subseteq V}|E(X)|/|V(X)| \leq C_H$ for some constant $C_H \geq 1$. Note that $W$ is an globally-known upper bound on the weight of the edges. Since we are looking for a $(1-\epsilon)$-approximation, we may assume without loss of generality that $W = \poly(n)$ (See \cite[Section 2]{DuanP-approxMWM}). Let $\epsilon' = \Theta(\epsilon)$ be a parameter that we will choose later. Assume without loss of generality that $W$ and $\epsilon'$ are powers of two. We can make the assumption because if $W$ is not a power of two, we can simply set $W$ to be the next smallest power of two. Similarly for $\epsilon'$, but for the next largest power of two. Let $L = \lg W$, $\delta_0 = \epsilon' W$ and $\delta_i = \delta_0 / 2^{i}$ for $1 \leq i \leq L$.

Similar to \cite{DuanP-approxMWM}, we maintain the following variables throughout the algorithm, with the addition of $\Delta w(\cdot)$ -- the weight modifier:

{\centering
\begin{tabular}{lll}
$M$:& The set of matched edges. &\\
$y(u)$: & The dual variable defined on each vertex $u \in V$. &  \\
$z(B)$: & The dual variable defined on each $B \in \mathcal{V}_{odd}$, where $\mathcal{V}_{odd} = \{ V' \subseteq V \mid \mbox{$|V'|$ is odd} \}$. & \\
$\Omega$:& $\Omega \subseteq \mathcal{V}_{odd}$ is the set of active blossoms.\\
$\Delta w (e):$ & The weight modifier defined on each edge $e \in E$. &
\end{tabular}
}

\paragraph{Matchings and Augmenting Paths} Given a matching $M$, a vertex is {\it free} if it is not incident to any edge in $M$. An {\it alternating path} is a path whose edges alternate between $M$ and $E \setminus M$. An {\it augmenting path} $P$ is an alternating path that begins and ends with free vertices. Given an augmenting $P$, let $M\oplus P = (M \setminus P) \cup (P \setminus M)$ denote the resulting matching after we augment along $P$. Note that we must have $|M \oplus P| = |M| + 1$.

\paragraph{Dual Variables} The variables $y$ and $z$ are called the {\it dual variables}, because they correspond to the variables in the dual linear program of Edmonds' formulation \cite{Edmonds65}. For convenience, given an edge $e = uv$, we define $$yz(e)=y(u)+y(v)+\sum_{B \in \mathcal{V}_{odd} : e \in B} z(B)$$

\paragraph{Blossoms} A blossom is specified with a vertex set $B$ and an edge set $E_{B}$.  A trivial blossom is when $B = \{v \}$ for some $v \in V$ and $E_{B}=\emptyset$. A non-trivial is defined recursively: If there are odd number of blossoms $B_0 \ldots B_{\ell}$ connected as an odd cycle by $e_i \in B_{i} \times B_{[(i+1)\mod (\ell+1)]}$ for $0 \leq i \leq \ell$, then $B = \bigcup_{i=0}^{\ell} B_i$ is a blossom with $E_B = \{e_0 \ldots, e_{\ell} \} \cup \bigcup_{i=0}^{\ell} E_{B_i} $. It can be shown inductively that $|B|$ is odd and so $B \in \mathcal{V}_{odd}$. A blossom is {\it full} if $|M \cap E_{B}| = (|B| -1 )/2$. The only vertex that is not adjacent to the matched edges in a full blossom is called the {\it base} of $B$. Note that $E(B) = \{(u,v) \mid u,v \in B\}$ may contain edges not in $E_{B}$. 

A blossom is \emph{active} whenever $z(B) > 0$. Only full blossoms can become active. Thoughout the execution, $\Omega$ is the set of active blossoms, which forms a laminar (nested) family and can be represented by a set of rooted trees. The leaves of the trees are the trivial blossoms. If a blossom $B$ is defined to be the cycles formed by $B_0,\ldots, B_{\ell}$, then $B$ is the parent of $B_0,\ldots, B_{\ell}$. The blossoms that are represented by the roots are called the {\it root blossoms}. 

Given $\Omega$, let $G / \Omega$ denote the unweighted simple graph obtained by contracting all the root blossoms in $\Omega$. The following are some basic properties about the augmenting paths on the contracted graph:

\begin{lemma}(Folklore, summarized by \cite[Lemma 2.1]{DuanP-approxMWM}) Let $\Omega$ be a set of full blossoms with respect to a matching $M$.
\begin{enumerate}[leftmargin=*,itemsep=-1ex, topsep = 0pt,partopsep=1ex,parsep=1ex]
\item If $M$ is a matching in $G$, then $M / \Omega$ is a matching in $G/ \Omega$.
\item Every augmenting path $P'$ relative to $M/\Omega$ in $G /\Omega$ extends to an augmenting path $P$ relative to $M$ in $G$.
\item Let $P'$ and $P$ be mentioned as in (2). Then $\Omega$ remains a valid set of full blossoms.
\end{enumerate}
\end{lemma}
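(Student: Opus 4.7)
The plan is to establish a single structural ingredient about full blossoms and then derive all three parts from it. The key fact I will call the \emph{blossom path property}: for any full blossom $B \in \Omega$ with base $b$ and any vertex $v \in B$, there is an $M$-alternating walk inside $E_B$ from $b$ to $v$ of even length that begins with an edge in $E_B \setminus M$ incident to $b$ and, if $v \neq b$, ends with an edge in $M$ incident to $v$. I would prove this by induction on the depth of $B$ in the laminar family $\Omega$. The base case of a trivial blossom is immediate with the empty walk. For the inductive step, if $B$ is formed from sub-blossoms $B_0, \ldots, B_\ell$ joined in an odd cycle by edges $e_0, \ldots, e_\ell$, then any target $v$ lies in some $B_i$. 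The target sub-blossom is reachable from the sub-blossom containing $b$ along one of the two arcs of the odd cycle, and the parity of that arc, combined with the inductive alternating walks inside each traversed $B_j$, can be chosen so that the overall parity is even with the correct matched/unmatched pattern at the endpoints.

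Part (1) follows directly from fullness. Since each root blossom $B$ satisfies $|M \cap E_B| = (|B|-1)/2$, the base $b$ is the unique vertex of $B$ not saturated by $E_B$-matching edges, so any matched edge of $G$ incident to $B$ must be incident to $b$ and leave $B$. Thus in $G/\Omega$, each root-blossom vertex is incident to at most one edge of $M/\Omega$, while vertices outside all root blossoms retain their original matching incidence, so $M/\Omega$ is a matching.

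For part (2), I will build $P$ by walking along $P'$ and splicing in blossom path segments whenever $P'$ passes through a contracted root blossom. Suppose first that a contracted blossom vertex $b_B$ appears interior to $P'$, with matched $P'$-edge $e_M$ and unmatched $P'$-edge $e_U$ incident to it. In $G$, $e_M$ corresponds to the unique matched edge leaving $B$, which is incident to the base $b$, while $e_U$ corresponds to some edge of $G$ incident to a vertex $u \in B$. I splice in the blossom path from $b$ to $u$ supplied by the blossom path property; its even length ensures that the matched/unmatched alternation is preserved across the spliced region and that the edges glue correctly to $e_M$ and $e_U$. If $b_B$ is an endpoint of $P'$, then $B$ must be free in $M/\Omega$ and $b$ is unmatched in $G$, and I similarly splice in a blossom path from $b$ to the vertex $u \in B$ where the first external edge of $P$ leaves $B$. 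The resulting $P$ begins and ends at free vertices of $G$, alternates with respect to $M$, and can be shortcut to a simple path if repeated vertices appear, so $P$ is an augmenting path in $G$ relative to $M$.

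For part (3), observe that the spliced segment inside any traversed root blossom $B$ is an even-length $M$-alternating walk, so it contains equally many $M$-edges and non-$M$-edges. After augmenting along $P$, these edges flip status, leaving $|(M \oplus P) \cap E_B|$ equal to $|M \cap E_B| = (|B|-1)/2$, so $B$ remains full with its base relocated from $b$ to the splice endpoint that is no longer saturated. Blossoms not traversed by $P$ are untouched, and sub-blossoms of a traversed root blossom inherit the invariant by applying the same counting argument recursively down the laminar tree. I expect the main obstacle to be the careful bookkeeping in the inductive proof of the blossom path property: choosing the correct arc of the odd cycle and orienting the inductively obtained paths through traversed sub-blossoms so that the matched/unmatched pattern across each connecting edge $e_j$ is consistent requires a case analysis on where the base of each sub-blossom sits within the arc, and is the sort of detail where one easily loses track of parities. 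Everything else in the proof is essentially a clean application of this ingredient together with the local exchange argument for fullness.
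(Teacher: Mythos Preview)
The paper does not prove this lemma at all: it is stated with the citation \cite[Lemma 2.1]{DuanP-approxMWM} and used as a black box, so there is no ``paper's own proof'' to compare against. Your proposal is the standard Edmonds-style argument (the blossom path property, splicing, and the parity count for fullness) and is essentially the proof one finds in Duan--Pettie and in textbook treatments of blossoms.

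One small caution: your remark that the spliced walk ``can be shortcut to a simple path if repeated vertices appear'' is a place where the argument could actually break if taken literally, since shortcutting an alternating walk need not preserve alternation. In the correct version of the construction no shortcutting is needed: root blossoms are vertex-disjoint, and the inductively built path inside each blossom is already simple. It would be cleaner to state the blossom path property as producing a simple alternating \emph{path} (not walk) and to carry simplicity through the induction; then the concatenation in part~(2) is automatically a simple augmenting path. Apart from this, your outline is sound and matches the intended argument.
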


\begin{definition}[Free vertex types]\label{dfn:freevertex}
Let $\hat{F}_{s}$ and $\hat{F}_{b}$ denote the set of free vertices in $V(G)$ that not are contained in any blossoms (singleton) and contained in some blossom in $\Omega$ respectively. Let $\hat{F} = \hat{F}_s \cup \hat{F}_b$. By default, we use the notations to denote such sets of free vertices with respect to the current time of reference. 
\end{definition}

\begin{definition}[Free and regular blossoms]\label{dfn:freeblossom}
For each $B \in \Omega$, we say $B$ is  {\it free} if it contains a free vertex. Otherwise, $B$ is {\it regular}.
\end{definition}

\begin{definition}[Partition of contracted graphs]\label{dfn:contracted}
Let $\Omega$ be the set of active blossoms. Given a vertex set $\hat{V} \subseteq G$, define $$\hat{V} / \Omega = \{v \in G/\Omega \mid \mbox{the set of vertices in $G$ represented by $v$ are fully contained in $\hat{V}$}\}.$$
Given a partition $(\hat{V}_1, \ldots ,\hat{V}_{k})$ of $G$. Let $(\hat{V}_1/ \Omega, \ldots ,\hat{V}_{k}/\Omega)$ be the corresponding set of disjoint vertices in $G/\Omega$. Note that it may be the case that some vertices in $G/\Omega$ are not contained in any $\hat{V}_{i}/\Omega$.

\end{definition}

\begin{figure}
\centering
\includegraphics[scale=0.15]{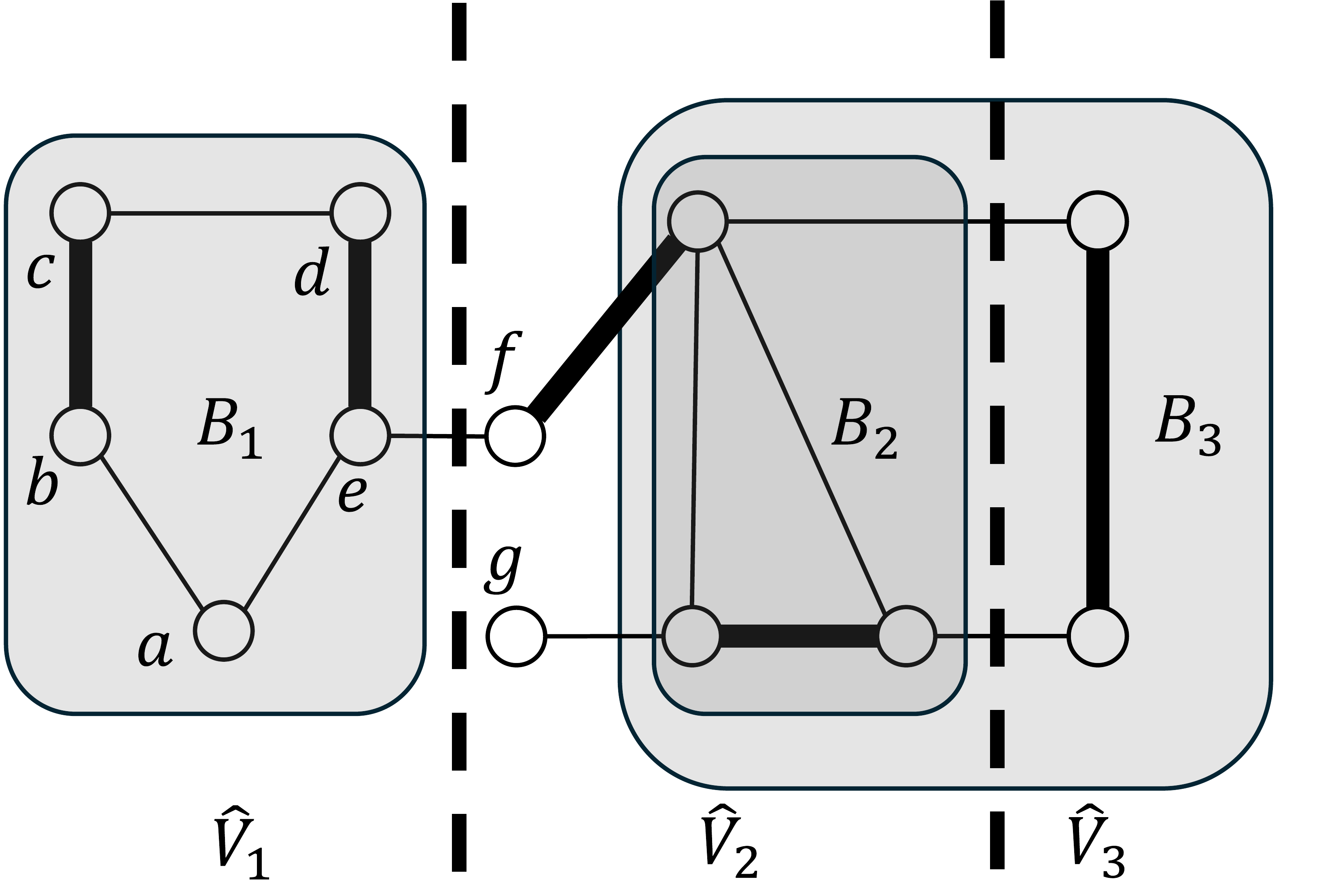}
\caption{\small An example illustrating \Cref{{dfn:freevertex}}, \Cref{dfn:freeblossom}, and \Cref{dfn:contracted}. In this example, $\Omega = \{B_1, B_2, B_3\}$ where $B_2$ is nested inside $B_3$. For the free vertex types, $\hat{F} = \{a,g\}$, $\hat{F}_s = \{g \}$ and $\hat{F}_{b} = \{a\}$. For the blossom types, $B_1$ is a free blossom, whereas $B_2$ and $B_3$ are regular blossoms. Lastly, $\hat{V}_1 / \Omega = \{a,b,c,d,e\}$, $\hat{V}_2 / \Omega = \{f,g\}$, and $\hat{V}_3 /\Omega = \emptyset$.}
\end{figure}

\paragraph{The Weights} Recall that we denote the original weights of the graph by $\hat{w}$. The weight modifier, $\Delta w$, is a new variable we introduce to effectively decouple the components in an expander decomposition. Initially $\Delta w(e)$ is set to 0 for every edge $e \in E$. The effective weight is defined to be $w(e) = \hat{w}(e) + \Delta w (e)$. The weight at scale $i$ is defined to be $w_i(e) = 2^{i} \cdot \lfloor w(e)/ 2^i \rfloor$.

\paragraph{The Invariants} The algorithm consists of $1 + \log_{2} W$ scales, where each scale consists of $O(1/\epsilon)$ iterations. We maintain the variables $M, y, z, \Omega,$ and $\Delta w$ so that they satisfy the relaxed complimentary slackness condition modified from \cite{DuanP-approxMWM} at the end of each iteration of each scale $i$:

\begin{property} (Relaxed Complementary Slackness)\label{prop:RCS}
\begin{enumerate}[topsep=0.5ex,itemsep=-.2ex]
	\item \label{RCS:1} {\bf Granularity.} $z(B)$ and $w_i(e)$ is a non-negative multiple of $\delta_i$ for all $B\in \mathcal{V}_{odd}, e \in E$ and $y(u)$ is a non-negative multiple of $\delta_i/2$ for all $u \in V$.
	\item {\bf Active Blossoms.}  \label{RCS:2} $|M\cap E_B| = \floor{|B|/2}$ for all $B\in \Omega$.  If $B \in \Omega$ is a root blossom then $z(B)>0$; if $B \notin \Omega$ then $z(B) = 0$.  Non-root blossoms may have zero $z$-values.
	\item {\bf Near Domination.} \label{RCS:3} $yz(e)\geq w_i(e)-1.5\delta_i$ for each edge $e=(u,v)\in E$. 
	\item \label{RCS:4}{\bf Near Tightness.} Call a matched edge or blossom edge type $j$ if it was last made a matched edge or a blossom edge in scale $j \leq i$. If e is such a type $j$ edge, then $yz(e) \leq w_i(e) + 3(\delta_j - \delta_i)$
	\item \label{RCS:5}{\bf Free Vertex Duals.} The sum of the $y$-value of the free vertices is at most $\tau \cdot |F| + \epsilon' \cdot \hat{w}(M^{*})$, where $M^{*}$ is an MWM w.r.t.~$\hat{w}$, $F$ is the set of free vertices, $\tau$ is a variable managed by the algorithm (not a variable in the LP) such that $y(v) \geq \tau$ for every $v$. $\tau$ will decrease to 0 when the algorithm ends. 
	 \item \label{item:boundedw} {\bf Bounded Weight Change.} The sum of $|\Delta w(e)|$ is at most $\epsilon' \cdot \hat{w}(M^{*})$.
\end{enumerate}
\end{property}
The main modifications from \cite{DuanP-approxMWM} are the following: 
\begin{itemize}[leftmargin=*]
\item We added \cref{item:boundedw} to impose an upper bound on the weight modification caused by the modifier.
\item We modified \cref{RCS:5} so that the $y$-values of the free vertices is no longer required to be equal. This is because we may freeze a small fraction free vertices to prevent their $y$-values being changed during an iteration. As a result, they are no longer required to be zero in the end. However, the sum of the $y$-values will be upper bounded in the end.
\item We loosened the constants in \cref{RCS:3} and \cref{RCS:4} because of the possible parity difference on $y$-values of the free vertices.
\end{itemize}
\paragraph{The Eligible Graph} The eligible graph in each scale $i$ is the subgraph consists of the ``tight'' edges, which are either blossoms edges or the ones that nearly violate the complementary slackness condition. Such edges are defined as follow:
  
\begin{property} At scale $i$, an edge is {\it eligible} if at least one of the following holds.
\begin{enumerate}[topsep=0.5ex,itemsep=-.5ex]
    \item $e \in E_B$ for some $B \in \Omega$.
    \item $e \notin M$ and $yz(e) \leq w_i(e) - \delta_i$
    \item \label{elig:3} $e \in M$ is a type $j$ edge and $yz(e) \geq w_i(e) + 3(\delta_j - \delta_i) - 0.5\delta_i$ 
\end{enumerate}
\end{property}

Let $E_{elig}$ be the set of eligible edges. We define the eligible graph as $G_{elig} = (V, E_{elig})$. We say $u$ is an {\it eligible neighbor} of $v$ if $(u,v) \in E_{elig}$. Note that we made a modification on \cref{elig:3} from the original definition in \cite{DuanP-approxMWM}. This is because we need to make sure that when we increase $\delta_i$ to an eligible matched edge, it will become ineligible afterwards.  

\begin{observation}\label{obs:aug_gone}
If we augment along an augmenting path $P$ in $G_{elig}/\Omega$, all the edges of $P$ will become ineligible.
\end{observation}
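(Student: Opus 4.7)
The plan is a direct case analysis on the three eligibility conditions, using the fact that augmentation flips the matched status of every edge on $P$ while leaving all dual variables, blossom structure, and the quantity $yz(e)$ unchanged for each $e \in P$. Since $P$ lies in the contracted graph $G_{elig}/\Omega$, every edge of $P$ connects distinct root blossoms (or vertices outside any blossom in $\Omega$), so condition (1) — being an internal edge of some $B\in\Omega$ — does not apply to any $e \in P$ either before or immediately after augmentation. Hence we only need to worry about conditions (2) and (3).

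First I would handle the edges of $P$ that were unmatched prior to augmentation. Their eligibility before augmentation must come from condition (2), giving $yz(e) \le w_i(e) - \delta_i$. After augmentation, such an edge becomes matched, and since it was just made matched in the current scale $i$ it has type $j = i$. Condition (3) with $j = i$ reduces to $yz(e) \ge w_i(e) - 0.5\delta_i$, which is strictly violated by the above inequality. So $e$ is ineligible.

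Next I would handle the edges of $P$ that were matched prior to augmentation. Their eligibility must come from condition (3), so $yz(e) \ge w_i(e) + 3(\delta_j - \delta_i) - 0.5\delta_i \ge w_i(e) - 0.5\delta_i$, using $j \le i$ which gives $\delta_j \ge \delta_i$. After augmentation such an edge becomes unmatched; condition (2) would require $yz(e) \le w_i(e) - \delta_i$, which is again strictly violated. So $e$ is ineligible as well.

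The main obstacle is more bookkeeping than genuine difficulty: one needs to verify that the gap between the ``below $w_i(e)$'' threshold of condition (2) and the ``above $w_i(e)$'' threshold of condition (3) is strictly positive, which was precisely the purpose of the modification to condition (3) noted after the definition of $E_{elig}$. The only other point to check is that no new active blossom containing $e$ is created by the augmentation itself; this is immediate because blossom creation is handled in a separate step from augmentation, and the set $\Omega$ is unchanged during the augmentation. Together these two cases exhaust the edges of $P$, completing the proof.
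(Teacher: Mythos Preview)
The paper states this as an \emph{observation} and gives no proof at all; it is treated as an immediate consequence of the eligibility definitions. Your case analysis is exactly the verification one would write out: an edge of $P$ lies outside every $B\in\Omega$ (so condition (1) is irrelevant before and after), the augmentation flips matched/unmatched status without touching $y,z,\Omega$ or $yz(e)$, and the strict gap between the threshold $w_i(e)-\delta_i$ in condition (2) and the threshold $w_i(e)+3(\delta_j-\delta_i)-0.5\delta_i\ge w_i(e)-0.5\delta_i$ in condition (3) then rules out eligibility in either direction. Your remark that the modification of condition (3) is precisely what creates this gap is on point, and your check that $\Omega$ is unchanged during augmentation closes the only remaining loophole. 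So your proof is correct and is the natural expansion of what the paper leaves implicit.
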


\begin{definition}[Inner, outer, and reachable vertices]
Let $F$ denote the set of free vertices in $G_{elig} / \Omega$. Let $V_{in}$ and $V_{out}$ denote the set of vertices in $G_{elig}/ \Omega$ that are reachable from $F$ with odd-length augmenting paths and even-length augmenting paths respectively. Let $R = V_{in} \cup V_{out}$. Let $\hat{V}_{in}, \hat{V}_{out},$ and $\hat{R}$ denote the set of original vertices in $G$ that are represented by $V_{in}, V_{out},$ and $R$ respectively. 
\end{definition}


\subsection{The Algorithm}\label{sec:alg} The Duan-Pettie algorithm consists of $O(\log W)$ scales. In each scale, the goal is to reduce the $y$-value of the free vertices roughly by half, while maintaining \cref{prop:RCS} (the Relaxed Complementary Slackness condition) with respect to the scale. This is done by executing $O(1/\epsilon)$ iterations of the augmentation step, the blossom shrinking step, the dual adjustment step, and the blossom dissolution step. Each step is done on the eligible graph $G_{elig}$. However, it is unclear whether the steps can be efficiently implemented in the \congest model. The basic idea is to run expander decomposition on $G_{elig}$. Then, make the inter-component edges ineligible by adding $\delta_i$ to or subtracting $\delta_i$ from the weight modifier $\Delta w(e)$ for each inter-component edge $e$. Now each component of the eligible graph is fully contained in some component of the expander decomposition so it would be possible employ \cref{thm:routing-main} to perform those steps locally on some node in each component. 

There are a few challenges to overcome in order to implement such a approach. First, the Duan-Pettie algorithm maintains a set of active blossoms in $\Omega$. The expander decomposition may cut through the active blossoms. The trick of modifying the weights to make the edges ineligible no longer works because blossoms edges are always eligible.  To preserve the eligibility of blossom edges, we shift the cut that goes through active blossoms by the procedure $\textsc{Cut}(\cdot)$ (\Cref{alg:cut}) . In the procedure, if a cut goes through a regular blossom, the cut will be transferred to the matched edge incident to the base of the blossom. If a cut goes through a free blossom, the free vertex inside the blossom will be temporarily frozen for this iteration. This is done by adding a dummy vertex with a dummy matched edge between the dummy vertex and the free vertex. Note that the frozen free vertices are also the reason why we have to modify \cref{RCS:5}, and subsequently \cref{RCS:3} and \cref{RCS:4} in \cref{prop:RCS}. Overall, as we will show, such shifting will not increase the error. Moreover, it will effectively cut inter-component augmenting paths and prevent inter-component blossoms to be formed in the eligible graph.

\begin{figure}[ht]
\centering
\framebox{
\begin{minipage}{0.97\textwidth}
\small
\noindent $\textsc{Cut}(\hat{V}_1, \ldots, \hat{V}_k)$:\\

Let $C$ denote the inter-component edges of the partition $\{\hat{V}_i \}_{i=1}^{k}$.
    \begin{itemize}
         \item  For each $e\in C$, let $B_e \in \Omega$ denote the root blossom (if any) containing $e$. \\ Define $h(e) = \begin{cases}e & \mbox{if $e$ is not contained in any active blossoms. } \\e' & \mbox{if $B_e$ is regular and $e'$ is the matched edge leaving $B_e$} \\ f & \mbox{if $B_e$ is free and $f$ is the free vertex contained in $B_e$} \end{cases} $
            \item  For each $e \in h(C) \cap E_{elig}$, set $\Delta w(e) \leftarrow \begin{cases} \Delta w(e) - \delta_i &\mbox{if $e \notin M$}\\ \Delta w(e) + \delta_i  & \mbox{if $e \in M$}\end{cases}$
            \item  For each $f \in h(C) \cap \hat{F}_{b}$, create a temporary dummy vertex $f'$ and add a temporary matched edge with weight 0 between $f$ and $f'$. 
        \end{itemize}
\end{minipage}
}
\caption{The procedure $\textsc{Cut}(\hat{V}_1, \ldots, \hat{V}_k)$.}\label{alg:cut}
\end{figure}

Second, when running the expander decomposition with parameter $\epsilon''$ on $G_{elig}$, \Cref{thm:routing-main} only guarantees that the number of inter-component edges does not exceed $\epsilon'' |E_{elig}|$. This does not guarantee, however, the amount of adjustment to the weight modifier is small relatively to the weight of the optimal matching. Roughly speaking, we will need the number of inter-component edges to be upper bounded by $O(\epsilon'' |M|)$ in order to have such a guarantee, where $M$ is the size of the current matching. In general, it can be the case that $ |E_{elig}| \gg |M|$ as it is possible some vertex is connected to many free vertices. 

Therefore, we will have to process the graph before running the expander decomposition by removing some of the free vertices. Note that we run two expander decompositions in an iteration, one in the augmentation step, and another one in the blossom shrinking step. In the augmentation step, the idea is to remove those high-degree free vertices (more precisely, the free vertices that have many eligible incident edges) before running the expander decomposition. Using the fact that the graph is sparse, we can show the number of low-degree free vertices is linear in the number of matched vertices. Therefore, the expander decomposition will only cut $O(\epsilon'' |M|)$ edges. To incorporate the low-degree free vertices during the augmentation step, each such free vertex will choose a random neighbor and join its component. If we repeat it for enough times, they will ``cover'' all the augmentation paths because these free vertices are of low-degrees. Also, since each low-degree free vertex $f$ is connected with only one vertex $u_f$ in each repetition, $u_f$ does not have worry about having to contend with other vertex for $f$. As a result, $u_f$ only needs to remember one free vertex connected to it. This allows us to route the necessary information to a single node in each component to compute a maximal set of augmentation paths. 

In the blossom shrinking step, instead of removing low-degree free vertices, we remove the free vertices that cannot be included in any new free blossoms, before running the expander decomposition. Observe that if every neighbor of a free vertex $f$ is connected to some other free vertices in $G_{elig}$ after the augmentation step, then it is impossible for $f$ to be included in any new blossoms. This is because, for otherwise, an augmenting path would have been formed. We will show this formally in \cref{lem:no_augmenting_path}. And as we will show in \cref{lem:inter_edges2}, the number of remaining free vertices will be linear in the number of matched vertices. Since those removed free vertices cannot be included in the new blossoms, each component would be able to perform the blossom shrinking step independently.

For the dual adjustment step and the blossom dissolution and clean up step, we will continue to use the same expander decomposition computed in the blossom shrinking step. These steps can be implemented easily if one can perform information gathering and local computation inside each component of the expander decomposition {\it and} inside each root blossom. We already know that the former can done by \Cref{thm:routing-main}. For the latter, observe that when a blossom is formed, it must have been fully contained in a component of some expander decomposition in some iteration. We can then use the component to simulate the information gather inside the blossom. We will describe the implementation in more details in \Cref{sec:imple}. Our full algorithm is described in \Cref{fig:edmondssearch}.

\begin{figure}[htp]
\centering
\framebox{
\begin{minipage}{0.97\textwidth}
\small
$M \leftarrow \emptyset$, $\Omega \leftarrow \emptyset$, $\delta_0 \leftarrow \epsilon' W$, $\tau = W/2 - \delta_0/2$\\ $y(u) \leftarrow \tau$ for all $u \in V$, $z(B) \leftarrow 0$ for all $B \in \mathcal{V}_{odd}$, $\Delta w(e) \leftarrow 0$ for $e \in E$. \\
Execute scales $i = 0, 1, \ldots, L=\log_{2} W$ and return the matching $M$. \\

\underline{Scale $i$:}
\begin{itemize}[leftmargin=*]
\item[--] Repeat the following until $\tau = W/2^{i+2} - \delta_i / 2$ if $i \in [0,L)$, or until it reaches $0$ if $i = L$.


\begin{enumerate}[itemsep=0ex, leftmargin=*]
        
    
\item {\bf Augmentation:} \label{step:2} \label{step:start}
\begin{enumerate}[leftmargin=*]
\item \label{step_exp:1} Find a maximal matching on $G_{elig}[\hat{F}_{s}]$ and then update $G_{elig}$ and the free vertices ($\hat{F}$ and $\hat{F_s}$).

\item Let $\hat{F}_{\leq C_H} \subseteq \hat{F}_s$ denote the set of singleton free vertices with at most $C_H$ eligible incident edges. 

\item \label{step_exp:3} Let $(\hat{U}_1, \ldots, \hat{U}_k)$ be an expander decomposition on $G \setminus \hat{F}_{\leq C_H}$ with parameter
$\epsilon'' = \epsilon/ (48 C^2_H \log W)$. Invoke $\textsc{Cut}(\hat{U}_1, \ldots, \hat{U}_k).$

\item Repeat the following for $K \cdot C^2_H \log n$ iterations for some constant $K>0$. At iteration $t$:
\begin{itemize}[leftmargin=*]
\item For every vertex $f \in \hat{F}_{\leq C_{H}}$, select a random neighbor $u_f$.

Set $\hat{U}^{(t)}_i \leftarrow \hat{U}_i \cup \{f \in \hat{F}_{\leq C_{H}} \mid u_f \in \hat{U}_i  \}$. 

Set $G_{elig}^{(t)} \leftarrow G_{elig} \setminus \{(f, v) \in E_{elig} \mid f \in \hat{F}_{\leq C_{H}} \} \cup \{(f, u_f) \mid f \in \hat{F}_{\leq C_{H}} \}$.

\item Let $(U^{(t)}_1, \ldots U^{(t)}_k) \leftarrow (\hat{U}^{(t)}_1/\Omega, \ldots, \hat{U}^{(t)}_k /\Omega)$. 

For each $1 \leq i \leq k$, find a maximal set $\Psi_i$ of vertex-disjoint augmenting paths in $(G^{(t)}_{elig}/ \Omega)[U^{(t)}_i]$. 

Set $M \leftarrow M \oplus (\bigcup_{i} \bigcup_{P \in \Psi_i} P)$. 

Update $G_{elig}$ and the free vertices ($\hat{F}$ and $\hat{F_s}$).
\end{itemize}
 
 \end{enumerate}

\item {\bf Blossom Shrinking:}
\begin{enumerate}[leftmargin=*]
\item Let $\hat{F}' \subseteq \hat{F}_s$ be the set of free vertices whose every neighbor is connected to at least two free vertices (including itself) in $\hat{F}_s$.

\item \label{blossom_step_2} Let $(\hat{V}_1, \ldots, \hat{V}_k)$ be an expander decomposition on $G \setminus \hat{F}'$ with parameter $\epsilon'' = \epsilon/ (48 C^2_H \log W)$. Invoke $\textsc{Cut}(\hat{V}_1, \ldots, \hat{V}_k).$

\item  Let $(V_1, \ldots V_k) \leftarrow (\hat{V}_1/\Omega, \ldots, \hat{V}_k /\Omega)$. For each $i$, let $\Omega_i$ be a maximal set of (nested) blossom in $(G_{elig}/ \Omega) [V_i \cup \hat{F}']$. Set $z(B) \leftarrow 0$ for every $B \in \Omega_i$. Update $\Omega \leftarrow  \Omega \cup (\bigcup_{1\leq i\leq k}\Omega_i)$. Note that $\{ \Omega_i \}_{1\leq i \leq k}$ are disjoint (see \cref{lem:no_bad_free_vertex}).

\end{enumerate}


\item {\bf Dual Adjustment:} 
\begin{itemize}
	\item $\tau \leftarrow \tau - \delta_i / 2$
    \item $y(u) \leftarrow y(u) - \delta_i / 2$, if $u \in \hat{V}_{out}$
    \item $y(u) \leftarrow y(u) + \delta_i / 2$, if $u \in \hat{V}_{in}$
    \item $z(B) \leftarrow z(B) + \delta_i$, if $B \in \Omega$ is a root blossom with $B \subseteq \hat{V}_{out}$
    \item $z(B) \leftarrow z(B) - \delta_i$, if $B \in \Omega$ is a root blossom with $B \subseteq \hat{V}_{in}$
\end{itemize}
\item {\bf Blossom Dissolution and Clean Up:} \label{step:end}
\begin{itemize}
    \item Some root blossoms might have zero $z$-values after the dual adjustment step. Dissolve them by removing them from $\Omega$. Update $G_{elig}$ with the new $\Omega$.
    \item Remove all dummy vertices and edges created in $\textsc{Cut}(\cdot)$.
\end{itemize}
\end{enumerate}
\item[--] If $i \in [0,L)$, set $\delta_{i+1} \leftarrow \delta_{i} / 2$, $\tau \leftarrow \tau + 1.5\delta_{i+1}$ and $y(u) \leftarrow y(u)+ 1.5\delta_{i+1}$ for every $u \in V$.
\end{itemize}
\end{minipage}
}
\caption{The modified Duan-Pettie algorithm with distributed expander decompositions}
\label{fig:edmondssearch}
\end{figure}

\subsection{The Analysis}
Throughout the analysis, we assume without loss of generality that we choose $\epsilon' = \min(\Theta(\epsilon), 1/6)$. 

\begin{lemma}\label{lem:dif}
Let $G_1 = (V,E,\hat{w})$ and $G_2 = (V,E,w)$. Let $M_1^{*}$ and $M_2^{*}$ be optimal matchings in $G_1$ and in $G_2$. Suppose that $\sum_{e \in E}|\hat{w}(e) - w(e)| \leq \gamma \hat{w}(M_1^{*})$. We have $\hat{w}(M_2^{*}) \geq (1-2\gamma)\cdot \hat{w}(M_1^{*})$.
\end{lemma}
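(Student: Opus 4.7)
The plan is to argue by the triangle inequality applied twice, using the optimality of $M_2^{*}$ in the middle. The key observation is that for any matching $M \subseteq E$, since $M$ is just a subset of edges, we have
\[
\left| \hat{w}(M) - w(M) \right| \;\leq\; \sum_{e \in M} |\hat{w}(e) - w(e)| \;\leq\; \sum_{e \in E} |\hat{w}(e) - w(e)| \;\leq\; \gamma \hat{w}(M_1^{*}).
\]
This gives a uniform bound on how much the two weight functions can disagree on any single matching.

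First I would apply this bound to $M_1^{*}$ to get $w(M_1^{*}) \geq \hat{w}(M_1^{*}) - \gamma \hat{w}(M_1^{*}) = (1-\gamma)\hat{w}(M_1^{*})$. Next, since $M_2^{*}$ is a maximum weighted matching with respect to $w$, I have $w(M_2^{*}) \geq w(M_1^{*}) \geq (1-\gamma)\hat{w}(M_1^{*})$. Finally, applying the bound again (this time to $M_2^{*}$), I get $\hat{w}(M_2^{*}) \geq w(M_2^{*}) - \gamma \hat{w}(M_1^{*}) \geq (1-2\gamma) \hat{w}(M_1^{*})$, which is exactly what is claimed.

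There is no real obstacle here; the lemma is essentially a routine robustness statement saying that optimal matching weight is Lipschitz in the weight function, with the only nuance being that the perturbation budget is measured in units of $\hat{w}(M_1^{*})$ rather than absolutely. This is exactly why the loss factor is $(1-2\gamma)$ rather than $(1-\gamma)$: we pay $\gamma \hat{w}(M_1^{*})$ once when passing from $\hat{w}$ to $w$ on $M_1^{*}$, and once more when passing from $w$ back to $\hat{w}$ on $M_2^{*}$. No assumption about the structure of $G$ (e.g.\ $H$-minor-freeness) or about nonnegativity of weights on individual edges is needed beyond what lets the matching weights be well-defined, so the proof is purely arithmetic.
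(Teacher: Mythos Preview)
Your proof is correct and follows essentially the same approach as the paper: bound $|\hat{w}(M)-w(M)|$ by $\gamma\hat{w}(M_1^*)$ for any matching $M$, apply this to $M_1^*$ to pass from $\hat{w}$ to $w$, use optimality of $M_2^*$ under $w$, and apply the bound to $M_2^*$ to pass back from $w$ to $\hat{w}$. The paper writes it as a single chain of inequalities while you break it into steps, but the argument is identical.
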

\begin{proof}
We have
\begin{align*}
    \hat{w}(M_2^{*}) &= \sum_{e \in M_2^{*}} \hat{w}(e) \\
    &= \sum_{e\in M_2^{*}} (w(e) - (w(e) - \hat{w}(e) )) \\
    &\geq \sum_{e\in M_2^{*}} (w(e) - |w(e) - \hat{w}(e) | )\\
    &= w(M_2^{*}) - \gamma \hat{w}(M^{*}_1) \\
    &\geq w(M^{*}_1) - \gamma \hat{w}(M^{*}_1)  & \mbox{$M^{*}_2$ optimal w.r.t.~$w$}\\ 
    &\geq \hat{w}(M^{*}_1) - 2 \gamma \hat{w}(M^{*}_1) = (1-2\gamma)\cdot \hat{w}(M^{*}_1)  &&&& \qedhere
\end{align*}
\end{proof}

\begin{lemma}\label{lem:edge_weight_lower_bound} At any point, if $e$ is a matched edge or a blossom edge and satisfies the near tightness condition then $yz(e) \leq (1+6\epsilon')w(e)$. If $e$ satisfies the near domination condition at the end of scale $L$, then $yz(e) \geq  w(e) - 1.5\epsilon'$.
\end{lemma}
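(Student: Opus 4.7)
The plan is to dispatch the three inequalities in sequence, exploiting the invariants in \Cref{prop:RCS} together with the schedule of $\tau$ across scales.

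For the first inequality, I would trace back to the instant in scale $j$ when $e$ was most recently made a matched edge or a blossom edge. At that instant $e$ must have been eligible. Since (a) the augmentation step only matches currently unmatched eligible edges and (b) the new non-matched edges that form a blossom in the blossom shrinking step are likewise eligible, in both cases the relevant eligibility clause is $yz(e)\leq w_j(e)-\delta_j$. The invariant $y(u),y(v)\geq \tau$ combined with the fact that throughout scale $j$ one has $\tau\geq W/2^{j+2}-\delta_j/2$ gives $y(u)+y(v)\geq W/2^{j+1}-\delta_j$, and since $z$-values are non-negative we get $w_j(e)\geq yz(e)+\delta_j\geq W/2^{j+1}$. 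The granularity clause of \Cref{prop:RCS} forces $w_j(e)$ to be a non-negative multiple of $\delta_j$, and $W/2^{j+1}=\delta_j/(2\epsilon')$ is already a multiple of $\delta_j$, so the conclusion $w(e)\geq w_j(e)\geq W/2^{j+1}+\delta_j$ follows by taking the next multiple of $\delta_j$ (the argument uses that the inequality $yz(e)\leq w_j(e)-\delta_j$ would otherwise be tight only when $y(u)+y(v)$ equals $2\tau$ exactly at the end of the scale, a case that can be ruled out because $e$ was made tight earlier in scale $j$, when $\tau$ was still strictly larger).

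For the second inequality, I simply combine the near tightness invariant $yz(e)\leq w_i(e)+3(\delta_j-\delta_i)$ with $w_i(e)\leq w(e)$ and the bound $w(e)\geq W/2^{j+1}$ from part (1):
\begin{align*}
yz(e)-w(e) \;\leq\; 3(\delta_j-\delta_i) \;\leq\; 3\delta_j \;=\; 6\epsilon'\cdot W/2^{j+1} \;\leq\; 6\epsilon'\, w(e).
\end{align*}
This yields $yz(e)\leq(1+6\epsilon')w(e)$.

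For the third inequality, I note that at the end of scale $L=\log_2 W$ the granularity is $\delta_L=\epsilon' W/2^L=\epsilon'$. Every weight modification performed by $\textsc{Cut}(\cdot)$ is of the form $\pm\delta_i$ for some $i\leq L$, and each $\delta_i$ is an integer multiple of $\delta_L$, so $\Delta w(e)$ is a multiple of $\delta_L$. Because $\hat{w}(e)$ is an integer and $\epsilon'$ was chosen to be a power of two with $\epsilon'\leq 1$, the original weight is also a multiple of $\delta_L$, hence so is $w(e)$. Therefore the discretization is exact, $w_L(e)=w(e)$, and the near domination invariant gives $yz(e)\geq w_L(e)-1.5\delta_L = w(e)-1.5\epsilon'$ directly.

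The only step that takes real thought is part (1): one has to argue carefully that at the scale-$j$ event turning $e$ into a matched or blossom edge, the relevant eligibility clause does read $yz(e)\leq w_j(e)-\delta_j$ (as opposed to merely being a blossom edge already in $\Omega$, which would be a vacuous condition), and that the granularity rounding does not cost a $\delta_j$. Once (1) is in hand, (2) and (3) are one-line consequences of \Cref{prop:RCS}.
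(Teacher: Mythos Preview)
Your approach matches the paper's, and Parts 2 and 3 are essentially identical to the paper's proof (your Part 3 is in fact more careful in justifying $w_L(e)=w(e)$).

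The one place where your write-up is weaker than the paper is Part 1. Your granularity/rounding step does not work on its own: you correctly observe that $W/2^{j+1}$ is already a multiple of $\delta_j$, so from $w_j(e)\geq W/2^{j+1}$ alone you cannot ``take the next multiple'' to get $W/2^{j+1}+\delta_j$. The actual content is hidden in your parenthetical: at the instant $e$ is made a matched or blossom edge (during the augmentation or blossom shrinking step), the dual adjustment of that iteration has not yet occurred, so $\tau\geq W/2^{j+2}$ rather than merely $\tau\geq W/2^{j+2}-\delta_j/2$. Using this sharper bound directly gives $yz(e)\geq y(u)+y(v)\geq W/2^{j+1}$ and hence $w_j(e)\geq yz(e)+\delta_j\geq W/2^{j+1}+\delta_j$ with no rounding needed. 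This is exactly how the paper argues it (``The minimum $y$-value over the vertices at scale $j$ is at least $W/2^{j+2}$''), and you should lead with that observation rather than routing through a granularity argument that requires the parenthetical to be patched.
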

\begin{proof}
First, we show that if $e$ is of type $j$, it must be the case that $w(e) \geq W/ 2^{j+1} + \delta_j$. If $e$ is of type $j$, it must have become eligible at scale $j$ while unmatched. An unmatched edge can only become eligible in scale $j$ if $w_j(e) - \delta_j \geq yz(e)$. The minimum $y$-value over the vertices at scale $j$ is at least $W/2^{j+2}$. Therefore, $w(e)\geq w_{j}(e) \geq yz(e) + \delta_j \geq 2\cdot (W/2^{j+2}) + \delta_j = W/2^{j+1} + \delta_j$. 
 
Let $e$ be a matched edge or a blossom edge and satisfies the near tightness condition. Suppose that the current scale is $i$, the edge $e$ must be an edge of type $j$ for some $j \leq i$. Therefore, $yz(e) \leq w_i(e) + 3\delta_i \leq w(e) + 3\delta_j \leq w(e) + 3\epsilon' W / 2^{j} \leq w(e) + 6\epsilon' w(e) = (1+6\epsilon')w(e)$. 
 
 If $e$ satisfies the near domination condition at the end of scale $L$, then $yz(e) \geq w_L(e) - 1.5\delta_L = w(e) - 1.5\epsilon'$.
\end{proof}

\begin{lemma}
Suppose that $y, z, M, \Delta w$, and $\Omega$ satisfy the relaxed complementary slackness condition at the end of scale $L$. Then $\hat{w}(M) \geq (1-\epsilon) \cdot \hat{w}(M^{*})$.
\end{lemma}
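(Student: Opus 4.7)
The plan is to run the standard primal--dual argument for weighted matching, but carefully tracking the two new sources of slack: the weight modifier $\Delta w$ and the nonzero $y$-values on frozen free vertices. All inequalities will accumulate at most $O(\eps')\cdot \hat{w}(M^\ast)$ additive error, and choosing $\eps' = \Theta(\eps)$ small enough will yield the claimed $(1-\eps)$ ratio.

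\textbf{Step 1 (from $\hat w$ to $w$).} I will first pass from the original weights $\hat{w}$ to the effective weights $w = \hat{w} + \Delta w$. By the Bounded Weight Change property (\cref{item:boundedw}),
\[
\hat{w}(M^\ast) \;=\; w(M^\ast) - \sum_{e\in M^\ast}\Delta w(e) \;\leq\; w(M^\ast) + \eps'\hat{w}(M^\ast),
\]
so $(1-\eps')\hat{w}(M^\ast) \leq w(M^\ast)$. Symmetrically, $w(M) \leq \hat{w}(M) + \eps'\hat{w}(M^\ast)$. These will be used at the two endpoints of the chain.

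\textbf{Step 2 (weak LP duality with slack).} Next, I will upper bound $w(M^\ast)$ by the dual objective. By Near Domination and \cref{lem:edge_weight_lower_bound}, at the end of scale $L$ every edge satisfies $w(e) \leq yz(e) + 1.5\eps'$. Expanding $yz(e)$ and switching the order of summation, together with the fact $\hat{w}(e)\geq 1$ giving $|M^\ast|\leq\hat{w}(M^\ast)$, yields
\[
w(M^\ast) \;\leq\; \sum_{e\in M^\ast} yz(e) + 1.5\eps'\,\hat{w}(M^\ast)
\;\leq\; \sum_{u\in V} y(u) + \sum_{B\in\mathcal{V}_{odd}} \lfloor |B|/2\rfloor z(B) + 1.5\eps'\,\hat{w}(M^\ast),
\]
since any matching uses each vertex at most once and contains at most $\lfloor |B|/2\rfloor$ edges inside any $B$.

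\textbf{Step 3 (rewriting the dual via $M$).} Now I will use that the dual ``charges'' $M$ exactly. Because Active Blossoms (\cref{RCS:2}) gives $|M\cap E_B| = \lfloor |B|/2\rfloor$ for every $B\in\Omega$, while $z(B)=0$ for $B\notin\Omega$, swapping summations gives
\[
\sum_{u\in V} y(u) + \sum_{B} \lfloor |B|/2\rfloor z(B)
\;=\; \sum_{e\in M} yz(e) \;+\; \sum_{u\in F} y(u),
\]
where $F$ is the set of free vertices at the end of scale $L$. For the first sum, every $e\in M$ is either a matched edge or a blossom edge, so by the Near Tightness bound in \cref{lem:edge_weight_lower_bound} we have $yz(e)\leq (1+6\eps')w(e)$, giving $\sum_{e\in M}yz(e)\leq (1+6\eps')w(M)$. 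For the second sum, Free Vertex Duals (\cref{RCS:5}) with $\tau = 0$ at the end of scale $L$ gives $\sum_{u\in F} y(u) \leq \eps'\,\hat{w}(M^\ast)$.

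\textbf{Step 4 (chaining and choosing $\eps'$).} Concatenating Steps 1--3,
\[
(1-\eps')\hat{w}(M^\ast) \;\leq\; (1+6\eps')\,w(M) + 2.5\eps'\,\hat{w}(M^\ast)
\;\leq\; (1+6\eps')\bigl(\hat{w}(M)+\eps'\hat{w}(M^\ast)\bigr) + 2.5\eps'\,\hat{w}(M^\ast).
\]
Rearranging yields $\hat{w}(M) \geq (1-c\eps')\hat{w}(M^\ast)$ for some absolute constant $c$, and choosing $\eps' = \eps/c$ at the very beginning of the algorithm gives the claimed $(1-\eps)$-approximation.

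\textbf{Main obstacle.} The proof itself is a routine accounting exercise once the invariants in \cref{prop:RCS} are in place; the only subtle point is making sure that the two ``new'' slacks contribute only $O(\eps')\hat{w}(M^\ast)$ rather than $O(\eps')\hat{w}(M)$ or $O(\eps')n$. The Free Vertex Duals invariant (\cref{RCS:5}) and the Bounded Weight Change invariant (\cref{item:boundedw}) were designed precisely so that both slacks are measured against $\hat{w}(M^\ast)$, which is exactly what is needed to close the argument; the real work lies not in this lemma but in maintaining these two invariants throughout the scales, which happens elsewhere in the algorithm analysis.
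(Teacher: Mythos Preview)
Your proof is correct and follows essentially the same primal--dual accounting as the paper's own proof: upper bound the optimum via Near Domination, rewrite the dual objective exactly in terms of $M$ using Active Blossoms, apply Near Tightness via \cref{lem:edge_weight_lower_bound}, and absorb the free-vertex and weight-modifier slacks using \cref{RCS:5} and \cref{item:boundedw}. The only organizational difference is that the paper routes the argument through $M_2^\ast$ (an optimum with respect to $w$) and invokes \cref{lem:dif}, whereas you work directly with $M^\ast$ and handle the $\hat w\leftrightarrow w$ conversion explicitly at both ends; your version is slightly more direct and also makes the final passage from $w(M)$ back to $\hat w(M)$ explicit.
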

\begin{proof}
Let $M_1^{*}$ and $M_2^{*}$ be optimal matching with respect to $\hat{w}$ and $w$. First, we have
\begin{allowdisplaybreaks}
\begin{align*}
    w(M) &= \sum_{e \in M} w(e) \\
    &\geq \sum_{e \in M}(1+6\epsilon)^{-1} \cdot yz(e)  \hspace{57mm} \mbox{near tightness \& \cref{lem:edge_weight_lower_bound}} \\
    &= (1+6\epsilon')^{-1} \cdot \left( \sum_{u \in V} y(u) + \sum_{B \in \Omega} \frac{|B| - 1}{2} \cdot z(B) - \sum_{u \in \hat{F}}y(u)\right) \\
    &\geq (1+6\epsilon')^{-1} \left(\sum_{u \in V(M_2^{*})} y(u) + \sum_{B \in \Omega} (|M_2^{*} \cap E(B)|) \cdot z(B) \right) - (1+6\epsilon')^{-1}(\epsilon' \cdot \hat{w}(M_1^{*})) \\
    &\geq (1+6\epsilon')^{-1} \cdot \left( \sum_{e\in M_2^{*}} yz(e) \right) - 2\epsilon' \cdot \hat{w}(M_1^{*}) \\
    &\geq (1+6\epsilon')^{-1} \cdot  \left(\sum_{e \in M_{2}^{*}} w(e)  - 1.5 \epsilon' \right) - 2\epsilon' \cdot \hat{w}(M_1^{*}) \\
        &= (1+6\epsilon')^{-1} \cdot  \left(\sum_{e \in M_{2}^{*}}( \hat{w}(e) - 1.5 \epsilon' + (w(e) - \hat{w}(e)))   \right) - 2\epsilon' \cdot \hat{w}(M_1^{*})\\
        &\geq (1+6\epsilon')^{-1} \cdot  \left(\sum_{e \in M_{2}^{*}} (1-1.5\epsilon')\cdot \hat{w}(e) - \sum_{e \in M_{2}^{*}} |(w(e) - \hat{w}(e))| \right) - 2\epsilon' \cdot \hat{w}(M_1^{*}) \\
        &\hspace{130.5mm} \hat{w}(e) \geq 1\\
        &\geq (1+6\epsilon')^{-1} \cdot  \left(\sum_{e \in M_{2}^{*}} ((1-1.5\epsilon')\cdot \hat{w}(e)) - \epsilon'\hat{w}(M^{*}_1) \right) - 2\epsilon' \cdot \hat{w}(M_1^{*})\\
     &\geq (1+6\epsilon')^{-1}\cdot(1-1.5\epsilon') \cdot (1-2\epsilon')  \hat{w}(M^{*}_1) - 4\epsilon' \cdot \hat{w}(M_1^{*}) \hspace{25mm} \mbox{By \cref{lem:dif}} \\
     &= (1-\Theta(\epsilon')) \hat{w}(M^{*}_1) = (1-\epsilon) \hat{w}(M^{*}_1)\qedhere
\end{align*}
\end{allowdisplaybreaks}
\end{proof}






\begin{lemma}\label{lem:inter_edges}
The number of inter-component edges in the expander decomposition, $(\hat{U}_1, \ldots, \hat{U}_k)$, of the augmentation step is at most $3\epsilon''  C^2_H |M|$, where $M$ denotes the current matching.
\end{lemma}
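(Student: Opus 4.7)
The plan is to invoke \cref{thm:routing-main} with parameter $\epsilon''$ on the $H$-minor-free graph $G\setminus \hat{F}_{\leq C_H}$, which guarantees at most $\epsilon''\min\{|V'|,|E'|\}\leq\epsilon''|V'|$ inter-component edges, where $V':=V\setminus\hat{F}_{\leq C_H}$ and $E':=E(G[V'])$. It therefore suffices to show $|V'|\leq 3C_H^2 |M|$. Write the disjoint partition $V'=V(M)\cup\hat{F}_b\cup\hat{F}_{>C_H}$, where $\hat{F}_{>C_H}:=\hat{F}_s\setminus\hat{F}_{\leq C_H}$ denotes the singleton free vertices having strictly more than $C_H$ eligible incident edges. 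Clearly $|V(M)|=2|M|$.

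The next step is to bound $|\hat{F}_b|$. Every blossom in $\Omega$ is full and non-trivial, so each free root blossom contains its unique free base together with $|B|-1\geq 2$ matched vertices. Since root blossoms are vertex-disjoint, the number of matched vertices inside free root blossoms is at least $2|\hat{F}_b|$ and at most $|V(M)|=2|M|$, giving $|\hat{F}_b|\leq |M|$.

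The main step is bounding $|\hat{F}_{>C_H}|$. After \cref{step_exp:1}, which computes a maximal matching inside $G_{elig}[\hat{F}_s]$, no eligible edge of $G_{elig}$ joins two singleton free vertices, so every eligible edge incident to $\hat{F}_{>C_H}$ has its other endpoint in $V(M)\cup\hat{F}_b$. These eligible edges thus form a bipartite subgraph $B$ of $G$, and by the defining degree lower bound on $\hat{F}_{>C_H}$ we have $|E(B)|\geq (C_H+1)|\hat{F}_{>C_H}|$. On the other hand, $B$ is a subgraph of the $H$-minor-free graph $G$, so the edge-density hypothesis applied to $V(B)=\hat{F}_{>C_H}\cup V(M)\cup\hat{F}_b$ gives
\[
|E(B)|\leq C_H\bigl(|\hat{F}_{>C_H}|+|V(M)|+|\hat{F}_b|\bigr).
\]
Combining the two inequalities yields $|\hat{F}_{>C_H}|\leq C_H(|V(M)|+|\hat{F}_b|)\leq 3C_H|M|$.

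Assembling the three bounds, $|V'|\leq 2|M|+|M|+3C_H|M|=3(1+C_H)|M|\leq 3C_H^2|M|$ in the non-degenerate regime $C_H\geq 2$, so the number of inter-component edges is at most $\epsilon''|V'|\leq 3\epsilon''C_H^2|M|$, as claimed. The main obstacle is controlling $|\hat{F}_{>C_H}|$, since a priori nothing prevents the set of high-eligible-degree singleton free vertices from being much larger than $|M|$; the crucial ingredient is \cref{step_exp:1}, which by killing all intra-$\hat{F}_s$ eligible edges forces the eligible neighborhood of $\hat{F}_{>C_H}$ into $V(M)\cup\hat{F}_b$ and thereby converts the per-vertex degree lower bound into a global cardinality upper bound via the edge-density of $H$-minor-free subgraphs.
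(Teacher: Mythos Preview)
Your proposal is correct and follows essentially the same approach as the paper: bound $|V\setminus\hat F_{\le C_H}|$ by $O(C_H|M|)$ via the density bound for $H$-minor-free graphs together with the fact that, after Step~(\ref{step_exp:1}), every eligible neighbor of a singleton free vertex lies in $V\setminus\hat F_s$. Two minor differences: you (correctly) use $|V(M)|=2|M|$, whereas the paper writes $|V\setminus\hat F|=|M|/2$; and you invoke the $\epsilon''|V'|$ bound from \cref{thm:routing-main} directly rather than the paper's $\epsilon''|E'|\le\epsilon'' C_H|V'|$, which is why your final inequality needs $C_H\ge 2$---harmless, since $C_H$ is only an upper bound on the subgraph density and may be taken $\ge 2$.
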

\begin{proof}
Recall that $\hat{F}_s \subseteq \hat{F}$ denote the set of free vertices that are not contained in any active blossoms.  We will show that $\hat{F}_s \setminus \hat{F}_{\leq C_{H}} = C_H \cdot (|V \setminus \hat{F}_{s}|)$. Recall that $C_H \geq 1$ is the constant such that for all subgraphs $X \subseteq G$, $|E(X)| / |V(X)| \leq C_H$.

Let $\hat{F}_{>C_H}$ be $\hat{F}_s \setminus \hat{F}_{\leq C_H}$. First, we show that $|\hat{F}_{>C_H}| \leq C_H \cdot |V \setminus \hat{F}_s|$. Since  every vertex in $F_{>C_H}$ is incident to at least $C_H+1$ vertices in $V \setminus \hat{F}_s$ in $G_{elig}$, there are at least $(C_H+1)\cdot |\hat{F}_{>C_H}|$ edges in $G_{elig}[\hat{F}_{>C_H} \cup V \setminus \hat{F}_s]$. However, since $G$ is $H$-minor free, there are at most $C_H \cdot |\hat{F}_{>C_H} \cup V \setminus \hat{F}_s|$ edges in $G_{elig}[\hat{F}_{>C_H} \cup V \setminus \hat{F}_s]$. This implies 
\begin{align*} (C_H+1)\cdot |\hat{F}_{>C_H}| &\leq C_H \cdot |\hat{F}_{>C_H} \cup V \setminus \hat{F}_s|\\ 
 (C_H+1)\cdot |\hat{F}_{>C_H}| &\leq C_H \cdot (|\hat{F}_{>C_H}| + |V \setminus \hat{F}_s|)\\
|\hat{F}_{>C_H}| &\leq C_H \cdot  |V \setminus \hat{F}_s|.
\end{align*}

Note that since every vertex in $V \setminus \hat{F}$ is matched, we have $|V \setminus \hat{F}| = |M|/2$. For a free vertex in $\hat{F} \setminus \hat{F}_s$, it must be contained in an active blossom. The smallest active blossom contains at least one matched edges so $|\hat{F} \setminus \hat{F}_s| \leq |M|$. Therefore, 
\begin{align*}
    |V \setminus \hat{F}_{\leq C_{H}}| &= |V \setminus \hat{F}_s| + |\hat{F}_s \setminus \hat{F}_{\leq C_{H}}| \\
    & \leq (C_H + 1)\cdot |V \setminus \hat{F}_s| \\
    &= (C_H + 1) \cdot ( |V \setminus \hat{F}| + |\hat{F}\setminus \hat{F}_s|) \\
    &\leq (C_H + 1) \cdot (|M| / 2 + |M| ) \leq 3 C_H \cdot |M|
\end{align*}

The expander decomposition cuts at most $\epsilon''|E(G_{elig}[V \setminus \hat{F}_{\leq C_{H}}])|$ edges. This is at most 
\[\epsilon''|E(G_{elig}[V \setminus \hat{F}_{\leq C_{H}}])| \leq \epsilon'' \cdot C_H |V \setminus \hat{F}_{\leq C_{H}}| \leq  3\epsilon''  C^2_H |M|.  \qedhere\]
\end{proof}

\begin{lemma}\label{lem:inter_edges2}
The number of inter-component edges in the expander decomposition, $(\hat{V}_1, \ldots, \hat{V}_k)$, of the blossom shrinking step is at most $3\epsilon''  C_H |M|$, where $M$ denotes the current matching (excluding the dummy matched edges added during the augmentation step).
\end{lemma}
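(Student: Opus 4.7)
The plan is to mirror the structure of the previous lemma, with the crucial difference being that the vertex set removed before the expander decomposition is now $\hat{F}'$ instead of $\hat{F}_{\leq C_H}$, which admits a sharper bound that saves one factor of $C_H$. Since $G$ is $H$-minor-free, the number of edges in any induced subgraph $G[V \setminus \hat{F}']$ is at most $C_H |V \setminus \hat{F}'|$, and by the expander decomposition guarantee the number of inter-cluster edges is at most $\epsilon'' C_H |V \setminus \hat{F}'|$. So it suffices to show $|V \setminus \hat{F}'| \leq 3|M|$.

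Decompose
\[
V \setminus \hat{F}' \;=\; (V \setminus \hat{F}_s) \;\cup\; (\hat{F}_s \setminus \hat{F}').
\]
For the first term, the proof of \cref{lem:inter_edges} already gives
$|V\setminus \hat{F}_s| \leq |V\setminus \hat{F}| + |\hat{F}\setminus \hat{F}_s| \leq 3|M|/2$, using that matched vertices account for $|M|/2$ and that each active root blossom contributes at most one free vertex while requiring at least one matched edge inside it.

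The key new ingredient is bounding $|\hat{F}_s \setminus \hat{F}'|$ by $|V\setminus \hat{F}_s|$ via an injection $f \mapsto u_f$. For each $f \in \hat{F}_s \setminus \hat{F}'$, the failure of the $\hat{F}'$-defining condition produces a neighbor $u_f$ of $f$ that is connected to at most one free vertex of $\hat{F}_s$ (counting $u_f$ itself if applicable). Because $f$ is already such a free neighbor, this unique free vertex in $N[u_f] \cap \hat{F}_s$ must be $f$; in particular $u_f \notin \hat{F}_s$ (otherwise $u_f$ itself would count as a second free vertex in $\hat{F}_s$), so $u_f \in V \setminus \hat{F}_s$. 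Injectivity is immediate: if $u_{f_1} = u_{f_2} = u$ for distinct $f_1, f_2 \in \hat{F}_s\setminus \hat{F}'$, then $u$ would have two distinct free neighbors in $\hat{F}_s$, contradicting the defining property of $u_{f_i}$. Hence $|\hat{F}_s \setminus \hat{F}'| \leq |V \setminus \hat{F}_s| \leq 3|M|/2$, and combining the two bounds gives $|V\setminus \hat{F}'| \leq 3|M|$, which yields the desired bound of $3\epsilon'' C_H |M|$ inter-component edges.

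The main obstacle is pinning down the injection. In particular, one must be careful with the clause ``including itself'' in the definition of $\hat{F}'$: it is exactly what forces $u_f$ out of $\hat{F}_s$ and thereby into the already-bounded set $V\setminus \hat{F}_s$. Once this is settled, the rest is just reusing the edge-density estimate for $H$-minor-free graphs and the guarantee of the expander decomposition of \cref{thm:routing-main}.
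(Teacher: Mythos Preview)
Your proposal is correct and follows essentially the same approach as the paper: bound $|\hat{F}_s \setminus \hat{F}'| \le |V \setminus \hat{F}_s|$ via the witness map $f \mapsto u_f$, reuse the bound $|V \setminus \hat{F}_s| \le \tfrac{3}{2}|M|$ from \cref{lem:inter_edges}, and conclude $|V \setminus \hat{F}'| \le 3|M|$ before applying the $H$-minor-free edge-density estimate. Your treatment of the injection is in fact more carefully spelled out than the paper's one-line version (in particular, your use of the ``including itself'' clause to force $u_f \notin \hat{F}_s$ is exactly the right reading).
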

\begin{proof}
By definition, a vertex $f \in \hat{F}_s \setminus \hat{F}'$ must be adjacent to some vertex in $V \setminus \hat{F}_s$ such that $f$ is the only free vertex in $F_s$ adjacent to it. Thus, $|\hat{F}_s \setminus \hat{F}'| \leq |V \setminus \hat{F}_{s}|$. 

Since every vertex in $V \setminus \hat{F}$ is matched, we have $|V \setminus \hat{F}| = |M|/2$. For a free vertex in $\hat{F} \setminus \hat{F}_s$, it must be contained in an active blossom. The smallest active blossom contains at least one matched edges so $|\hat{F} \setminus \hat{F}_s| \leq |M|$. Therefore, 
\begin{align*}
    |V \setminus \hat{F}'| &= |V \setminus \hat{F}_s| + |\hat{F}_s \setminus \hat{F}'| \\
    & \leq 2\cdot |V \setminus \hat{F}_s| \\
    &= 2 \cdot ( |V \setminus \hat{F}| + |\hat{F}\setminus \hat{F}_s|) \\
    &\leq 2 \cdot (|M| + |M| / 2) \leq 3 \cdot |M|
\end{align*}

The expander decomposition cuts at most $\epsilon''|E(G_{elig}[V \setminus \hat{F}'])|$ edges. This is at most 
\[\epsilon''|E(G_{elig}[V \setminus \hat{F}'])| \leq \epsilon'' \cdot C_H |V \setminus \hat{F}'| \leq  2\epsilon''  C_H |M|. \qedhere\]
\end{proof}

\begin{lemma} Let $M^{*}$ be an optimal matching in $G$ w.r.t. $\hat{w}$. At any point of the algorithm, $\hat{w}(M^{*}) \geq |M| \cdot (W / 2^{i+2})$. \end{lemma}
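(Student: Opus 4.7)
The plan is to derive the claim in three short steps from the relaxed complementary slackness invariants, without appealing to any further structural properties of the algorithm. At the current scale $i$, every edge $e \in M$ carries a well-defined type $j = j(e) \leq i$, since the type is by definition the scale at which $e$ was most recently made matched. I would apply Lemma~\ref{lem:edge_weight_lower_bound}, which states $w(e) \geq W/2^{j+1}+\delta_j$, so in particular $w(e) \geq W/2^{i+1}$ for each $e \in M$. Summing over the matched edges then yields $w(M) \geq |M|\cdot W/2^{i+1}$.

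Next I would convert this lower bound on the effective weight $w$ into one on the original weight $\hat{w}$. Writing $w(e) = \hat{w}(e) + \Delta w(e)$ and invoking the Bounded Weight Change invariant (Property~\ref{prop:RCS}, item~\ref{item:boundedw}), which asserts $\sum_{e \in E}|\Delta w(e)| \leq \epsilon' \hat{w}(M^*)$, I obtain
$\hat{w}(M) \;\geq\; w(M) - \sum_{e \in M}|\Delta w(e)| \;\geq\; |M|\cdot W/2^{i+1} - \epsilon'\hat{w}(M^*).$

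Finally, since $M^*$ is optimal with respect to $\hat{w}$ we have $\hat{w}(M) \leq \hat{w}(M^*)$, which rearranges to $(1+\epsilon')\,\hat{w}(M^*) \geq |M|\cdot W/2^{i+1}$. As we have assumed $\epsilon' \leq 1/6$, in particular $1+\epsilon' \leq 2$, so $\hat{w}(M^*) \geq |M|\cdot W/2^{i+2}$, as required. The corner case $|M|=0$ (e.g.\ at initialization, before any augmentation in scale $0$) is trivial.

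The only non-routine ingredient is Lemma~\ref{lem:edge_weight_lower_bound}, which is already established earlier; in particular, one might worry that the current $w(e)$ could be smaller than the value $w(e)$ had when $e$ was made type-$j$ because $\Delta w$ is modified in later iterations, but this is precisely what Lemma~\ref{lem:edge_weight_lower_bound} rules out (the \textsc{Cut} procedure only adds $+\delta_i$ to $\Delta w$ on matched edges, so effective weights of matched edges are non-decreasing between type assignments). Everything else in the proof is a one-line calculation using the invariants.
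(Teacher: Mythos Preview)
Your proof is correct and follows essentially the same approach as the paper: both use Lemma~\ref{lem:edge_weight_lower_bound} to get $w(e)\geq W/2^{i+1}$ for each matched edge, then combine with the Bounded Weight Change invariant and the optimality of $M^*$ to conclude $(1+\epsilon')\hat{w}(M^*)\geq |M|\cdot W/2^{i+1}$. Your explicit remark about why the lower bound on $w(e)$ from Lemma~\ref{lem:edge_weight_lower_bound} persists (since \textsc{Cut} only increases $\Delta w$ on matched edges) is a helpful clarification that the paper leaves implicit.
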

\begin{proof}
We have:
\begin{align*}
\hat{w}(M^{*}) &\geq \sum_{e \in M} \hat{w}(e)	 \\
&\geq \left(\sum_{e \in M} w(e)\right) -  \epsilon' \hat{w}(M^{*}) && \mbox{By \cref{prop:RCS}(\ref{item:boundedw})} \\
&\geq |M| \cdot (W / 2^{i+1}) - \epsilon' \hat{w}(M^{*}) && \mbox{By \cref{lem:edge_weight_lower_bound}}
\end{align*}
Therefore, $\hat{w}(M^{*}) \geq |M| \cdot (W/2^{i+1}) / (1+\epsilon') \geq W/2^{i+2}$.
\end{proof}

\begin{lemma}\label{lem:boundedchange} In each iteration of each scale, $\sum_{e \in E}| \Delta w(e)|$ change by at most $ 24\epsilon'' \epsilon' C^2_H \hat{w}(M^{*})$. Moreover, the sum of the $y$-values of the free vertices decreases by at least $|\hat{F}|\cdot (\delta_i / 2) - 24\epsilon'' \epsilon' C^2_H \hat{w}(M^{*})$.

\end{lemma}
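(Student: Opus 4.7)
The plan is to treat the two claims separately: the change in $\sum_e |\Delta w(e)|$ is controlled entirely by the two \textsc{Cut} invocations inside the iteration, and the decrease in the free-vertex dual sum comes from the dual adjustment step together with a small book-keeping loss for the frozen free vertices.

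For the first claim, the variable $\Delta w$ is only touched inside \textsc{Cut}, and each such modification adds $\pm \delta_i$ to $\Delta w(e)$ for at most one edge per inter-cluster edge $e \in C$ (because $h$ is a map, $|h(C)| \leq |C|$). By \cref{lem:inter_edges}, the augmentation step's expander decomposition produces $|C| \leq 3\epsilon'' C_H^2 |M|$ inter-cluster edges, and by \cref{lem:inter_edges2} the blossom-shrinking step produces $|C| \leq 3\epsilon'' C_H |M|$. Using $C_H \geq 1$, the total change in $\sum_e |\Delta w(e)|$ during one iteration is at most $\delta_i \cdot 6 \epsilon'' C_H^2 |M|$. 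The previously established bound $\hat{w}(M^\ast) \geq |M|\cdot(W/2^{i+2})$ gives $|M| \leq 4 \cdot 2^i \hat{w}(M^\ast)/W$, and with $\delta_i = \epsilon' W / 2^i$ this yields $\delta_i |M| \leq 4 \epsilon' \hat{w}(M^\ast)$. Plugging in, the total change is at most $24 \epsilon'' \epsilon' C_H^2 \hat{w}(M^\ast)$, as required.

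For the second claim, dual adjustment decreases $y(u)$ by $\delta_i/2$ for every $u \in \hat{V}_{out}$. I will first verify that any vertex which is free at the moment of dual adjustment lies in $\hat{V}_{out}$: a singleton free vertex is itself in $F$ via the zero-length augmenting path, while a free vertex at the base of a root blossom $B \in \Omega$ is represented by $B$ in $G_{elig}/\Omega$ as an unmatched vertex, hence $B \in F \subseteq V_{out}$. After the dummy matched edges are removed at the end of the iteration, the only vertices of $\hat{F}$ whose $y$-value was \emph{not} decreased are those that got frozen in one of the two \textsc{Cut} calls. Since freezing only happens for vertices in $h(C) \cap \hat{F}_b$ and $|h(C)| \leq |C|$, the number of frozen vertices is at most $|C_1| + |C_2| \leq 6 \epsilon'' C_H^2 |M|$. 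Combined with $\delta_i |M| \leq 4 \epsilon' \hat{w}(M^\ast)$, the total loss is at most $12 \epsilon'' \epsilon' C_H^2 \hat{w}(M^\ast) \leq 24 \epsilon'' \epsilon' C_H^2 \hat{w}(M^\ast)$. Vertices that leave $\hat{F}$ by being augmented contribute their positive starting $y$-values to $S_{\text{start}} - S_{\text{end}}$, so they only help the lower bound.

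The main obstacle is the book-keeping in Part~2: one must (i) ensure each freezing event is charged to a distinct inter-cluster edge, which follows from the definition of $\textsc{Cut}$ and $|h(C)| \leq |C|$, and (ii) argue that every remaining (non-frozen) free vertex, whether singleton or sitting at the base of a root blossom, is automatically in $\hat{V}_{out}$ after the two \textsc{Cut} calls. Once these two points are nailed down, both bounds reduce to the single quantitative inequality $\delta_i |M| \leq 4 \epsilon' \hat{w}(M^\ast)$, which is immediate from the prior lemma.
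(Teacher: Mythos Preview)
Your proposal follows the same route as the paper's proof: bound the number of inter-cluster edges via \cref{lem:inter_edges} and \cref{lem:inter_edges2}, combine with $|M|\le 4\cdot 2^i \hat w(M^*)/W$ and $\delta_i=\epsilon' W/2^i$, and conclude. Part~1 is identical to the paper's argument.

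There is one small but genuine gap in Part~2. You write that ``the only vertices of $\hat F$ whose $y$-value was \emph{not} decreased are those that got frozen'' and then charge each frozen vertex a loss of $\delta_i/2$, arriving at $12\epsilon''\epsilon' C_H^2\hat w(M^*)$. This undercounts: once a free vertex $f$ is frozen by the dummy matched edge $ff'$, it is no longer free in the working graph, and during dual adjustment it may lie in $\hat V_{in}$ (reachable via an odd-length alternating path through $f'f$), in which case $y(f)$ is \emph{increased} by $\delta_i/2$. The per-vertex loss relative to the ideal decrease is therefore up to $\delta_i$, not $\delta_i/2$, and the correct bound on the loss is $\delta_i\cdot 6\epsilon'' C_H^2 |M|\le 24\epsilon''\epsilon' C_H^2\hat w(M^*)$, exactly matching the lemma. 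Your final inequality ``$\le 24$'' happens to be right, but the intermediate assertion ``at most $12$'' is not justified; you should explicitly account for the possible increase, as the paper does.
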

\begin{proof}
By \cref{lem:inter_edges} and \cref{lem:inter_edges2}, since the number of inter-component edges in both expander decompositions is at most $3\epsilon'' C^2_H |M| + 3\epsilon'' C_H |M| \leq 6\epsilon'' C^2_H |M|$, the total change on $\sum_{e \in E}| \Delta w(e)|$ is at most 
\begin{align*}
6 \epsilon'' C^2_H |M| \cdot \delta_i &\leq 6 \epsilon'' C^2_H (2^{i+2}/W)\cdot \hat{w}(M^{*}) \cdot \delta_i \\
&= 6 \epsilon'' C^2_H (2^{i+2}/W)\cdot \hat{w}(M^{*}) \cdot (\epsilon' W / 2^{i}) \\
&= 24 C^2_H \epsilon'' \epsilon' \hat{w}(M^{*}).
\end{align*}

Similarly, since the number of inter-component edges is at most $6 \epsilon'' C^2_H |M|$, the number of free vertices that are frozen is also at most $6\epsilon'' C^2_H |M|$. Each free vertex in $\hat{F}$ besides those who are frozen is supposed to have its $y$-value decrease by $\delta_i/2$. Also note that a frozen free vertex could have its $y$-value increase by $\delta_i/2$, if it becomes inner. Therefore, $\sum_{f \in \hat{F}} y(f)$ decreases by at least \begin{align*} |\hat{F}| \cdot (\delta_i / 2) - 6\epsilon'' C^2_H |M| \delta_i &\geq  |\hat{F}| \cdot (\delta_i / 2) - 6\epsilon'' C^2_H (2^{i+2}/W)\cdot \hat{w}(M^{*}) \cdot (\epsilon' W / 2^{i}) \\
&\geq  |\hat{F}| \cdot (\delta_i / 2) - 24\epsilon'' \epsilon' C^2_H \cdot \hat{w}(M^{*}) \qedhere \end{align*}
\end{proof}

\begin{corollary}\cref{prop:RCS}(\ref{RCS:5})(\ref{item:boundedw}) holds throughout the algorithm. \end{corollary}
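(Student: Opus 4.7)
The plan is to show both invariants hold at the end of every iteration by bounding the cumulative error across the entire execution. The initial state trivially satisfies both: all vertices are free with $y(u) = \tau$, so $\sum_{f \in \hat{F}} y(f) = \tau |F|$, and $\Delta w \equiv 0$. It remains to bound the per-iteration growth of the two error quantities and sum over all iterations. The total iteration count is $T = O((\log W)/\epsilon')$: within scale $i$, $\tau$ decreases by $\delta_i/2$ per iteration from $\Theta(W/2^{i+1})$ down to the scale's stopping value, giving $\Theta(1/\epsilon')$ iterations per scale, and there are $L+1 = O(\log W)$ scales.

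For property~(\ref{item:boundedw}), \cref{lem:boundedchange} directly asserts that $\sum_{e \in E} |\Delta w(e)|$ grows by at most $\eta := 24 \epsilon'' \epsilon' C_H^2 \hat{w}(M^*)$ per iteration, so at every point $\sum_{e \in E} |\Delta w(e)| \leq T \eta$. Substituting $\epsilon'' = \epsilon/(48 C_H^2 \log W)$ and $T = O((\log W)/\epsilon')$ gives $T \eta = O(\epsilon) \cdot \hat{w}(M^*)$, which is at most $\epsilon' \hat{w}(M^*)$ once the hidden constant in $\epsilon' = \Theta(\epsilon)$ is chosen sufficiently large.

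For property~(\ref{RCS:5}), I would track the potential $\Phi_t := \sum_{f \in \hat{F}} y(f) - \tau |F|$ at the end of iteration $t$, starting from $\Phi_0 = 0$, and argue $\Phi_{t+1} - \Phi_t \leq \eta$. The dual-adjustment step drops $\tau$ by $\delta_i/2$, hence drops $\tau|F|$ by $(\delta_i/2)|F_{t+1}| + \tau_t |\mathrm{aug}|$; meanwhile \cref{lem:boundedchange} accounts for a decrease of $|\hat{F}|(\delta_i/2) - \eta$ in $\sum y(f)$ coming from the non-augmented, non-frozen vertices (with the error absorbing the frozen vertices who can move the wrong way by $\pm\delta_i/2$). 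The crucial additional observation is that each vertex $v$ removed from $\hat{F}$ by augmentation contributes $y(v) \geq \tau_t$ to the decrease in $\sum y(f)$, exactly canceling the $\tau_t|\mathrm{aug}|$ drop in $\tau|F|$. The inter-scale update, which adds $1.5\delta_{i+1}$ simultaneously to $\tau$ and to every $y(u)$, is $\Phi$-neutral. Summing, $\Phi \leq T \eta = O(\epsilon) \hat{w}(M^*) \leq \epsilon' \hat{w}(M^*)$, as required.

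The main subtlety I expect is the bookkeeping around the $\textsc{Cut}$ procedure: the temporary dummy matched edges freeze free vertices sitting at the bases of free blossoms split by the expander decomposition, so those vertices are absent from $\hat{F}$ during dual adjustment and may see $y$ move by $\pm \delta_i/2$ in either direction before being restored to $\hat{F}$ at iteration's end by the blossom dissolution and cleanup step. One must verify that the end-of-iteration free set used in invariant~(\ref{RCS:5}) coincides with the one over which \cref{lem:boundedchange} measures the decrease of $\sum y(f)$, and that the aggregate ``wrong-direction'' movement of frozen vertices is indeed captured by the $\eta$ error term; both points are what the second half of \cref{lem:boundedchange}'s proof, together with the bound on the number of frozen vertices from \cref{lem:inter_edges,lem:inter_edges2}, are designed to handle.
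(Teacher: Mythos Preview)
Your proposal is correct and follows essentially the same approach as the paper: bound the number of iterations by $O((\log W)/\epsilon)$, invoke \cref{lem:boundedchange} for the per-iteration error $\eta = 24\epsilon''\epsilon' C_H^2 \hat{w}(M^*)$, and multiply. The paper's argument for property~(\ref{RCS:5}) is phrased slightly differently---it observes directly that any currently free vertex that was never frozen satisfies $y(f)=\tau$, so the excess $\sum_{f\in\hat F} y(f) - \tau|\hat F|$ is entirely attributable to past freezing events, each contributing at most $\eta$---whereas you track the potential $\Phi_t$ explicitly and separately account for augmentation (using $y(v)\geq\tau$ for free $v$) and the inter-scale shift; both routes arrive at the same bound. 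One minor remark: the constants actually close exactly, since $T\eta \leq \frac{2\log W}{\epsilon}\cdot 24\,\frac{\epsilon}{48C_H^2\log W}\,\epsilon' C_H^2\,\hat w(M^*) = \epsilon'\hat w(M^*)$, so no tuning of the hidden constant in $\epsilon'=\Theta(\epsilon)$ is needed here.
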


\begin{proof}
For \cref{prop:RCS}(\ref{RCS:5}), note that there are at most $(L+1)\cdot (1/\epsilon + 2) \leq 2 (\log W)/\epsilon$ iterations in total. By \cref{lem:boundedchange}, $\sum_{e \in E}|\Delta w(e)|$ changes by at most $24\epsilon'' \epsilon' C^2_H \hat{w}(M^{*})$. By triangle inequality, we know the total change of  $\sum_{e \in E}|\Delta w(e)|$ throughout the algorithm is at most $24\epsilon'' \epsilon' C^2_H \hat{w}(M^{*}) \cdot 2 (\log W)/\epsilon \leq \epsilon' \hat{w}(M^{*})$ as $\epsilon'' = \epsilon / (48C^2_H \log W )$.

For \cref{prop:RCS}(\ref{item:boundedw}), let $\hat{F}$ denote the set of free vertices at this moment. Note that if $f \in \hat{F}$ is never frozen, we must have $y(f) = \tau$. By \cref{lem:boundedchange}, the frozen effects cause $\sum_{f \in \hat{F}} y(f)$ to increase at most $24\epsilon'' \epsilon' C^2_H \hat{w}(W^{*})$ per iteration. Since there are at most $2 (\log W)/ \epsilon$ iterations in total, we must have $\sum_{f \in \hat{F}} y(f) \leq \tau \cdot |\hat{F}| +  24\epsilon'' \epsilon' C^2_H \hat{w}(W^{*}) \cdot 2 (\log W)/ \epsilon \leq \tau \cdot |\hat{F}|+  \epsilon' \hat{w}(M^{*})$.
\end{proof}

\begin{lemma}Suppose that  \cref{prop:RCS}(\ref{RCS:1})--(\ref{RCS:4}) holds at the end of scale $i$, they must also hold at the beginning of scale $i+1$. \end{lemma}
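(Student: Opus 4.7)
The plan is to verify each of the four conditions \cref{prop:RCS}(\ref{RCS:1})--(\ref{RCS:4}) separately across the inter-scale update, which consists of the substitutions $\delta_i \mapsto \delta_{i+1} = \delta_i/2$ and $y(u) \mapsto y(u) + 1.5\delta_{i+1}$ for every $u \in V$ (and a corresponding $\tau \mapsto \tau + 1.5\delta_{i+1}$), with $M$, $\Omega$, $z$, and $\Delta w$ left untouched. The key observation is that the uniform $y$-shift was chosen precisely to ``pre-pay'' for the finer granularity of the next scale: for every edge $e = uv$, the quantity $yz(e)$ increases by exactly $3\delta_{i+1} = 1.5\delta_i$, which is the slack that must be reallocated from the $\delta_i$-dependent terms in conditions (\ref{RCS:3}) and (\ref{RCS:4}).

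For (\ref{RCS:1}) Granularity: $z(B)$ was a multiple of $\delta_i = 2\delta_{i+1}$, hence still a multiple of $\delta_{i+1}$. The old $y(u)$ was a multiple of $\delta_i/2 = \delta_{i+1}$, so $y(u) + 1.5\delta_{i+1} = \delta_{i+1}(2k+3)/2$ is a multiple of $\delta_{i+1}/2$, as required. Positivity is preserved because we only added a positive quantity. Finally, $w_{i+1}(e)$ being a non-negative multiple of $\delta_{i+1}$ follows from its definition together with the fact that $\epsilon'$ and $W$ are powers of two chosen so that $\delta_{i+1}$ divides $2^{i+1}$ for every relevant $i$.

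For (\ref{RCS:2}) Active Blossoms: Neither $M$, $\Omega$, nor $z$ changes during the transition, so the condition is trivially inherited. For (\ref{RCS:3}) Near Domination: starting from $yz(e) \geq w_i(e) - 1.5\delta_i$ and adding the $3\delta_{i+1} = 1.5\delta_i$ shift yields $yz(e) \geq w_i(e)$ after the transition. Since $w_i(e)$ and $w_{i+1}(e)$ differ by at most $2^i$ in a controlled way (and in the direction the argument requires, as guaranteed by the definition of $w_i$), this exceeds $w_{i+1}(e) - 1.5\delta_{i+1}$, giving the new near-domination. For (\ref{RCS:4}) Near Tightness: starting from the old bound $yz(e) \leq w_i(e) + 3(\delta_j - \delta_i)$ and adding $3\delta_{i+1}$, one obtains
\[
yz(e) \;\leq\; w_i(e) + 3\delta_j - 3\delta_i + 3\delta_{i+1} \;=\; w_i(e) + 3\delta_j - 3\delta_{i+1} \;=\; w_i(e) + 3(\delta_j - \delta_{i+1}),
\]
and the relationship between $w_i(e)$ and $w_{i+1}(e)$ again absorbs the difference, yielding $yz(e) \leq w_{i+1}(e) + 3(\delta_j - \delta_{i+1})$.

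The main obstacle will be the careful bookkeeping of how $w_i(e)$ relates to $w_{i+1}(e)$ and how that interacts with the $\delta_i$-slack terms, since the inequalities must remain tight on both sides after the uniform $y$-shift. All other steps are direct arithmetic; in particular, because the update is entirely local and uniform (no blossom structure, no matching edges, and no weight modifier change), there is no interplay with the expander-decomposition machinery of \cref{thm:routing-main} and no distributed implementation issues to address in this transition.
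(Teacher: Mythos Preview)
Your proposal is correct and follows essentially the same route as the paper: verify each of (\ref{RCS:1})--(\ref{RCS:4}) by direct arithmetic after the uniform shift $yz(e) \mapsto yz(e) + 3\delta_{i+1}$. The paper makes the comparison between $w_i(e)$ and $w_{i+1}(e)$ explicit---using $w_i(e) \leq w_{i+1}(e) \leq w_i(e) + \delta_{i+1}$ inline in the chains of inequalities---where you left it as ``in the direction the argument requires,'' but the computations are otherwise identical.
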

\begin{proof}
It is easy to see that \cref{prop:RCS}(\ref{RCS:1})(\ref{RCS:2}) are automatically satisfied as $\delta_{i+1} = \delta_{i} / 2$.  

For \cref{prop:RCS}(\ref{RCS:3}) (near domination), let $yz(e)$ and $yz'(e)$ denote the $yz$-value of $e$ prior to and after the update respectively in the last line of Scale $i$. We have:
\begin{align*}
yz'(e) &= yz(e) + 3\delta_{i+1} && \mbox{the $y$-value of both endpoints increase by $1.5\delta_{i+1}$}\\
&\geq w_i(e) - 1.5\delta_{i} + 3\delta_{i+1} && \mbox{near domination at the end of scale $i$} \\
&\geq w_{i+1}(e) - \delta_{i+1} - 1.5\delta_{i} + 3\delta_{i+1}  \\
&\geq w_{i+1}(e) - 1.5\delta_{i+1} 
\end{align*}

For \cref{prop:RCS}(\ref{RCS:4}) (near tightness), let $yz(e)$ and $yz'(e)$ denote the $yz$-value of $e$ prior to and after the update respectively in the last line of Scale $i$. Suppose that $e$ is of type $j$. We have:
\begin{align*}
yz'(e) &= yz(e) + 3\delta_{i+1} && \mbox{the $y$-value of both endpoints increase by $1.5\delta_{i+1}$}\\
&\leq w_i(e) + 3(\delta_j - \delta_i) + 3\delta_{i+1} && \mbox{near tightness at the end of scale $i$} \\
&\leq w_{i+1}(e) + 3(\delta_j - \delta_i + \delta_{i+1})  \\
&\leq w_{i+1}(e) + 3(\delta_j - \delta_{i+1}) &&&& \qedhere
\end{align*}
\end{proof}




\begin{lemma}\label{lem:aug0} With high probability, for every augmenting path $P$ in $G_{elig}/\Omega$ there exists an iteration $t$ of the augmentation step where $P$ is contained in $G^{(t)}_{elig}/\Omega$.  \end{lemma}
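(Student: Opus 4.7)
The main observation is that along an augmenting path $P = f_1, v_1, v_2, \ldots, v_k, f_2$ in $G_{elig}/\Omega$, the endpoints $f_1, f_2$ are (representatives of) free vertices while all interior vertices $v_1, \ldots, v_k$ are matched in $M/\Omega$. At iteration $t$ the graph $G_{elig}^{(t)}$ differs from $G_{elig}$ only by replacing, for each $f \in \hat{F}_{\leq C_H}$, all eligible edges incident to $f$ by the single edge $(f, u_f)$ to a uniformly random eligible neighbor. Since $\hat{F}_{\leq C_H} \subseteq \hat{F}_s$ consists only of free singletons, the random reassignment can touch only the endpoints of $P$; every interior edge of $P$ automatically survives in $G_{elig}^{(t)}/\Omega$ unchanged.

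Let $e_j$ denote the edge of $P$ incident to $f_j$ for $j \in \{1,2\}$. If $f_j \notin \hat{F}_{\leq C_H}$, then $e_j$ survives at every iteration. If $f_j \in \hat{F}_{\leq C_H}$, then $e_j$ survives at iteration $t$ exactly when $u_{f_j}$ is chosen to be the other endpoint of $e_j$, which happens with probability at least $1/C_H$ since $f_j$ has at most $C_H$ eligible neighbors. The two endpoint choices are independent, so $\Pr[P \subseteq G_{elig}^{(t)}/\Omega] \geq 1/C_H^2$, and by independence across iterations the probability this fails at every one of the $K C_H^2 \log n$ iterations is at most $(1 - 1/C_H^2)^{K C_H^2 \log n} \leq e^{-K \log n}$, which is $n^{-\Omega(K)}$.

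To finish by a union bound it is essential that, although $G_{elig}/\Omega$ may contain super-polynomially many augmenting paths, the event ``$P \subseteq G_{elig}^{(t)}/\Omega$'' depends only on the quadruple $(f_1, e_1, f_2, e_2)$ of endpoints and their incident edges, of which there are at most $O(n^2)$. Hence every augmenting path of $G_{elig}/\Omega$ is contained in $G_{elig}^{(t)}/\Omega$ for some $t$ with probability at least $1 - O(n^2) \cdot n^{-\Omega(K)} = 1 - 1/\poly(n)$ by choosing $K$ sufficiently large.

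The only subtle point is the very first observation: that interior vertices of an augmenting path must be matched and therefore cannot lie in $\hat{F}_{\leq C_H}$, so the random edge pruning affects $P$ only at its two endpoints. Once this is pinned down, the rest is a routine geometric-tail calculation combined with a polynomial union bound over endpoint-edge quadruples rather than over paths.
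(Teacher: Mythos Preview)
Your proof is correct and follows essentially the same approach as the paper: both argue that only the first and last edges of an augmenting path can be affected by the random pruning (since interior vertices are matched and hence not in $\hat{F}_{\leq C_H}$), bound the per-iteration survival probability by $1/C_H^2$, and take a union bound over polynomially many first/last edge pairs. Your exposition is in fact slightly more explicit than the paper's about \emph{why} the interior edges are unaffected, and your union-bound count of $O(n^2)$ edge pairs is a harmless tightening of the paper's $n^4$ vertex quadruples.
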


\begin{proof}
Let $P_{wxyz}$ denote the set of augmenting paths of $G_{elig}/\Omega$ where $(w,x)$ is the first edge and $(y,z)$ is the last edge. Note that it may be possible that $x = y$. Note that if both $(w,x)$ and $(y,z)$ are in  $G^{(t)}_{elig}/\Omega$ then we know all paths in $P_{wxyz}$ are contained in $G^{(t)}_{elig}/\Omega$.

Given an iteration $t$, we have:
\begin{align*}
&\Pr((w,x), (y,z) \in G^{(t)}_{elig}/\Omega) \\
&\geq \Pr((w,x)\in G^{(t)}_{elig}/\Omega) \cdot \Pr((y,z)\in G^{(t)}_{elig}/\Omega) \\
&\geq \frac{1}{C_H} \cdot \frac{1}{C_H} = \frac{1}{C^2_H}
\end{align*}
Therefore, 
$$\Pr( \{(w,x), (y,z)\} \not\subseteq G^{(t)}_{elig}/\Omega \mbox{ for all $t$}) \leq \left(1-\frac{1}{C^2_H}\right)^{K C^2_H \log n} = 1/\poly(n)$$
By taking an union over all $n^4$ possible combinations of $w,x,y,z$, we conclude w.h.p.~every augmenting path is contained in $G^{(t)}_{elig}/\Omega$ for some $t$.
\end{proof}

\begin{lemma}\label{lem:no_augmenting_path} After the augmentation step, w.h.p.~there are no augmenting paths in $G_{elig} / \Omega$.  \end{lemma}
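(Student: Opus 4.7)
The plan is a proof by contradiction, combining Lemma~\ref{lem:aug0} with Observation~\ref{obs:aug_gone} through a monotonicity property of the eligible graph during the Augmentation step.

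First I would establish \textbf{monotonicity of eligibility}: during the Augmentation step the dual values $y$ and $z$ and the blossom set $\Omega$ are all fixed, so the eligibility of a non-blossom edge $e$ can change only when its matched status is toggled via an augmentation. Observation~\ref{obs:aug_gone} tells us that the instant $e$ lies on an augmented path it becomes ineligible; moreover, because $yz(e)$ is frozen, the inequality on $yz(e)$ witnessing this ineligibility persists, so $e$ cannot later satisfy condition~(2) as unmatched, and it cannot flip back to matched without appearing on another augmenting path (which would itself require $e$ to be eligible). Hence once an edge has been augmented over, it stays ineligible and unmatched for the remainder of the step, and the eligible edge set shrinks monotonically.

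Next, assume for contradiction that $P$ is an augmenting path in $G_{elig}/\Omega$ at the end of the Augmentation step. By monotonicity, every edge of $P$ is eligible throughout the step. Since augmentation never unmatches a vertex, the free endpoints of $P$ were already free at the start, and the middle-vertex matched edges along $P$ keep their partners throughout; hence $P$ is also an augmenting path in the \emph{initial} $G_{elig}/\Omega$. The Cut procedure permanently makes inter-cluster edges of the expander decomposition $(\hat{U}_1,\ldots,\hat{U}_k)$ ineligible, so the portion of $P$ outside $\hat{F}_{\leq C_H}$ is contained in a single cluster $\hat{U}_i$. Applying Lemma~\ref{lem:aug0}, with high probability there is an iteration $t$ in which $P \subseteq G^{(t)}_{elig}/\Omega$, and by monotonicity the edges of $P$ are still eligible at the start of iteration $t$; combined with the cluster containment, $P$ lies inside $U^{(t)}_i = \hat{U}^{(t)}_i/\Omega$.

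Finally, in iteration $t$ the algorithm augments a maximal vertex-disjoint set $\Psi_i$ of augmenting paths inside $(G^{(t)}_{elig}/\Omega)[U^{(t)}_i]$. Maximality forces $P$ to share a vertex $v$ with some path $P' \in \Psi_i$; a short case analysis yields a contradiction in every case. If $v$ is a middle vertex of both $P$ and $P'$, the unique matched edge at $v$ lies on both paths and becomes ineligible after $P'$ is augmented by Observation~\ref{obs:aug_gone}, breaking $P$ before the end; if $v$ is a free endpoint of $P'$ then after augmenting $P'$ the vertex $v$ becomes matched and remains matched, contradicting $v$'s role on $P$ as either a free endpoint or a middle vertex matched through a different edge; the remaining mixed cases are immediately ruled out because $v$ cannot be simultaneously free (as an endpoint of $P'$) and matched (as a middle vertex of $P$) at the start of iteration $t$. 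The main subtlety I expect to navigate is coordinating the random edge substitutions at low-degree free endpoints in Lemma~\ref{lem:aug0} with cluster membership at iteration $t$, which is precisely what the $K \cdot C_H^2 \log n$ independent repetitions of the random choices in the algorithm are designed to guarantee.
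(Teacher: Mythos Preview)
Your proposal is correct and follows the same overall route as the paper: argue that a surviving augmenting path $P$ was already present before any augmentation (via Observation~\ref{obs:aug_gone}), invoke Lemma~\ref{lem:aug0} to place $P$ inside $G^{(t)}_{elig}/\Omega$ for some iteration $t$, and then derive a contradiction from the maximality of $\Psi_i$ together with the effect of \textsc{Cut}. Your explicit monotonicity argument and your vertex-sharing case analysis against $\Psi_i$ are more detailed than the paper's one-line appeal to ``$P$ would have been found,'' and they are fine.

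There is one genuine imprecision you should tighten. You write that ``the Cut procedure permanently makes inter-cluster edges of the expander decomposition ineligible, so the portion of $P$ outside $\hat F_{\le C_H}$ is contained in a single cluster $\hat U_i$.'' This is not literally what \textsc{Cut} does: for an inter-cluster edge $e$ that lies \emph{inside} an active blossom, \textsc{Cut} never touches $e$ itself (and blossom edges stay eligible by definition, independent of $\Delta w$); instead it redirects via $h(e)$ to the matched edge leaving the root blossom (regular case) or adds a dummy matched edge at the internal free vertex (free-blossom case). The paper's proof handles precisely this: if a contracted vertex $v\in P$ is a blossom cut internally by $(\hat U_1,\ldots,\hat U_k)$, then $v$ belongs to no $U^{(t)}_i$ at all, and one must argue separately that the redirected effect of \textsc{Cut} destroys $P$ --- either the matched edge of $P$ incident to $v$ becomes ineligible, or the endpoint free vertex inside $v$ is frozen so $P$ is no longer augmenting. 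Only after ruling out internally-cut blossoms can you conclude that every contracted vertex of $P$ lies in a single $U^{(t)}_i$. Your conclusion is right, but the justification must go through $h$ and the dummy-edge mechanism rather than a blanket ``inter-cluster edges become ineligible.''
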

\begin{proof}
If there exists an augmenting path $P \in G_{elig} / \Omega$ after the augmentation step, $P$ must be also be an augmenting path in $G_{elig} / \Omega$ before the augmentation step due to  \cref{obs:aug_gone}. By \cref{lem:aug0}, w.h.p.~there exists $t$ where $P$ is contained in $G^{(t)}_{elig}/\Omega$. The only reason that $P$ is not found is because that $P$ is not contained entirely in $G[U^{(t)}_i]$ for some $i$. This can happen when some vertex $v \in P$ is not in any $U^{(t)}_i$ or $P$ goes across two components $U^{(t)}_i$ and $U^{(t)}_j$ where $i\neq j$. 

If there exists a vertex $v \in P$ such that $v$ is not in any $U^{(t)}_i$, it must be the case that $v$ is a blossom that is cut internally by the cut of the expander decomposition. Let $e$ an edge in the blossom that is cut. If $v$ is an internal vertex of $P$, $h(e)$ must be the matched edge that connects $v$ and $u$, where $u$ is also in $P$. Since we have added $\delta_i$ to $\Delta w(h(e))$, $h(e)$ is no longer eligible and $P$ must not be an eligible augmenting path. If $v$ is not an internal node of $P$, then $v$ must be a contracted free blossom. Let $f$ be the free vertex in $v$.  Since we have temporarily frozen $f$ by putting a dummy matched edge $f'f$, $P$ cannot be an augmenting path.

Otherwise, $P$ goes across two component of the expander decomposition. It must be the case that there exists two consecutive vertices $x,y \in P$ where $x \in U^{(t)}_i$ and $y \in U^{(t)}_j$ for some $i \neq j$. If $xy$ is matched, let $\widehat{xy}$ denote the matched edge in $G$ represented by $xy$. In this case, we have added $\delta_i$ to the $\Delta w$-value of $\widehat{xy}$. If $xy$ is unmatched, let $\widehat{xy}$ denote the set of all unmatched edges in $G$ that are represented by $xy$. We have subtracted $\delta_i$ from the $\Delta w$-value of all the edges in $\widehat{xy}$. Therefore, in any case, edges between $x$ and $y$ are no longer eligible so either $P$ cannot be an alternating path in $G_{elig}/ \Omega$. 
\end{proof}

\begin{lemma} \label{lem:no_bad_free_vertex}
In the blossom shrinking step, for each $1 \leq i \leq k$ and for every $f \in \hat{F}'$, $f \notin B$ for all $B \in \Omega_i$. Therefore, $\{\Omega_i\}_{1\leq i \leq k}$ are disjoint.
\end{lemma}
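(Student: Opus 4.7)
The plan is to argue by contradiction: suppose some $f \in \hat{F}'$ belongs to a blossom $B \in \Omega_i$. Because $B$ is full and $f$ is unmatched, $f$ must be the \emph{base} of $B$, so $f$ has an unmatched edge $(f,u)$ lying on the outer odd cycle of $B$ in the post-\textsc{Cut} graph $(G_{elig}/\Omega)[V_i \cup \hat{F}']$. Let $(f,w)$ be the edge of $G_{elig}$ represented by $(f,u)$ after contracting the blossoms in $\Omega$. Since $f \in \hat{F}'$ is not a member of any $\hat{V}_j$, the edge $(f,w)$ is not inter-component with respect to $\{\hat{V}_j\}_{j=1}^k$, so \textsc{Cut} does not modify its weight modifier; hence $(f,w)$ is eligible both after \textsc{Cut} and in the post-augmentation eligible graph referenced by \cref{lem:no_augmenting_path}.

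By the defining property of $\hat{F}'$, the eligible neighbor $w$ of $f$ has an eligible edge to some singleton free vertex $f' \in \hat{F}_s$ with $f' \ne f$. After contracting $\Omega$ this becomes an eligible unmatched edge $(u,f')$ in the post-augmentation $G_{elig}/\Omega$. I next apply the standard blossom property: inside a blossom $B$ with base $f$, every non-base vertex $u$ admits an alternating path from $u$ to $f$ of even length whose first edge is matched and all of whose edges are in $E_B$. Because every edge of $E_B$ is eligible after \textsc{Cut}, it is also eligible in the post-augmentation graph. Prepending the unmatched edge $(f',u)$ to this internal alternating path yields an alternating walk from $f'$ to $f$ in post-augmentation $G_{elig}/\Omega$; since $f$ is the unique free vertex in the full blossom $B$ and $f' \ne f$, we have $f' \notin B$, so the walk is a simple path. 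Its endpoints $f$ and $f'$ are distinct free singletons, making it an augmenting path, which contradicts \cref{lem:no_augmenting_path}.

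For the disjointness claim, each $B \in \Omega_i$ is constructed inside $(G_{elig}/\Omega)[V_i \cup \hat{F}']$, so $B \subseteq V_i \cup \hat{F}'$. The argument just completed forces $B \cap \hat{F}' = \emptyset$, hence $B \subseteq V_i$. Because the $\hat{V}_j$'s partition $V \setminus \hat{F}'$, the contracted sets $V_j = \hat{V}_j/\Omega$ are pairwise disjoint, and therefore so are the vertex sets of blossoms taken from $\Omega_i$ and $\Omega_j$ for any $i \ne j$.

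The main obstacle is carefully tracking which eligible graph each object lives in: the blossom $B$ is defined in the post-\textsc{Cut} graph, whereas \cref{lem:no_augmenting_path} is stated in the post-augmentation graph, and the eligible neighbor $f'$ of $w$ is furnished by the $\hat{F}'$ definition in the post-augmentation graph. The two facts that reconcile these viewpoints are (i) post-\textsc{Cut} eligibility is a subset of post-augmentation eligibility, since \textsc{Cut} only destroys eligibility; and (ii) \textsc{Cut} never modifies an edge incident to a vertex of $\hat{F}'$, since such vertices lie outside every $\hat{V}_j$.
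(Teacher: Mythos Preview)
Your proof is correct and follows essentially the same approach as the paper's: assume $f\in\hat F'$ lies in some new blossom $B$, pick an eligible neighbor of $f$ inside $B$, use the definition of $\hat F'$ to find a second free singleton $f'$ adjacent to that neighbor, and exhibit an augmenting path from $f'$ to $f$ in $G_{elig}/\Omega_{old}$, contradicting \cref{lem:no_augmenting_path}. Your version is in fact more careful than the paper's on two points the paper leaves implicit: (i) distinguishing the vertex $u\in G/\Omega$ on the outer cycle from the underlying vertex $w\in G$ to which the $\hat F'$ condition actually applies, and (ii) checking that the edges you use, being post-\textsc{Cut} eligible, are also post-augmentation eligible (since \textsc{Cut} only destroys eligibility and never touches edges incident to $\hat F'$). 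One small point: your claim that $(f,u)$ lies on the outer cycle of $B$ presumes $f$ sits directly on that cycle rather than inside a sub-blossom; this holds once you take $B$ to be the \emph{smallest} blossom in $\Omega_i$ containing $f$, which you may do without loss of generality.
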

\begin{proof}
Suppose to the contrary that $B \in \Omega_i$ is a blossom formed during the blossom shrinking step with $f \in B$ for some $f \in \hat{F}'$. 

Given a vertex $v$, we define the {\it free degree} to be the number of eligible neighbors in $\hat{F}_{s}$ . Let $x \in B$ be a neighbor of $f$. Since $f \in \hat{F}'$, the free degree of $x$ must be at least 2. Let $f' \in \hat{F}_s$ be the other free vertex that is an eligible neighbor of $x$. If blossoms $B$ is formed, then there exists an augmenting path from $f$ to $f'$ in $G /\Omega_{old}$ where $\Omega_{old}$ is the set of active blossoms before the blossom shrinking step. By \cref{lem:no_augmenting_path}, this is a contraction. 
\end{proof}

\begin{lemma}\label{lem:outer_outer}
There are no eligible edges between vertices in $V_{out}$ of $G_{elig}/\Omega$ after the Blossom Shrinking step.
\end{lemma}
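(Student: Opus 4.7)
The plan is to argue by contradiction, following the standard Edmonds trichotomy: an eligible edge between two vertices of $V_{out}$ creates either an augmenting path (different free-vertex roots) or a new blossom (same root). Suppose $(u,v)\in E_{elig}$ with $u\ne v$ and $u,v\in V_{out}$ in the new $G_{elig}/\Omega$ after the Blossom Shrinking step. The edge $(u,v)$ must be unmatched: if it were matched then $v$ would be $u$'s unique matched partner, but the matched edge at $u$ is the last edge of an even-length alternating path from a free vertex, which places its other endpoint at odd depth and hence in $V_{in}$, contradicting $v\in V_{out}$. Let $P_u,P_v$ be even-length alternating paths from free vertices $f_u,f_v$ to $u,v$ respectively.

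If $f_u\ne f_v$, the concatenation $P_u\cdot (u,v)\cdot \bar{P}_v$ contains an augmenting path in the new $G_{elig}/\Omega$. Because every application of \textsc{Cut} only removes eligibility (both branches of the $\Delta w$ update move the affected edge out of $E_{elig}$) and because the new $\Omega$ only refines the old one, this augmenting path lifts to an augmenting path in the post-augmentation $G_{elig}/\Omega_{old}$, contradicting \cref{lem:no_augmenting_path}. The interesting case is $f_u=f_v=f$, where the closed walk contains an odd alternating cycle $C$, a blossom with base $f$ in the new $G_{elig}/\Omega$.

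I then split on where the vertices of $C$ sit with respect to the blossom-shrinking partition $\{\hat V_i\}$ of $V\setminus \hat F'$. A blossom has at most one free vertex, so at most one vertex of $C$ lies in $\hat F_s$; if that vertex is the base $f$ and $f\in\hat F'$, then applying the defining property of $\hat F'$ to a neighbor $x$ of $f$ inside $C$ yields another eligible free neighbor $f'\in\hat F_s$ of $x$, and $f'\to x$ followed by the half of $C$ from $x$ back to $f$ is augmenting, again contradicting \cref{lem:no_augmenting_path} and reusing the argument of \cref{lem:no_bad_free_vertex}. Otherwise no vertex of $C$ lies in $\hat F'$, and if all vertices of $C$ sit inside $V_i\cup\hat F'$ for a single $i$, the maximality of $\Omega_i$ during blossom shrinking nests $C$ inside a member of $\Omega_i$, collapsing $u$ and $v$ to the same vertex of the new $G/\Omega$ and contradicting $u\ne v$.

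The only remaining possibility is that $C$ either uses an inter-cluster edge of $\{\hat V_i\}$ lying outside every old root blossom, which \textsc{Cut} already made ineligible and thus contradicts $C\subseteq G_{elig}$, or $C$ contains an old root blossom $B'$ that straddles two clusters. In the straddling case $B'$ cannot be the base of $C$ (whose base is $f$, lying outside every old blossom by hypothesis), so $B'$ contributes one matched and one unmatched edge to $C$, and the matched one must be $B'$'s unique global matched edge. When $B'$ is regular, \textsc{Cut} targets that matched edge via the redirection $h$ and makes it ineligible; when $B'$ is free, \textsc{Cut} freezes the base of $B'$ with a dummy matched edge, so $B'$'s only matched edge now leads to a pendant dummy vertex, which cannot lie on any blossom cycle. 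Either way an edge of $C$ is forced out of $G_{elig}$, delivering the final contradiction. The main obstacle is this last sub-case, which requires careful bookkeeping of the redirection $h$ and of the dummy-freezing device in \textsc{Cut} to verify that the matched edge incident to a straddling blossom inside $C$ is precisely the edge whose eligibility \textsc{Cut} destroys.
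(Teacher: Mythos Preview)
Your argument follows the same contradiction/case-analysis skeleton as the paper, but two steps do not go through as written.

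\textbf{The $f_u\neq f_v$ case.} You assert that the concatenation $P_u\cdot(u,v)\cdot\bar P_v$ ``contains an augmenting path''. This is only immediate when $P_u$ and $P_v$ are vertex-disjoint; if they overlap, the concatenation is merely an alternating walk, and extracting an augmenting path from it is not automatic. The paper avoids this by arguing that if $P'_u,P'_v$ are disjoint then the concatenation is a genuine augmenting path (contradiction), hence they must share a vertex, and then it reroutes both paths through the first shared vertex so that they start at the \emph{same} free vertex. After this reduction there is only one case to analyze.

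\textbf{The maximality argument and the level of contraction.} Your cycle $C$ lives in $G_{elig}/\Omega$ with the \emph{new} $\Omega$, yet you reason about old root blossoms and about the partition $\{\hat V_i\}$, both of which are defined at the $\Omega_{old}$ level. More seriously, ``maximal set of nested blossoms'' in the Blossom Shrinking step means maximal with respect to Edmonds' search from the free vertices available inside $(G_{elig}/\Omega_{old})[V_i\cup\hat F']$; it does \emph{not} mean every odd eligible cycle in that subgraph is contained in some $B\in\Omega_i$. So knowing merely that $C\subseteq V_i\cup\hat F'$ is not enough: you also need the stem from the free vertex $f$ to the base of $C$ to lie in $V_i\cup\hat F'$, so that $u$ and $v$ are still outer \emph{within} that component. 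This is exactly why the paper lifts the full paths to $\bar P_u,\bar P_v$ in $G/\Omega_{old}$ and then splits on whether \emph{all} of $V(\bar P_u\cup\bar P_v)$ sits in a single $V_i\cup\hat F'$. Once you restrict attention to the cycle $C$ alone, the case ``$C\subseteq V_i\cup\hat F'$ but the stem crosses clusters or passes through a straddling old blossom'' is simply not covered, and your appeal to maximality fails there.

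Both issues are repaired by doing what the paper does: first reduce to a single free vertex, then lift $P_u,P_v$ to $\bar P_u,\bar P_v$ in $G/\Omega_{old}$, and run the case analysis on the full union $\bar P_u\cup\bar P_v$ rather than on the cycle. Your treatment of the \textsc{Cut} cases (inter-cluster edge, straddling regular blossom, straddling free blossom) is then correct and matches the paper.
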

\begin{proof}
Suppose to the contrary that there exists $u,v \in V_{out} \subseteq V(G_{elig}/\Omega)$ such that $(u,v) \in E(G_{elig}/ \Omega)$. Since $u,v \in V_{out}$, $(u,v)$ must be an eligible unmatched edge.  

First we claim that we may assume w.l.o.g.~there exists alternating paths $P_u$ and $P_v$  in $G_{elig}/ \Omega$ from the same free vertex to $u$ and $v$. By the definition of $V_{out}$, there exists two alternating paths $P'_u$ and $P'_v$ from some free vertices to $u$ and $v$. It must be the case that $P'_u$ and $P'_v$ shares at least one vertex. Otherwise, an augmenting path will be formed by concatenating $P'_u$, $(u,v)$ and $P'^{r}_v$, where $P'^{r}_v$ denotes the reverse of $P'_v$, which is impossible after the augmentation step by \cref{lem:no_augmenting_path}.  Let $x$ denote the first vertex on $P'_u$ such that $x \in P'_u \cap P'_v$. Let $P'_u(x)$ and $P'_v(x)$ denote the prefixes of $P'_u$ and $P'_v$ from the beginning of the paths to the first occurrence of $x$. Note that $|P'_u(x)|$ and $|P'_v(x)|$ must have the same parity; otherwise an augmenting path will be formed by concatenating $P'_u(x)$ and $P'^{r}_u(x)$.  Since 
$|P'_u(x)|$ and $|P'_v(x)|$ have the same parity, we may replace $P'_v(x)$ in $P'_v$ by $P'_u(x)$. Let $P_v$ be the resulting path and let $P_u = P'_u$. $P_u$ and $P_v$ must be alternating paths starting from the same free vertex that reaches $u$ and $v$.

Let $\Omega_{old}$ denote the set of blossoms before the Blossom Shrinking step. Let $\bar{P}_u$ and $\bar{P}_v$ denote some mapping of the alternating paths $P_u$ and $P_v$ from $G / \Omega$ to $G/ \Omega_{old}$.

Let $V(\bar{P}_u \cup \bar{P}_v)$ denote the vertices that on $\bar{P}_u$ and $\bar{P}_v$. Let $\hat{V}(\bar{P}_u \cup \bar{P}_v)$ denote the original vertices in $G$ represented by those in $V(\bar{P}_u \cup \bar{P}_v)$. If $V(\bar{P}_u \cup \bar{P}_v)$ is fully contained in some $V_i \cup \hat{F}'$ for some $i$, then $\Omega_i$ would not be a maximal set of nested blossoms in $V_i\cup \hat{F}'$. Therefore, it must be the case that either there exists $s \in V(\bar{P}_u \cup \bar{P}_v)$ such that $s$ does not belong to $V_i \cup \hat{F}'$ for all $i$ or there exists an edge $xy \in \bar{P}_u$ or $xy \in \bar{P}_v$ such that $x$ and $y$ are in different components of the expander decomposition.

Suppose there exists $s \in V(\bar{P}_u \cup \bar{P}_v)$ such that $s \notin V_i \cup \hat{F}'$ for any $i$, $s$ must be a contracted blossom that was cut internally by the cut of the expander decomposition. Let $\hat{s}$ denotes the set of original vertices that $s$ is representing. It must be the case that some edge $e \in E(\hat{s})$ is cut in the expander decomposition. If $s$ is the starting vertex of $\bar{P}_u$ and $\bar{P}_v$, then $s$ must be a contracted free blossom. Let $f$ be the free vertex in $\hat{s}$. Since we have added a dummy vertex $f'$ that is matched to $f$, $\bar{P}_u$ and $\bar{P}_v$ are not alternating paths that start from a free vertex. This contradicts with our assumption.  If $s$ is not the starting vertex, let $s' s$ be the matched edge incident to $s$. Note that $s's$ must be in $\bar{P}_u$ or $\bar{P}_v$. Let $\hat{s}' \hat{s}$ denote the edge in the original graph represented by $s's$. Since we have added $\delta_i$ to $\Delta w(\hat{s}'\hat{s})$, $s's$ must have become ineligible. This implies either $\bar{P}_u$ or $\bar{P}_v$ is not an alternating path in $G / \Omega_{old}$.

Otherwise, suppose there exists an edge $xy \in \bar{P}_u$ or $xy \in \bar{P}_v$ such that $x$ and $y$ are in different components of the expander decomposition.  If $xy$ is matched, let $\widehat{xy}$ denote the matched edge in $G$ represented by $xy$. In this case, we have added $\delta_i$ to the $\Delta w$-value of $\widehat{xy}$. If $xy$ is unmatched, let $\widehat{xy}$ denote the set of all unmatched edges in $G$ that are represented by $xy$. We have subtracted $\delta_i$ from the $\Delta w$-value of all the edges in $\widehat{xy}$.  Therefore, in any case, edges between $x$ and $y$ are no longer eligible so either $\bar{P}_u$ or $\bar{P}_v$ cannot be an alternating path in $G_{elig}/ \Omega_{old}$.  
\end{proof}

\begin{lemma}Suppose that \cref{prop:RCS}(\ref{RCS:1})--(\ref{RCS:4}) holds in the beginning of an iteration, they must also hold at the end of the iteration. \end{lemma}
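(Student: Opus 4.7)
The plan is to verify that each of the four invariants \ref{RCS:1}--\ref{RCS:4} is preserved through each of the four substeps of an iteration—Augmentation, Blossom Shrinking, Dual Adjustment, and Blossom Dissolution/Clean Up—noting that dummy vertices and edges created by $\textsc{Cut}$ are removed in the Clean Up phase, so the invariants at the end of the iteration need only be checked on the original graph $G$.

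Granularity (\ref{RCS:1}) and Active Blossoms (\ref{RCS:2}) are direct. The Cut procedure only shifts $\Delta w(e)$ by $\pm \delta_i$, preserving $w_i(e)$ as a multiple of $\delta_i$; Dual Adjustment shifts $y$ by $\pm\delta_i/2$ and $z$ of root blossoms by $\pm\delta_i$; and newly formed blossoms in the Blossom Shrinking step must be free (they contain a free vertex, hence lie in $V_{out}$), so their $z$-values rise from $0$ to $\delta_i > 0$ in Dual Adjustment; existing root blossoms in $V_{in}$ have $z \geq \delta_i$ by granularity, so after subtracting $\delta_i$ they remain non-negative; root blossoms whose $z$ drops to $0$ are dissolved. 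For the full-blossom condition, Augmentation along paths in $G_{elig}/\Omega$ preserves fullness of each $B \in \Omega$ by the classical lemma cited in the preliminaries, and blossoms added in Blossom Shrinking are full by construction.

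The core work is in Near Domination (\ref{RCS:3}) and Near Tightness (\ref{RCS:4}). For the $\textsc{Cut}$ modifications: if $e \in h(C) \cap E_{elig}$ is unmatched, then $yz(e) \leq w_i^{\text{old}}(e) - \delta_i$, so after $w_i$ drops by $\delta_i$ we trivially still have $yz(e) \leq w_i^{\text{new}}(e)$, and Near Domination $yz(e) \geq w_i^{\text{new}}(e) - \tfrac{1}{2}\delta_i$ follows from the old bound. If $e$ is a type-$j$ matched edge, eligibility gives $yz(e) \geq w_i^{\text{old}}(e) + 3(\delta_j - \delta_i) - \tfrac{1}{2}\delta_i$, and after $w_i$ rises by $\delta_i$ both sides of Near Tightness and Near Domination remain intact because $\delta_j \geq \delta_i$. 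Newly matched/blossom edges become type $i$, and their prior unmatched-eligibility bound $yz(e) \leq w_i(e) - \delta_i$ trivially implies Near Tightness $yz(e) \leq w_i(e)$.

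The most delicate step is Dual Adjustment. For matched edges straddling $V_{in}$ and $V_{out}$, the $y$-changes cancel; for edges entirely inside a root blossom, the $y$-decrease is compensated by the $z$-increase of the innermost enclosing root blossom (when it is outer) or the $z$-decrease matches the $y$-increase (when it is inner), so $yz(e)$ is unchanged. The problematic case for Near Domination is an edge with both endpoints in outer root blossoms (or straddling $V_{out}$), where $yz(e)$ drops by $\delta_i$. If such an edge is eligible, Lemma~\ref{lem:outer_outer} derives a contradiction, so no such edges exist; if it is ineligible, then $yz(e) > w_i(e) - \delta_i$, which by granularity forces $yz(e) \geq w_i(e) - \tfrac{1}{2}\delta_i$, leaving exactly the needed $\delta_i$ buffer so that $yz(e) \geq w_i(e) - \tfrac{3}{2}\delta_i$ still holds after Dual Adjustment. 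The main obstacle will be the careful type-bookkeeping across these steps together with the invocation of Lemma~\ref{lem:outer_outer} to eliminate the otherwise fatal $V_{out}$--$V_{out}$ eligible-edge case.
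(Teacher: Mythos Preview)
Your approach mirrors the paper's: verify each invariant through each substep, with the crux being the Dual Adjustment analysis handled via granularity slack and Lemma~\ref{lem:outer_outer}. Two points need repair.

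First, your justification that newly formed root blossoms lie in $V_{out}$ because they ``contain a free vertex'' is wrong: a blossom formed from an eligible edge between two outer non-root vertices in an alternating tree need not contain the free root of that tree. The correct reason (which the paper asserts without elaboration) is simply that in Edmonds' search a new blossom is always contracted at an outer base, so the resulting root blossom is outer; your conclusion stands, but the stated reason does not.

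Second, and more substantively, your Dual Adjustment analysis addresses only Near Domination. You must also verify Near Tightness (\ref{RCS:4}) for matched or blossom edges whose $yz$-value \emph{increases}---the case symmetric to the one you handle. The paper's key observation here is that every vertex in $\hat V_{out}$ is either free or incident to an \emph{eligible} matched edge; hence an \emph{ineligible} matched edge $e=(u,v)$ has neither endpoint in $\hat V_{out}$, so $yz(e)$ rises by at most $\delta_i$ during adjustment. Combined with the ineligibility bound $yz(e) < w_i(e) + 3(\delta_j - \delta_i) - 0.5\delta_i$ and granularity, this yields $yz(e) \le w_i(e) + 3(\delta_j - \delta_i) - \delta_i$ before adjustment and therefore $yz(e) \le w_i(e) + 3(\delta_j - \delta_i)$ after. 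Without this argument the tightness direction of your proof has a genuine gap; the phrase ``careful type-bookkeeping'' does not cover it, since what is needed is the structural fact about $\hat V_{out}$, not bookkeeping.
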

\begin{proof}
\cref{prop:RCS}(\ref{RCS:1}) is satisfied because throughout the iteration, $z(B)$ and $\Delta w_i(e)$ changes only by multiples of $\delta_i$ and $y(u)$ changes by multiples of $\delta_i/2$. 

Then we claim \cref{prop:RCS}(\ref{RCS:2}) is satisfied. First we argue that each $B \in \Omega$ is {\it full} (i.e .$|M \cap E_{B}| = \lfloor |B| / 2\rfloor$). This is because the augmentation step does not affect the fullness of the blossoms and all the new blossoms created during the Blossom Shrinking step must be full.  Now observe that all the blossoms at the beginning of the iteration must have non-zero $z$-values. Moreover, all the zero $z$-value root blossoms formed during the Blossom Shrinking step must be outer. It implies that all the $z$-values of the root blossoms can only be non-negative after the Dual Adjustment. Then all the root blossoms with zero $z$-values are dissolved after the Blossom Dissolution step. Also note that the dummy vertices never join any blossoms during the iteration.

Now we show that \cref{prop:RCS}(\ref{RCS:3},\ref{RCS:4}) are satisfied. In the augmentation step, when we switch the status of an eligible matched edge to unmatched or an eligible unmatched edge to matched, \cref{prop:RCS}(\ref{RCS:3}) and \cref{prop:RCS}(\ref{RCS:4}) are still guaranteed to hold. This is because changing a matched edge to unmatched does not impose additional constraints. On the other hand, an eligible unmatched edge $e$ at scale $i$ must satisfy $yz(e) \leq w_i(e) - \delta_i$. This means that if we change the status of $e$ to become matched, it also satisfy \cref{prop:RCS}(\ref{RCS:4}) (near tightness), which is required for matched edges.

Moreover, in the invocations of $\textsc{Cut}(\cdot)$, when we adjust $\Delta w$, we only add $\delta_i$ to the eligible matched edges and subtract $\delta_i$ from eligible unmatched edges. If $e$ is an eligible matched edge of type $j$, it satisfy $w_i(e) + 3(\delta_j - \delta_i ) - 0.5 \delta_i \leq yz(e) \leq w_i(e) + 3(\delta_j - \delta_i)$. After we add $\delta_i$ to $\Delta w(e)$, we must have $w_i(e) + 3(\delta_j - \delta_i) - 1.5\delta_i \leq yz(e) \leq w_i(e) + 3(\delta_i - \delta_j) - \delta_i$. This means $yz(e)$ still satisfies both near domination and near tightness. If $e$ is an eligible unmatched edge, subtracting $\delta_i$ from $\Delta w(e)$ would only make near domination easier to satisfy.

Now we show that the dual adjustment step also maintains the near tightness property and the near domination property. Consider an edge $e=uv$. If both $u$ and $v$ are not in $\hat{V}_{in} \cup \hat{V}_{out}$ or both $u$ and $v$ are in the same root blossom in $\Omega$, then $yz(e)$ is unchanged, which implies near tightness and near domination remain to be satisfied. The remaining cases are as follows:

\begin{enumerate}
\item $e \notin M$ and at least one endpoint is in $\hat{V}_{in} \cup \hat{V}_{out}$. If $e$ is ineligible, then $yz(e) > w_i(e) - \delta_i$. Since both $yz(e)$ and $w_i(e)$ are multiples of $\delta_i/2$, it must be the case that $yz(e) \geq w_i(e) - 0.5\delta_i$ before the adjustment. Since $yz(e)$ can decrease at most $\delta_i$ during the adjustment, we have $yz(e) \geq w_i(e) - 1.5\delta_i$ after the adjustment. If $e$ is eligible, then at least one of $u, v$ is in $\hat{V}_{in}$ due to \cref{lem:outer_outer}. Therefore, $yz(e)$ cannot be reduced, which preserves \cref{prop:RCS}(\ref{RCS:3}).

\item $e \in M$ and at least one endpoint is in $\hat{V}_{in} \cup \hat{V}_{out}$. If $e$ is ineligible, we have $yz(e) < w_i(e) + 3(\delta_j - \delta_i) - 0.5\delta_i$. Since both $yz(e)$ and $w_i(e)$ are multiples of $\delta_i / 2$, it must be the case that $yz(e) \leq w_i(e) + 3(\delta_i-\delta_j) - \delta_i$ before the adjustment. Since every vertex in $\hat{V}_{out}$ is either free or incident to an eligible matched edge, we have $u,v \notin \hat{V}_{out}$. Therefore, $yz(e)$ is increased by at most $\delta_i$, and it does not decrease during the adjustment. This implies $yz(e) \leq w_i(e) + 3(\delta_i - \delta_j)$ after the adjustment, satisfying \cref{prop:RCS}(\ref{RCS:3},\ref{RCS:4}) (near domination and near tightness). If $e$ is eligible then it must be the case that one endpoint is in $\hat{V}_{in}$ and the other is in $\hat{V}_{out}$. $yz(e)$ value would remain unchanged and so \cref{prop:RCS}(\ref{RCS:3},\ref{RCS:4}) (near domination and near tightness) are satisfied.\qedhere
\end{enumerate}
\end{proof}

\newcommand{\exchange}{\operatorname{\texttt{Exchange}}}

\subsection{Implementations and Running Times}\label{sec:imple}
In this section, we describe how to implement the algorithm in the \congest model with the proposed running time.

\paragraph{Storing the Variables and the Blossom Structures with $O(\log n)$ Bits per Vertex}
We begin by describing how we store the variables involved in the algorithm. For variables $M$, $y(u)$, $\Delta w$, and the type of the edge, they can be kept by the vertex itself or the incident vertices. Given a set of blossoms $\Omega$ and a set of vertices $X$, we define $\Omega[X]$ to be $\{B \in \Omega \mid B \subseteq X \}$, the set of blossoms that are fully contained in $X$. We will need the following property: Each node is only storing $O(\log n)$-bits so that given any $X\subseteq V$, $\Omega[X]$ and the $z$-values of the blossoms in $\Omega[X]$ can be reconstructed by using the information stored in every vertex of $X$. 

For each blossom $B$, we name the blossom by an edge $e_{B}=(u,v)$ where $u,v \in B$ and $u$ and $v$ were outer at the time right before $B$ is contracted. Note that no two blossoms can have the same name. Now for every edge $e$, we record whether it is a blossom edge. If $e$ is a blossom edge, we label $e$ by $e_B$ where $B$ is the immediate blossom it belongs to. Also, to preserve the structure of the blossom tree, for each $B$, we store the name of the parent blossom of $B$ and the $z$-value of $B$ at $e_B$.  Now given $X$, $\Omega[X]$ can be reconstructed by gathering the name of the parent of $e_B$ for every $B \in \Omega[X]$ and the label of every edge in $G[X]$. Likewise, the $z$-values of the blossoms in $\Omega[X]$ can be recovered by gathering the $z$-value stored at $e_B$ for each $B$.


Each edge is only keeping $O(\log n)$-bits information. Since the degeneracy of the graph is $O(1)$, we can assign the edges to their incident edges in a way where every vertex is responsible for at most $O(1)$ edges. Therefore, each vertex is only keeping $O(\log n)$-bits information.


\paragraph{Routing Messages in Root Blossoms} Given a set of vertices $X$, let $\exchange(X)$ be the procedure that a vertex in $X$ collects a message of size of $O(\log n)$ from every vertex in $X$ and then the vertex sends a message of size $O(\log n)$ back to every vertex in $X$. Let $\mathcal{C}$ be a collection of sets, $\exchange(\mathcal{C})$ denotes the procedure where $\exchange(X)$ is executed for every $X \in \mathcal{C}$. Given an expander decomposition $(\hat{V}_1, \ldots, \hat{V}_k)$ with parameter $\epsilon$, by \Cref{thm:routing-main}, $\exchange(\{\hat{V}_i \}_{i=1}^{k})$ can be implemented in $\poly(\log n, 1/\epsilon)$ rounds.

Now we claim that at any point in the algorithm, if $\Omega_{root} \subseteq \Omega$ is the current set of root blossoms, $\exchange(\Omega_{root})$ can be implemented in $\poly(\log n, 1/\epsilon)$ rounds.  Consider each root blossom $B \in \Omega_{root}$, when $B$ is formed it must belong entirely to some component in the expander decomposition of the iteration.

$\exchange(\Omega_{root})$ is done by iterating through all the expander decompositions of the blossom shrinking step for each scale $i$ and each iteration $j$.  Let $\Omega_{ij} \subseteq \Omega_{root}$ denote the set of root blossoms formed during scale $i$ and iteration $j$. Let $\mathcal{V}_{ij}$ denote the exapnder decomposition of scale $i$ and iteration $j$.  For each scale $i$ and iteration $j$, we can perform $\exchange(\mathcal{V}_{ij})$ to simulate $\exchange(\{\Omega_{ij}[\hat{V}_t] \}_{t=1}^{k_{ij}})$, where $k_{ij} = |\mathcal{V}_{ij}|$. The total rounds needed is:
$$O( \#(scales) \times (\#iterations) \times T_{exp\_routing}) = poly(\log n, 1/\epsilon) \mbox{ rounds.}$$



%


We start by discussing the implementation of the subroutine $\textsc{Cut}(\hat{V}_1, \ldots, \hat{V}_k)$. Suppose edges in $C$ are marked, we may mark $h(C)$ as follows. Perform $\exchange(\Omega_{root})$. For each $B \in \Omega_{root}$, for every internal edge $e \in C \cap B$, unmark $e$ and then (i) mark the matched incident edge to $B$ if $B$ is a regular blossom, or (ii) mark the free vertex in $B$ if $B$ is a free blossom. The marked edges and vertices form $h(C)$. Therefore, $\textsc{Cut}(\cdot)$ can be implemented in $\poly(\log n, 1/\epsilon)$ rounds. Next, we show how each step of the algorithm can be implemented:

\begin{enumerate}[leftmargin=*]

\item {\bf Augmentation:}  The maximal matching in step $(\ref{step_exp:1})$ can be done in $O(\log n)$ by Luby's algorithm.  The expander decomposition in Step $(\ref{step_exp:3})$ can be computed in $\poly(\log n, 1/\epsilon)$ rounds. At iteration $t$ of the augmentation, we need to find a maximal set of augmenting path in $(G_{elig}/ \Omega)[U^{(t)}_i]$ for each $1 \leq i \leq k$. Note that the only vertices in $(G^{(t)}_{elig}/ \Omega)[U^{(t)}_i]$ that is not in $(G_{elig}/ \Omega)[U_i]$ are vertices in $\hat{F}_{\leq C_{H}}$. These free vertices must have exactly one neighbor in $(G^{(t)} /\Omega)[U^{(t)}_i]$. We let each such vertex sends a token containing its ID to its neighbor.  It suffices for every vertex who received tokens to keep an arbitrary token, because a vertex can be in at most one augmenting path and the augmenting path can use whichever neighboring free vertex as the endpoint (as those free vertices have degree 1).

Then we perform $\exchange(\{\hat{U}_i\}_{i=1}^{k})$ to aggregate the graph topology along with the dangling free vertices, dual variables, and $\Omega[\hat{U}_i]$ at a node of $v_i \in \hat{U}_i$. The node then performs the augmentation step on $(G^{(t)}_{elig}/\Omega)[U_i]$ locally. Then, it broadcast the updated information back to every vertex in $\hat{U}_i$.

\item {\bf Blossom Shrinking:} 
In the blossom shrinking step, we need to find a maximal set of (nested) blossom in $(G_{elig}/ \Omega) [V_i \cup \hat{F}']$ for each $1 \leq i \leq k$. By \cref{lem:no_bad_free_vertex}, no vertices in $\hat{F}'$ can be included in any new blossoms. Therefore, for each $1 \leq i \leq k$, it suffices for every vertex in $\hat{V}_i$ to record whether it has a eligible neighbor in $\hat{F}'$.  

Note that although nodes in $\hat{F}'$ cannot be in any blossoms,  they may cause some blossoms to be formed. 
Thus, we first let every vertex in $\hat{F}'$ inform its neighbor about its existence. Then we perform $\exchange(\{\hat{V}_i\}_{i=1}^{k})$ to aggregate the graph topology, whether each vertex has an eligible neighbor in $\hat{F}'$, dual variables, and $\Omega[\hat{V}_i]$ to some node $v_i$ in $\hat{V}_i$. The node then performs the blossom shrinking step on $(G_{elig}/\Omega)[V_i \cup \hat{F}']$ locally. Then, it broadcast the updated information back to every vertex in $\hat{V}_i$.

\item {\bf Dual Adjustment:} For the adjustment on the $y$-values, once every vertex determines whether it is in $\hat{V}_{in}$ and $\hat{V}_{out}$, it will be able to adjust its $y$-value properly.  The following is a procedure to determine whether each vertex is in $\hat{V}_{in}$ and $\hat{V}_{out}$.

\begin{enumerate}
\item Every vertex in $\hat{F}'$ mark itself in $\hat{V}_{out}$. Then it sends a message through each incident eligible unmatched edge to inform the neighbor to mark themselves in $\hat{V}_{in}$

\item Perform $\exchange(\{\hat{V}_i\}_{i=1}^{k})$ and have each $V_i$ determine locally each node in $V_i$ whether it is in $\hat{V}_{in}$ and $\hat{V}_{out}$ and then send information back to each node in $V_i$. For every vertex in $\hat{V}_{out}$, sends a message through each incident eligible unmatched edge to inform the neighbor to mark itself in $\hat{V}_{in}$.

\item Perform $\exchange(\Omega_{root})$ to do the following: If a vertex $v$ is contained in a root blossom $B$ and $v \in \hat{V}_{in}$, label every $v' \in B$ to be in $\hat{V}_{in}$. (Note that we only need to do this for vertices in $\hat{V}_{in}$ because if the base of a root blossom is in $\hat{V}_{out}$ then all vertices in the blossom will end up in the same component as the base).
\end{enumerate}

Now every vertex has indentified whether itself is in $\hat{V}_{in}$ and $\hat{V}_{out}$. The $y$-values of the vertcies can now be adjusted accordingly.

For the $z$-values, note that only the root blossoms need to have their $z$-values adjusted. Therefore, we perform $\exchange(\Omega_{root})$ to check for each blossom $B \in \Omega_{root}$, whether all of its members are in $\hat{V}_{in}$ or $\hat{V}_{out}$. If it is the former, then we set $z(B) \leftarrow z(B) - \delta_i$. If it is the latter, then we set $z(B) \leftarrow z(B) + \delta_i$. Recall that $z_B$ is stored on the edge $e_{B}$.

\item {\bf Blossom Dissolution and Clean Up:}
We may perform a $\exchange(\Omega_{root})$ to dissolve all the blossoms with zero $z$-values. The removal of the dummy matched edges can be done locally at the free vertices.

\end{enumerate}

Each of the step takes $\poly(\log n, 1/\epsilon)$ rounds. Since there are at most $O( (\log W) / \epsilon )$ iterations, the total running time is $\poly(\log n, 1/\epsilon)$ rounds as $W = \poly(n)$. 
We conclude the proof of the following theorem.

\thmmatching*



\section{Edge Separators}\label{sect:edge-separator}

In this section, we show that any $H$-minor-free graph $G$ admits an edge separator of size $O(\sqrt{\Delta n})$, which is needed in the proof of \cref{lem:separator}.

\thmedgeseparator*

For the rest of this section, $G=(V,E)$ is  $H$-minor-free. For any $R \subseteq V$, a connected component of $G[V \setminus R]$ is called an \emph{$R$-flap}. The following lemma reduces \cref{thm:edge-separator} to finding a subset $R \subseteq V$ with $\vol(R) = O(\sqrt{\Delta n})$ such that each $R$-flap has at most $(2/3)n$ vertices.

\begin{lemma}\label{lem:reduction1}
If there exists $R \subseteq V$ with $\vol(R) = O(\sqrt{\Delta n})$ such that each $R$-flap has at most $(2/3)n$ vertices, then there is an edge separator of size $O(\sqrt{\Delta n})$.
\end{lemma}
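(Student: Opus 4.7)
The plan is to realize the edge separator as a union of $R$-flaps, placing $R$ itself (together with any unused flaps) on the opposite side of the cut. The key structural observation I would record first is that, because each $R$-flap is a connected component of $G[V\setminus R]$, there are no edges of $G$ running between two distinct flaps. Consequently, for any $S$ formed as a union of flaps, every edge in $\partial(S)$ must have at least one endpoint in $R$, which gives the uniform bound $|\partial(S)| \leq \vol(R) = O(\sqrt{\Delta n})$. This takes care of the size bound on the separator for \emph{any} valid choice of $S$; the rest of the argument is purely about balancing.

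What remains is to choose the union of flaps so that $|S| \in [n/3,\, 2n/3]$, for then $\min\{|S|, |V\setminus S|\} \geq n/3$ as required. I would split into two cases based on the largest flap size. If some flap $F$ satisfies $|F| \geq n/3$, the hypothesis $|F| \leq (2/3)n$ says precisely that $S := F$ already lies in the desired window. Otherwise every flap has size strictly less than $n/3$, and I would process the flaps in arbitrary order, adding each to $S$ until $|S| \geq n/3$ for the first time; since $|S|$ was less than $n/3$ just before the final flap was added and that flap contributes fewer than $n/3$ vertices, $|S|$ lands in $[n/3,\, 2n/3)$.

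The one subtle case is the degenerate one where the flaps collectively contain fewer than $n/3$ vertices, i.e., $|R| > (2/3)n$. I would dispose of this at the outset by pruning isolated vertices out of $R$ (each becomes its own singleton flap at no additional cost in $\vol(R)$), so that in the remaining instance $|R| \leq \vol(R) = O(\sqrt{\Delta n})$; this rules out $|R| > (2/3)n$ for all but finitely many $n$, and those residual cases are covered by a trivial separator of size $O(n) = O(\sqrt{\Delta n})$. I do not anticipate any real obstacle: the content of the lemma is entirely structural, and the only nontrivial inequality used, $|\partial(S)| \leq \vol(R)$, is an immediate consequence of the flap decomposition.
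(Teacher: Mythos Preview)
Your proposal is correct and follows essentially the same greedy packing argument as the paper. The only difference is cosmetic: the paper removes all edges incident to $R$ and treats each vertex of $R$ as its own singleton piece alongside the $R$-flaps, so the pieces partition all of $V$ and the degenerate case $|R| > (2/3)n$ never needs separate handling; your version keeps $R$ intact on one side and compensates with the pruning-of-isolated-vertices observation. Both routes yield the same bound $|\partial(S)| \leq \vol(R)$ and the same two-case balance analysis.
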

\begin{proof}
Let $V_1, V_2, \ldots, V_k$ be the connected components after removing all edges incident to $R$. Each $V_i$ is either an $R$-flap or a single vertex in $R$. We rank these sets according to their size $|V_1| \geq |V_2| \geq \cdots \geq |V_k|$. We pick $S = \bigcup_{1 \leq i \leq i^\ast} V_i$, where $i^\ast$ is the smallest index such that $\left| \bigcup_{1 \leq i \leq i^\ast} V_i \right| \geq n/3$. It is clear that $(1/3)n \leq |S| \leq (2/3)n$ and the number of edges crossing $S$ and $V \setminus S$ is $O(\sqrt{\Delta n})$.
\end{proof}

\subsection{Preliminaries}
We start with some graph terminology needed in the proof of \cref{thm:edge-separator}. 

\paragraph{Tree Decompositions} A \emph{tree decomposition} of a graph $G=(V,E)$ is a pair $(T,X)$, where $T$ is a tree on the vertex set $X =\{X_1, X_2, \ldots, X_k\}$, which is a family of subsets of $V$, satisfying the following conditions. 
\begin{itemize}
    \item The union of all $X_i$ equals $V$, that is, $X_1 \cup X_2 \cup \cdots \cup X_k = V$.
    \item For each edge $e=\{u,v\} \in E$, there is an element $X_i \in X$ such that $\{u,v\} \subseteq X_i$.
    \item For any $X_i \in X$, $X_j \in X$, and $X_l \in X$ such that $X_l$ is in the unique path connecting $X_i$ and $X_j$ in $T$, we have $X_i \cap X_j \subseteq X_l$.
\end{itemize}

The \emph{width} of a tree decomposition $(T,X)$ is $\max_{X_i \in X} |X_i| - 1$. The \emph{treewidth} $\tw(G)$ of  $G$ is the minimum width among all tree decompositions of $G$. 
For any edge $e = \{u,v\}$ in a tree $T$, we write $S_{u \rightarrow v}$ to denote the connected component of $T - e$ that contains $v$. \cref{lem:treewidth-aux} is a well-known property of bounded-treewidth graphs~\cite[Theorem 2.5]{ROBERTSON1986treewidth}.


\begin{lemma}\label{lem:treewidth-aux}
Let $G=(V,E)$ be a graph associated with an weight function $w(v) \geq 0$ for each $v \in V$.
For any tree decomposition $(T,X)$ of $G$, there exists $W  \in X$  such that each $W$-flap $S$ satisfies $\sum_{v \in S} w(v) \leq (1/2) \sum_{v \in V} w(v)$.
\end{lemma}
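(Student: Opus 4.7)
The plan is to pinpoint the desired bag by orienting each edge of the host tree $T$ toward the side carrying a heavy flap, and then invoking a standard ``sink exists'' argument on the resulting partial orientation.

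First, for each edge $e = \{X_i, X_j\}$ of $T$, I would let $T_j$ denote the component of $T - e$ containing $X_j$ and record the total weight $A(X_i, X_j)$ of vertices $v \notin X_i$ whose bag subtree lies entirely inside $T_j$. A short check using the connectivity axiom of a tree decomposition shows that the vertex sets realizing $A(X_i, X_j)$ and $A(X_j, X_i)$ are disjoint, so $A(X_i, X_j) + A(X_j, X_i) \leq W_{\mathrm{tot}}$, where $W_{\mathrm{tot}} = \sum_{v \in V} w(v)$. The only vertices missed by both counts are those in $X_i \cap X_j$, and this is the bookkeeping subtlety to handle carefully.

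Next I would orient $X_i \to X_j$ precisely when some $X_i$-flap lying entirely inside $T_j$ has weight exceeding $W_{\mathrm{tot}}/2$. Any such flap avoids $X_i$ and has all of its bag subtrees contained in $T_j$, so it is a subset of the vertex set counted by $A(X_i, X_j)$; hence its existence forces $A(X_i, X_j) > W_{\mathrm{tot}}/2$. Combined with the inequality from the previous step, no edge can be oriented in both directions, and a bag all of whose incident edges are either incoming or unoriented then has every flap of weight at most $W_{\mathrm{tot}}/2$ --- exactly the bag required.

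The remaining obstacle is producing such a bag with no outgoing edge. The plan is standard: pick a longest directed path $X_{i_0} \to X_{i_1} \to \cdots \to X_{i_k}$; if $X_{i_k}$ had an outgoing edge, its target would be either $X_{i_{k-1}}$ (forbidden by well-definedness of the orientation) or some earlier $X_{i_j}$ (forbidden since $T$ is acyclic) or a fresh bag (contradicting maximality). The hardest part of the whole argument is really setting up the orientation correctly in the first two steps and verifying single-valuedness via the bound on $A(X_i, X_j) + A(X_j, X_i)$; once that is in place, the sink argument and the conclusion are routine.
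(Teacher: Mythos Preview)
Your proposal is correct and follows essentially the same approach as the paper: orient each tree edge toward the side whose ``beyond-the-bag'' weight exceeds half the total, use disjointness of the two side-sets to show no edge is oriented both ways, and conclude that a sink bag exists and works. The only cosmetic differences are that the paper orients based on the total weight $Z_{A\to B}$ rather than the weight of a single heavy flap (your condition implies theirs), and the paper finds the sink by the counting observation that a tree has fewer edges than vertices rather than by a longest-directed-path argument.
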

\begin{proof}
For any edge $\{A,B\}$ in $T$, we write $Z_{A \rightarrow B} = \bigcup_{C \in S_{A \rightarrow B}} C \setminus A$. Note that $Z_{A \rightarrow B}$ is a subset of $V$.
We assign an edge $e=\{A,B\}$ to  $A \in X$  if   $\sum_{v \in Z_{A \rightarrow B}} w(v) > (1/2) \sum_{v \in V} w(v)$. If some  $A$ is unassigned, then we are done by setting $W = A$, as any $A$-flap $S$ must be  $S = Z_{A \rightarrow B}$ for some neighboring  $B$.
Otherwise, at least one edge $e=\{A,B\}$ is assigned to both $A$ and $B$, as the number of edges in $T$ is the number of vertices in $T$ minus one. However,
\begin{align*}
\sum_{v \in V} w(v) &= \sum_{v \in A \cap B} w(v) + \sum_{v \in Z_{A \rightarrow B}} w(v) + \sum_{v \in Z_{B \rightarrow A}} w(v) \\
&> (1/2) \sum_{v \in V} w(v) + (1/2) \sum_{v \in V} w(v) \\
&= \sum_{v \in V} w(v),
\end{align*}
which is a contradiction.
\end{proof}

The following lemma is an immediate consequence of \cref{lem:treewidth-aux}, by taking a tree decomposition $(T,X)$ of width $\tw(G)$.

\begin{lemma}\label{lem:treewidth-aux2}
For any graph $G=(V,E)$ with an weight function $w(v) \geq 0$ for each $v \in V$, there exists a subset $W \subseteq V$ of size $|W| \leq \tw(G)+1$ such that each $W$-flap $S$ satisfies $\sum_{v \in S} w(v) \leq (1/2) \sum_{v \in V} w(v)$.
\end{lemma}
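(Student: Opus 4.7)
The proof will be essentially immediate from \cref{lem:treewidth-aux}, so my plan is just to set up the tree decomposition appropriately and invoke it. First I would pick any tree decomposition $(T,X)$ of $G$ whose width equals $\tw(G)$; such a decomposition exists by definition of treewidth, and every bag $X_i \in X$ satisfies $|X_i| \leq \tw(G)+1$.

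Next I would apply \cref{lem:treewidth-aux} to this particular tree decomposition with the given weight function $w$. The lemma produces a bag $W \in X$ with the property that every $W$-flap $S$ (i.e.\ every connected component of $G[V\setminus W]$) has total weight $\sum_{v \in S} w(v) \leq (1/2)\sum_{v \in V} w(v)$. Since $W$ is a bag of a width-$\tw(G)$ decomposition, we immediately get $|W| \leq \tw(G)+1$, which is exactly the size bound claimed in the lemma.

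The only subtlety worth double-checking is that the notion of ``$W$-flap'' used in \cref{lem:treewidth-aux} (which was phrased in tree-decomposition language via the sets $Z_{A \to B}$) coincides with the graph-theoretic notion of $W$-flap (connected components of $G[V\setminus W]$) used in the present lemma and in \cref{lem:reduction1}. This is a standard fact about tree decompositions: for any bag $A \in X$, each connected component of $G[V \setminus A]$ is contained in some $Z_{A \to B}$ for a neighbor $B$ of $A$ in $T$, by the running-intersection property. The line in the proof of \cref{lem:treewidth-aux} stating ``any $A$-flap $S$ must be $S = Z_{A \rightarrow B}$ for some neighboring $B$'' already invokes this correspondence, so nothing extra is needed. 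I expect no real obstacle here; the entire proof should fit in a couple of sentences, with the tree-decomposition/treewidth definitions doing all the heavy lifting.
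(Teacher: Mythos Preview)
Your proposal is correct and matches the paper's own proof essentially verbatim: the paper simply says the lemma is an immediate consequence of \cref{lem:treewidth-aux} applied to a tree decomposition of width $\tw(G)$. Your extra paragraph verifying the flap notions is fine but unnecessary, since \cref{lem:treewidth-aux} already states its conclusion in terms of graph-theoretic $W$-flaps.
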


\paragraph{Clique Sums} We say that $G$ is a \emph{$k$-clique-sum} of two graphs $H_1$ and $H_2$ if $G$ can be constructed as follows.
\begin{itemize}
    \item Pick a number $1 \leq s \leq k$. 
    \item Pick an $s$-clique $C_1=\{u_1, u_2, \ldots u_s\}$ in $H_1$ and an  $s$-clique $C_2=\{v_1, v_2, \ldots v_s\}$ in $H_2$. 
    \item Combine the two graphs $H_1$ and $H_2$ by identifying $u_i = v_i$ for each $1 \leq i \leq s$. 
    \item Remove some edges in the clique $C_1 = C_2$.
\end{itemize}

We also write $G = H_1 \oplus_k H_2$. Note that $k$-clique-sum $\oplus_k$ is not a well-defined operator, as there are multiple options for selecting the cliques $C_1$ and $C_2$  and selecting the deleted edges.

More generally, we say that $G$ is a $k$-clique-sum of a list of graphs $H_1, H_2, \ldots H_r$ if there exists a sequence of graphs $G_1, G_2, \ldots G_r$ with $G = G_r$ such that $G_1 = H_1$ and $G_i = G_{i-1} \oplus_k H_{i}$ for each $2 \leq i \leq r$. This construction of $G$  gives rise to the following tree decomposition $(T,X)$.
\begin{itemize}
    \item The family of subsets is $X = \{V_1, V_2, \ldots V_r\}$, where $V_i$ is the vertex set for $H_i=(V_i, E_i)$. Here $V_i$ is also viewed as a subset of $V$.
    \item For each $2 \leq i \leq r$, there is an index $1 \leq j \leq i-1$ such that the clique of $G_i$ used in the $k$-clique-sum operation $G_i = G_{i-1} \oplus_k H_{i}$ belongs to $H_j$. The edge $\{V_i, V_j\}$ is added to the tree $T$. 
\end{itemize}

This tree decomposition $(T,X)$ described above is not unique in general, as there might be multiple possible choices of $j$ for each $i$ in the construction of the edge set of $T$. Applying  \cref{lem:treewidth-aux} to this tree decomposition $(T,X)$, we obtain the following lemma.


\begin{lemma}\label{lem:cliquesum-aux}
Let $G=(V,E)$ be  a $k$-clique-sum of a list of graphs $H_1, H_2, \ldots H_r$. Let $V_i$ be the vertex set of $H_i=(V_i, E_i)$. There exists  an index $1 \leq i \leq r$ such that each $V_i$-flap $S$ has size $|S| \leq |V|/2$.
\end{lemma}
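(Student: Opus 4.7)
The plan is to apply \Cref{lem:treewidth-aux} directly to the tree decomposition $(T,X)$ described in the paragraph immediately preceding the lemma, instantiated with the uniform weight function $w(v) = 1$ for every $v \in V$. Under this choice, the conclusion of \Cref{lem:treewidth-aux} states that some bag $V_i \in X$ has the property that every $V_i$-flap $S$ satisfies $|S| = \sum_{v \in S} w(v) \leq \tfrac{1}{2} \sum_{v \in V} w(v) = |V|/2$, which is exactly what we need.

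The only thing left is to confirm that $(T,X)$ as constructed really is a tree decomposition of $G$; the lemma statement in the excerpt already asserts this, but I would spell it out to be safe. The vertex-covering property $V_1 \cup \cdots \cup V_r = V$ is immediate from the fact that each vertex of $G$ is introduced by some $H_i$. The edge-covering property holds because every edge of $G$ survives from some $E_i$ (the clique-sum operation may delete clique edges but never introduces new ones), so both of its endpoints lie in the corresponding $V_i$. The subtree/consistency property is the step requiring an argument: one shows by induction on $i$ that for each vertex $v \in V$, the set of bags $V_j$ containing $v$ induces a connected subgraph of $T$. This uses the observation that when forming $G_i = G_{i-1} \oplus_k H_i$, the vertices shared with $G_{i-1}$ all belong to a single previously used clique, which lies entirely in one $V_j$ (the bag to which $V_i$ is attached in $T$), so attaching $V_i$ as a leaf of $T$ at that $V_j$ preserves connectivity of every vertex's bag set.

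With the tree-decomposition property in hand, a single invocation of \Cref{lem:treewidth-aux} on $(T,X)$ with uniform weights finishes the proof. The main obstacle is purely the verification of the subtree property, which is a routine induction on $r$; all the genuinely interesting content has been absorbed into \Cref{lem:treewidth-aux}. No quantitative estimates or additional structural facts about $H$-minor-free graphs are needed at this stage, since the lemma is a purely combinatorial statement about clique-sum decompositions.
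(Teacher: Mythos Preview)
Your proposal is correct and matches the paper's proof exactly: the paper simply sets $w(v)=1$ for all $v\in V$ and invokes \cref{lem:treewidth-aux} on the tree decomposition $(T,X)$ associated with the clique-sum construction. Your additional verification that $(T,X)$ is a valid tree decomposition is more careful than the paper, which just asserts this in the preceding paragraph.
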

\begin{proof}
Set $w(v) = 1$ for each $v \in V$ and apply  \cref{lem:treewidth-aux} to the tree decomposition $(T,X)$ associated with the construction of $G$ from $H_1, H_2, \ldots H_r$ via $k$-clique-sums.
\end{proof}

\paragraph{Almost Embedding} For a given closed surface $\Sigma$ and a number $h$, we say that $G$ is \emph{$h$-almost embeddable} on $\Sigma$ if $G$ can be constructed as follows.
\begin{itemize}
    \item Start with a graph $G_0$ that can be embedded on a surface $\Sigma$ and its embedding.
    \item Add at most $h$ \emph{vortices} of depth at most $h$ to the graph. The precise definition of vortices is omitted as it is not needed in our proof. See~\cite{kawarabayashi2007survey,ROBERTSON2003nonplanar} for details.
    \item Add at most $h$ \emph{apex} vertices to the graph, and add an arbitrary number of edges with  at least one endpoint being an apex. 
\end{itemize}

We write $A$ to denote the set of apex vertices. We have $|A| \leq h$. If $A = \emptyset$, then $G$ is said to be  \emph{apex-free}  $h$-almost embeddable on $\Sigma$. In particular, $G[V \setminus A]$ is apex-free $h$-almost embeddable on $\Sigma$.  The only thing we need to know about a graph $G=(V,E)$ that is $h$-almost embeddable on $\Sigma$ is that there exists a subset $A \subseteq V$ of size at most $h$ such that the subgraph  $G[V \setminus A]$ satisfies the following property.
\begin{itemize}
    \item For any minor $G' \preceq G[V \setminus A]$, its treewidth $\tw(G') = O(D)$ is linear in its diameter $D = \diam(G')$, as guaranteed by the following lemma by Grohe~\cite[Proposition 11]{grohe2003local}.
\end{itemize}

\begin{lemma}[{{\cite[Proposition 11]{grohe2003local}}}]\label{lem:treewidth-diameter-rel}
Let $G$ be a minor of a graph that is apex-free $h$-almost embeddable on $\Sigma$. Then its treewidth $\tw(G) = O(D)$ is linear in its diameter $D = \diam(G)$, where the hidden constant in $O(\cdot)$ depends only on $h$ and $\Sigma$.
\end{lemma}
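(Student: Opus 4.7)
The plan is to derive the bound from the linear local treewidth theorem of Demaine and Hajiaghayi~\cite{Demaine2004localtreewidth} (building on Eppstein~\cite{eppstein2000diameter}), which states that every minor-closed class that excludes some apex graph has treewidth bounded by a linear function of the diameter. The strategy is to exhibit a fixed apex graph that is excluded from the class of all minors of apex-free $h$-almost embeddable graphs on $\Sigma$.

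First, I would define $\mathcal{G}_{h,\Sigma}$ to be the class of all graphs $G'$ such that $G' \preceq G_0$ for some apex-free $h$-almost embeddable graph $G_0$ on $\Sigma$. By construction, $\mathcal{G}_{h,\Sigma}$ is minor-closed, and the lemma's hypothesis is exactly that $G \in \mathcal{G}_{h,\Sigma}$. Second, I would bound the Hadwiger number of any $G_0$ that is apex-free $h$-almost embeddable on $\Sigma$: the embedded skeleton has Euler genus at most some $g(\Sigma)$, so it excludes $K_t$ for some $t = t(\Sigma)$, and attaching at most $h$ vortices of depth at most $h$ can increase the Hadwiger number only by an additive function of $h$, because each vortex admits a path-decomposition of width $O(h)$ glued along a single face. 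Hence there is $t' = t'(h,\Sigma)$ such that no graph in $\mathcal{G}_{h,\Sigma}$ contains $K_{t'}$ as a minor. Taking $A_{h,\Sigma}$ to be a $K_{t'}$ with one additional apex vertex joined to every vertex of the clique, one has $A_{h,\Sigma} \notin \mathcal{G}_{h,\Sigma}$: if it were a minor of some $G_0 \in \mathcal{G}_{h,\Sigma}$, the branch set of the apex and the $K_{t'}$ would both live in the apex-free $h$-almost embeddable part, forcing a $K_{t'}$-minor there and contradicting the bound on the Hadwiger number.

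Finally, I would apply Demaine--Hajiaghayi to the minor-closed class $\mathcal{G}_{h,\Sigma}$, which excludes the fixed apex graph $A_{h,\Sigma}$: every $G \in \mathcal{G}_{h,\Sigma}$ satisfies $\tw(G) \leq c \cdot \diam(G)$, where $c$ depends only on $A_{h,\Sigma}$ and hence only on $h$ and $\Sigma$. The main obstacle is the Hadwiger-number estimate for apex-free $h$-almost embeddable graphs, which requires unwinding the definition of a vortex of depth $h$ and verifying that the gluing along a face boundary does not manufacture arbitrarily large clique minors. An alternative, more direct route avoids quoting Demaine--Hajiaghayi entirely and instead performs an Eppstein/Baker-style BFS layering of the surface skeleton: the layers around any root vertex form rings of bounded-constant width on $\Sigma$, each of which has bounded treewidth; one then concatenates $D$ consecutive layers into a tree decomposition of width $O(D)$, extending each layer's bag across the bounded-depth vortex attached to its face and handling the passage to a minor $G \preceq G_0$ by pulling back a BFS of $G$ through the minor model.
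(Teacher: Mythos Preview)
The paper does not prove this lemma at all; it is quoted verbatim as \cite[Proposition~11]{grohe2003local} and used as a black box. So there is no ``paper's own proof'' to compare against.

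That said, your primary argument has a real gap. You construct $A_{h,\Sigma}$ as ``a $K_{t'}$ with one additional apex vertex joined to every vertex of the clique,'' i.e.\ $A_{h,\Sigma}=K_{t'+1}$. But this is \emph{not} an apex graph once $t'\ge 5$: removing any vertex from $K_{t'+1}$ leaves $K_{t'}$, which is non-planar. The Demaine--Hajiaghayi theorem you want to invoke requires the excluded minor to be an apex graph, so the implication does not fire. Merely bounding the Hadwiger number of $\mathcal{G}_{h,\Sigma}$ gives you an excluded clique, not an excluded apex graph; every $K_t$-minor-free class with $t\ge 6$ excludes a non-apex graph but may still contain all apex graphs (and hence fail to have bounded local treewidth). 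To run your strategy you would need to exhibit a genuine apex graph---for instance a sufficiently large planar grid with a universal vertex---that cannot occur as a minor of any apex-free $h$-almost embeddable graph on $\Sigma$. Proving that is essentially the content of Grohe's argument (or, circularly, Eppstein's characterization that bounded local treewidth is equivalent to excluding an apex graph), so you cannot get it for free from a Hadwiger bound.

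Your alternative sketch at the end, doing the BFS layering directly on the embedded skeleton and threading the bounded-depth vortices through the bags, is much closer to how Grohe actually establishes the result, and is the route you should pursue if you want a self-contained proof.
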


Now we are ready to state the graph structure theorem of Robertson and Seymour~\cite{ROBERTSON2003nonplanar}.

\begin{theorem}[{{\cite[Theorem~1.3]{ROBERTSON2003nonplanar}}}]\label{thm:structure}
For any graph $H$, there is a number $h$ such that
any  $H$-minor-free graph $G$ can be constructed by taking $h$-clique-sum of a list of graphs that are $h$-almost embeddable on some surfaces on which $H$ cannot be embedded.
\end{theorem}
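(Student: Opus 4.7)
The plan is to follow the strategy of Robertson and Seymour in Graph Minors XVI, which reduces the proof to a local analysis of how the non-embeddable obstructions must be organized inside an $H$-minor-free graph. The top-level scheme is an induction that repeatedly finds a small clique-separator of $G$, on one side of which the graph is already $h$-almost embeddable on some surface, and recurses on the other side. The output of the induction is a tree of $h$-clique-sums into almost-embeddable pieces, matching the statement of \cref{thm:structure}.

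The starting point is a tangle / treewidth dichotomy. Fix $h_0 = h_0(H)$ large enough. If $\tw(G) < h_0$, then $G$ has a tree decomposition of width $< h_0$; each bag already gives a piece trivially embeddable on the sphere (it has $< h_0$ vertices), and adjacent bags share a clique on which we glue via $h_0$-clique-sums, so we are done. Otherwise, by the excluded grid theorem (Graph Minors V), $G$ contains a large wall $W$, equivalently a large-order tangle $\mathcal{T}$. The plan is to analyze the structure of $G$ around $\mathcal{T}$.

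Next I would apply the key local structure theorem (Graph Minors XVI): for any fixed $H$ there is a constant $h$ such that, whenever $G$ is $H$-minor-free and $\mathcal{T}$ is a tangle of sufficiently high order in $G$, there is a set $A \subseteq V(G)$ with $|A| \le h$ such that the part of $G-A$ ``controlled by'' $\mathcal{T}$ is $h$-almost embeddable on a surface $\Sigma$ on which $H$ cannot be drawn. The $A$ gives the apex vertices, and the bounded-depth vortices absorb the places where the wall's linear ordering of boundary vertices fails to be consistent with a planar embedding. The surface $\Sigma$ cannot admit $H$: if it did, then pulling the embedding of $H$ back through the large wall would yield an $H$-minor in $G$, contradicting $H$-minor-freeness. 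This piece is what becomes one leaf of the final clique-sum decomposition.

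To move from a local statement to a global one, I would find a small separator $S$ of size at most $h$ that isolates the almost-embeddable part from the remainder of $G$; after adding all missing edges inside $S$ one obtains a clique, which is exactly the interface used in an $h$-clique-sum. Removing this piece and recursing on the other side (with a strictly smaller tangle-order or a strictly smaller graph) produces, by induction, the decomposition of $G$ into $h$-clique-sums of $h$-almost-embeddable pieces. The main obstacle, and the reason this argument occupies essentially all of Graph Minors XVI, is the control of vortices and apices inside the local structure theorem: one must show that the places where the wall's cyclic boundary fails to extend to a surface embedding can always be confined to $O(1)$ ``vortex societies'' of bounded depth, and that every vertex attached to the wall in a genuinely non-local way can be charged to one of at most $h$ apex vertices. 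Both bounds rely on the theory of linkages and unique linkages, and on the fact that any uncontrolled obstruction would produce $|V(H)|$ disjoint paths realizing an $H$-minor.
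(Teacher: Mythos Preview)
The paper does not prove \cref{thm:structure}; it simply quotes it as \cite[Theorem~1.3]{ROBERTSON2003nonplanar} and uses it as a black box in the proof of \cref{thm:edge-separator}. So there is no ``paper's own proof'' to compare against.

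Your sketch is a reasonable high-level summary of the Robertson--Seymour strategy (tangle/treewidth dichotomy, local structure around a large wall, apices and bounded-depth vortices, then a clique-sum recursion), but it is not what was being asked of you here, and in any case it is far from a proof: the actual argument occupies hundreds of pages across the Graph Minors papers, and the hard content is precisely the control of vortices and apices that you defer to ``the theory of linkages and unique linkages.'' For the purposes of this paper, the correct move is simply to cite the result, which is what the authors do.
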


We note that the list of surfaces $\Sigma_1, \Sigma_2, \ldots, \Sigma_k$ on which $H$ cannot be embedded is \emph{finite} for any graph $H$. If $H$ is planar, then the list is empty. In that case, $G$ is the result of taking  $h$-clique-sum of a list of graphs of at most $h$ vertices. That is, each graph in the list only includes the set of apex vertices $A$. 

\subsection{The Existence of a Small Edge Separator}

For the rest of the section, we prove \cref{thm:edge-separator}. 
Let $G=(V,E)$ be the $H$-minor-free graph under consideration. Then  \cref{thm:structure} implies that $G$ can be described as an $h$-clique-sum of a list of graphs $H_1, H_2, \ldots H_r$. Moreover, for each $H_i=(V_i, E_i)$, there is a subset $A_i \subseteq V_i$ of size $|A_i| \leq h$ such that the subgraph of $H_i$ induced by $V_i \setminus A_i$ is apex-free $h$-almost embeddable on some surface $\Sigma$ on which $H$ cannot be embedded.

By \cref{lem:cliquesum-aux}, there is an index $i^\ast$ such that each $V_{i^\ast}$-flap has at most $n/2$ vertices. In view of \cref{lem:reduction1}, to prove \cref{thm:edge-separator} is suffices to find a set $R$ with $A_{i^\ast} \subseteq R \subseteq V_{i^\ast}$ with $\vol(R) = O(\sqrt{\Delta n})$ such that each $R$-flap has at most $(2/3)n$ vertices.

\paragraph{High-level Ideas} Before presenting the proof, we briefly discuss its high-level ideas. Let us first focus on the special case where the graph $G$ consists of only the vertices $V_{i^\ast}$. In this case, \cref{lem:treewidth-diameter-rel}  already allows us to find a desired set $R$ via the breadth-first layering approach~\cite{Lipton79separator}, which we briefly explain as follows.
The apex vertices $A_{i^\ast}$ can be ignored since the number of edges incident to  $A_{i^\ast}$ is at most $O(\Delta)= O(\sqrt{\Delta n})$, as  $|A_{i^\ast}| \leq h = O(1)$.
Given any breadth-first search tree on the subgraph induced by $V_{i^\ast} \setminus A_{i^\ast}$, there are two cases. If a desired set $R$ can be 
obtained by setting $R$ as one of the layers of the  breadth-first layering, then we are done. Otherwise, most of the vertices in the graph are confined to $O(\sqrt{n / \Delta})$ consecutive layers $i, \ldots, j$. To deal with the  vertices in layers $i, \ldots, j$, we consider the graph resulting from  contracting all vertices in layers $0, \ldots, i-1$ and removing all vertices in layers $j+1, \ldots$. The resulting graph has diameter  $O(\sqrt{n / \Delta})$. By \cref{lem:treewidth-diameter-rel}, this graph has treewidth $O(\sqrt{n / \Delta})$, so a desired set $R$ can be computed using \cref{lem:treewidth-aux2} with $w(v) = 1$, as we note that $|S| = O(\sqrt{n / \Delta})$ implies $\vol(S) = O(\sqrt{n \Delta})$ for any vertex set $S$.

To extend this approach to the general case, our key idea is to consider a weight function $f(v)$ for all $v \in V_{i^\ast} \setminus A_{i^\ast}$ that aims to take care of the vertices outside of $V_{i^\ast}$. Indeed, \cref{lem:treewidth-aux2} applies to the weighted case.  For each $V_{i^\ast}$-flap $S$, let $C_S \subseteq V_{i^\ast} \setminus A_{i^\ast}$ be the set of vertices in $V_{i^\ast} \setminus A_{i^\ast}$ adjacent to some vertex in $S$. Then we will distribute $|S|$ amount of weight to the weight functions $f(v)$ for $v \in C_S$. 
As we will later see, by properly designing a weight function $f(v)$ the approach sketched above can be extended to the general case.

\paragraph{Weight Assignment}  We  consider a weight assignment function $f(v)$ for all $v \in V_{i^\ast} \setminus A_{i^\ast}$ satisfying the following properties.
\begin{itemize}
    \item For each $v \in V_{i^\ast} \setminus A_{i^\ast}$, $f(v) = 1 + \sum_{u \in V \setminus  V_{i^\ast}}{f_{u \rightarrow v}}$.
    \item For each $V_{i^\ast}$-flap $S$, let $C_S \subseteq V_{i^\ast} \setminus A_{i^\ast}$ be the set of vertices in $V_{i^\ast} \setminus A_{i^\ast}$ adjacent to some vertex in $S$. 
    If $C_S = \emptyset$, then $f_{u \rightarrow v} = 0$ for all $u \in S$ and $v \in V_{i^\ast} \setminus A_{i^\ast}$.
    Otherwise, for each $u \in S$, the function $f_{u \rightarrow v}$ satisfies the following requirements.
    \begin{itemize}
        \item $\sum_{v \in V_{i^\ast} \setminus A_{i^\ast}} f_{u \rightarrow v} = 1$.
        \item $f_{u \rightarrow v} = 0$ if $v \notin C_S$.
        \item $f_{u \rightarrow v} \geq 0$ if $v \in C_S$.
    \end{itemize}
\end{itemize}
It is clear that $0 \leq |C_S| \leq h$ for each $V_{i^\ast}$-flap $S$, and we have
\begin{align*}
0 \leq \sum_{v\in V_{i^\ast} \setminus A_{i^\ast}} f(v)
&= \sum_{v\in V_{i^\ast} \setminus A_{i^\ast}} \left( 1 + \sum_{u \in V \setminus  V_{i^\ast}}{f_{u \rightarrow v}} \right)\\
&= |V_{i^\ast} \setminus A_{i^\ast}| +   \sum_{u \in V \setminus  V_{i^\ast}}
\sum_{v\in V_{i^\ast} \setminus A_{i^\ast}}
f_{u \rightarrow v}\\
&\leq |V_{i^\ast} \setminus A_{i^\ast}| +   |V \setminus V_{i^\ast}|\\
&= |V \setminus A_{i^\ast}| \\
&\leq n.
\end{align*}
 We select the function $f$ to minimize $\sum_{v\in V_{i^\ast} \setminus A_{i^\ast}} f^2(v)$ among all functions satisfying the requirements.
Such a function $f$ has the following property.

\begin{lemma}\label{lem:function-aux}
Consider any $u \in V\setminus V_{i^\ast}$ and $v \in V_{i^\ast}$ with  $f_{u \rightarrow v} > 0$.
Let $S$ be the $V_{i^\ast}$-flap containing $u$. For each $v' \in C_S$, either one of the following holds.
\begin{itemize}
    \item $f(v') > f(v)$ and $f_{u \rightarrow v'} = 0$.
    \item $f(v') = f(v)$ and $f_{u \rightarrow v'} \geq 0$.
\end{itemize}
\end{lemma}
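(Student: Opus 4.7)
The plan is to argue by contradiction, using the fact that $f$ was chosen to minimize the strictly convex objective $\Phi(f) = \sum_{v \in V_{i^\ast} \setminus A_{i^\ast}} f^2(v)$ subject to the stated linear constraints on the distributions $f_{u \rightarrow \cdot}$. Suppose that the conclusion fails at some triple $(u,v,v')$ with $f_{u\to v}>0$ and $v'\in C_S$. Parsing the two bullets (and recalling that $f_{u\to v'}\geq 0$ is forced by the definition), the failure must fall into one of two cases: (a) $f(v') < f(v)$, or (b) $f(v') > f(v)$ together with $f_{u\to v'}>0$. In each case I will exhibit a feasible perturbation of $f$ that strictly decreases $\Phi$, contradicting the optimality of $f$.

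For case (a), define a perturbed mass distribution $\tilde f_{u \to \cdot}$ by setting $\tilde f_{u\to v} = f_{u\to v} - \eps$ and $\tilde f_{u\to v'} = f_{u\to v'} + \eps$, leaving every other entry unchanged. For $0 < \eps \leq \min\{f_{u\to v},\, f(v)-f(v')\}$, all non-negativity constraints still hold, the sum $\sum_{w\in V_{i^\ast}\setminus A_{i^\ast}} \tilde f_{u\to w}$ is still $1$, and the support constraint $\tilde f_{u\to w} = 0$ for $w\notin C_S$ is preserved (since $v,v'\in C_S$). The weights only change at the two coordinates, with $\tilde f(v) = f(v) - \eps$ and $\tilde f(v') = f(v') + \eps$, so
\begin{align*}
\Phi(\tilde f) - \Phi(f) &= (f(v)-\eps)^2 + (f(v')+\eps)^2 - f(v)^2 - f(v')^2 \\
&= 2\eps\bigl(f(v') - f(v)\bigr) + 2\eps^2.
\end{align*}
Taking $\eps > 0$ small enough that $\eps < f(v)-f(v')$ makes this strictly negative, contradicting optimality of $f$.

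For case (b), I reverse the direction: shift $\eps$ units of mass from $v'$ to $v$, i.e., $\tilde f_{u\to v'} = f_{u\to v'} - \eps$ and $\tilde f_{u\to v} = f_{u\to v} + \eps$. Using $f_{u\to v'}>0$, we can take $0 < \eps \leq \min\{f_{u\to v'},\, f(v')-f(v)\}$ while maintaining feasibility. The same computation as above, with the roles of $v$ and $v'$ swapped, yields
\[
\Phi(\tilde f) - \Phi(f) = 2\eps\bigl(f(v)-f(v')\bigr) + 2\eps^2 < 0
\]
for sufficiently small $\eps$, again contradicting the optimality of $f$. Since both failure cases lead to a contradiction, the lemma follows. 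I do not anticipate any serious obstacle: the argument is the standard water-filling / variance-minimization exchange argument, and the only thing to check carefully is that the elementary perturbation respects every constraint in the definition of $f$, which is immediate because $v$ and $v'$ both lie in $C_S$ and no other $f_{u'\to\cdot}$ values are touched.
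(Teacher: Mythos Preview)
Your proof is correct and follows essentially the same approach as the paper: both argue by contradiction using a mass-shift perturbation between $v$ and $v'$ to contradict the minimality of $\sum f^2$. The only cosmetic difference is that you compute the quadratic change explicitly as $2\eps(f(v')-f(v))+2\eps^2$, whereas the paper invokes convexity after fixing $\delta=\min\{f_{u\to v},(f(v)-f(v'))/2\}$; the case split and the perturbation are identical.
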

\begin{proof}
It suffices to find a weight assignment function $\tilde{f}$ such that 
\[ \sum_{v\in V_{i^\ast} \setminus A_{i^\ast}} \tilde{f}^2(v) <  \sum_{v\in V_{i^\ast} \setminus A_{i^\ast}} f^2(v)\]
for the following two cases.
\begin{itemize}
    \item $f(v') > f(v)$ and $f_{u \rightarrow v'} > 0$.
    \item $f(v') < f(v)$.
\end{itemize}

We first consider the case $f(v') < f(v)$. 
Let $\delta = \min\left\{f_{u \rightarrow v},  \frac{f(v) - f(v')}{2}\right\} > 0$. Consider the new function $\tilde{f}$ resulting from decreasing $f_{u \rightarrow v}$ by $\delta$ and increasing $f_{u \rightarrow v'}$ by $\delta$. 
By the convexity of the square function, we have $\tilde{f}^2(v)+\tilde{f}^2(v') < f^2(v) + f^2(v')$,  so  $\sum_{v\in V_{i^\ast} \setminus A_{i^\ast}} \tilde{f}^2(v) <  \sum_{v\in V_{i^\ast} \setminus A_{i^\ast}} f^2(v)$.

The case $f(v') > f(v)$ and $f_{u \rightarrow v'} > 0$ can be handled by switching the role of $v$ and $v'$ in the above argument. That is, we consider  $\delta = \min\left\{f_{u \rightarrow v'},  \frac{f(v') - f(v)}{2}\right\} > 0$, and  the new function $\tilde{f}$ is the result of decreasing $f_{u \rightarrow v'}$ by $\delta$ and increasing $f_{u \rightarrow v}$ by $\delta$. 
\end{proof}

\paragraph{High-weight Vertices} We consider a threshold $\tau = \eps \sqrt{\Delta n}$, where $\eps$ is some small positive constant number to be determined. Consider the set of high-weight vertices 
\[R_1 = \left\{v \in V_{i^\ast} \setminus A_{i^\ast} \ \middle| \ f(v) \geq \eps \sqrt{\Delta n}\right\},\]
Observe that $|R_1| \leq \sum_{v\in V_{i^\ast} \setminus A_{i^\ast}} f(v) / \tau = O\left(\sqrt{n / \Delta}\right)$, so we have\[\vol(R_1) \leq \Delta |R| = O(\sqrt{\Delta n}).\]

We make the following crucial observation.  
\begin{lemma}\label{lem:heavy}
For each $V_{i^\ast}$-flap $S$ with $C_S \neq \emptyset$, exactly one of the following holds.
\begin{itemize}
    \item We have $C_S \subseteq R_1$. In this case, $S$ is also a $(A_{i^\ast} \cup R_1)$-flap.
    \item We have $f_{u \rightarrow v} = 0$ for all  $u \in S$ and $v \in R_1$.
\end{itemize}
\end{lemma}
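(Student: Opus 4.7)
The plan is to establish the dichotomy by a case analysis based on whether some vertex of $C_S$ falls below the threshold $\tau = \eps\sqrt{\Delta n}$. The key tool will be \cref{lem:function-aux}, which asserts that whenever $f_{u \rightarrow v} > 0$ for $u \in S$, the vertex $v$ attains the maximum value of $f$ over the entire set $C_S$.

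First I would handle the case $C_S \subseteq R_1$. Here the claim is purely topological: $S$ is a connected component of $G[V \setminus V_{i^\ast}]$, and since $A_{i^\ast} \cup R_1 \subseteq V_{i^\ast}$, the set $S$ is connected in $G[V \setminus (A_{i^\ast} \cup R_1)]$ as well. To upgrade this to a component, I need to show no vertex of $V \setminus (A_{i^\ast} \cup R_1)$ outside $S$ is adjacent to $S$. Vertices in $V \setminus V_{i^\ast}$ outside $S$ lie in different $V_{i^\ast}$-flaps and are therefore separated from $S$ by $V_{i^\ast}$. The remaining candidates lie in $V_{i^\ast} \setminus (A_{i^\ast} \cup R_1)$; but any such vertex adjacent to $S$ would belong to $C_S$ by definition, hence to $R_1$ by our case hypothesis, contradiction.

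Next I would handle the complementary case, where $C_S \not\subseteq R_1$, i.e.\ there is some $v^\ast \in C_S$ with $f(v^\ast) < \tau$. Suppose for contradiction that there exist $u \in S$ and $v \in R_1$ with $f_{u \rightarrow v} > 0$. Since $v \in R_1$ we have $f(v) \geq \tau > f(v^\ast)$. Applying \cref{lem:function-aux} to this pair $(u,v)$ with the candidate $v' = v^\ast \in C_S$ forces $f(v^\ast) \geq f(v)$, contradicting the strict inequality just derived. Hence $f_{u \rightarrow v} = 0$ for every $u \in S$ and every $v \in R_1$, which is the second alternative.

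Finally, I would verify that these two alternatives are mutually exclusive when $C_S \neq \emptyset$, so that the word ``exactly'' in the statement is justified. Pick any $u \in S$; since $\sum_{v \in V_{i^\ast}\setminus A_{i^\ast}} f_{u \rightarrow v} = 1$ and $f_{u \rightarrow v} = 0$ outside $C_S$, there is some $v \in C_S$ with $f_{u \rightarrow v} > 0$. If the first alternative holds then this $v$ lies in $R_1$, so the second alternative fails; conversely if the second alternative holds then this $v$ lies outside $R_1$, so $C_S \not\subseteq R_1$ and the first alternative fails. The main subtlety is identifying the correct choice of $v'$ when invoking \cref{lem:function-aux}: one must apply it to a low-weight witness in $C_S$, not to an arbitrary vertex, which is precisely why $C_S \not\subseteq R_1$ gives the needed inequality.
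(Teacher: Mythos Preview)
Your proposal is correct and follows essentially the same approach as the paper: both arguments hinge on applying \cref{lem:function-aux} to derive a contradiction from the coexistence of a high-weight vertex $v\in R_1$ receiving positive mass and a low-weight witness $v^\ast\in C_S\setminus R_1$. Your write-up is in fact more complete than the paper's, which omits the explicit verification that $S$ is an $(A_{i^\ast}\cup R_1)$-flap in the first case and does not spell out the mutual-exclusivity argument; both points are straightforward, but your inclusion of them is appropriate.
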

\begin{proof}
Suppose both $R_1 \cap C_S$ and $C_S \setminus R_1$ are non-empty. We show that $f_{u \rightarrow v} = 0$ for all  $u \in S$ and $v \in R_1$. Note that  we already have $f_{u \rightarrow v} = 0$ for each $v \notin C_S$ by the definition of $f$.
Pick any  $v \in R_1 \cap C_S$ and $v' \in C_S \setminus R_1$. Then $f(v) \geq \tau > f(v')$.

Suppose  $f_{u \rightarrow v} > 0$ for some $u \in S$. Then 
\cref{lem:function-aux} implies that $f(v') \geq f(v)$, which is a contradiction. Hence
 $f_{u \rightarrow v} = 0$ for each $u \in S$.
\end{proof}

\paragraph{Breadth-first Search}  We perform a breadth-first search  in each connected component of  $H_{i^\ast}[V_{i^\ast} \setminus A_{i^\ast}]$.
 Note that $G[V_{i^\ast} \setminus A_{i^\ast}]$ is a subgraph of $H_{i^\ast}[V_{i^\ast} \setminus A_{i^\ast}]$ because the clique-sum operation might remove edges.
 \begin{itemize}
     \item Let $L_0$ be any set of vertices in $V_{i^\ast} \setminus A_{i^\ast}$ that includes exactly one vertex for each connected component of $H_{i^\ast}[V_{i^\ast} \setminus A_{i^\ast}]$. Intuitively, $L_0$ is the set of roots of breadth-first search.
     \item For $i \geq 1$, let $L_i$ be the set of vertices  $u$ in $V_{i^\ast} \setminus A_{i^\ast}$ with $\dist(u,L_0)=i$ in $H_{i^\ast}[V_{i^\ast} \setminus A_{i^\ast}]$. Intuitively, $L_i$ is the set of layer-$i$ vertices of breadth-first search.
     \item Let $d = \max_{v \in V_{i^\ast} \setminus A_{i^\ast}} \dist(v, L_0)$ be the largest index such that $L_d \neq \emptyset$. 
 \end{itemize}
 Note that $L_i = \emptyset$ for each $i \notin \{0, 1, \ldots, d\}$.

\begin{lemma}\label{lem:flap-intersection}
For each $V_{i^\ast}$-flap $S$, there is an index $i$ such that the vertices in $V_{i^\ast} \setminus A_{i^\ast}$ adjacent to $S$ are confined to $L_i \cup L_{i+1}$.
\end{lemma}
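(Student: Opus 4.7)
The plan is to exploit the clique-sum tree decomposition $(T,X)$ that witnesses $G = H_1 \oplus_h H_2 \oplus_h \cdots \oplus_h H_r$, and reduce the statement to the elementary BFS observation that a clique spans at most two consecutive layers. Throughout, I regard $V_{i^\ast}$ simultaneously as a bag of $T$ and as a vertex set in $G$.

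First I would argue that the vertex set $S$ is ``localized'' in $T$: if we delete the node $V_{i^\ast}$ from $T$, obtaining subtrees $T_1,\ldots,T_m$, then all of $S$ lies in bags of a single subtree $T_k$. This follows from the standard tree-decomposition property that the bags containing any fixed vertex $w$ form a connected subtree of $T$. For any $w \in V \setminus V_{i^\ast}$, this subtree avoids $V_{i^\ast}$, so it lies in a unique component of $T - V_{i^\ast}$. Using that consecutive edges along an $S$-internal path of $G$ share an endpoint outside $V_{i^\ast}$, their witnessing bags land in the same component, so all of $S$ is captured by one subtree $T_k$.

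Next I would identify the ``entry clique'' of $T_k$. Let $V_{j_k}$ be the unique bag of $T_k$ that is adjacent to $V_{i^\ast}$ in $T$; then $V_{j_k} \cap V_{i^\ast}$ is precisely the clique along which $H_{j_k}$ (or the graph rooted there) was clique-summed to the side of $V_{i^\ast}$. I claim every vertex of $V_{i^\ast}$ that is adjacent in $G$ to some $u \in S$ lies in $V_{j_k} \cap V_{i^\ast}$. Indeed, such an edge $\{u,v\}$ sits in some bag $B$, and $B$ must contain $u \in S$, hence $B$ is in $T_k$; on the other hand $B$ contains $v \in V_{i^\ast}$, so by the tree-decomposition property $v$ appears in every bag on the $B$--$V_{i^\ast}$ path, in particular $v \in V_{j_k} \cap V_{i^\ast}$.

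Now the statement falls out immediately. The set $V_{j_k} \cap V_{i^\ast}$ is a clique of $H_{i^\ast}$ by construction of the clique-sum, and restricting to $V_{i^\ast} \setminus A_{i^\ast}$ still yields a clique of $H_{i^\ast}[V_{i^\ast} \setminus A_{i^\ast}]$ (possibly empty). Since the BFS layers $L_0,L_1,\ldots,L_d$ are computed in $H_{i^\ast}[V_{i^\ast} \setminus A_{i^\ast}]$ and any two adjacent vertices in that graph have layer indices differing by at most $1$, a clique in that graph occupies at most two consecutive layers $L_i$ and $L_{i+1}$, as required.

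The only step needing a little care is the claim that each connected component of $G[V \setminus V_{i^\ast}]$ is confined to a single subtree of $T - V_{i^\ast}$; everything else is essentially bookkeeping about clique-sums and a one-line BFS fact. No serious obstacle is anticipated.
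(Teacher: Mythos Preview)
Your proposal is correct and follows essentially the same route as the paper. The paper's proof compresses your first three paragraphs into the single assertion that $C_S$ forms a clique of size at most $h$ in $H_{i^\ast}$ ``since $G$ is an $h$-clique-sum of $H_1,\ldots,H_r$''; your tree-decomposition argument is exactly the justification behind that sentence, and the concluding BFS observation (a clique spans at most two consecutive layers) is identical.
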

\begin{proof}
Recall that $C_S \subseteq V_{i^\ast} \setminus A_{i^\ast}$ is the set of vertices in $V_{i^\ast} \setminus A_{i^\ast}$ adjacent to some vertex in $S$.
The vertices $C_S$ form a clique of size at most $h$ in $H_{i^\ast}=(V_{i^\ast}, E_{i^\ast})$, since $G$ is an $h$-clique-sum of $H_1, H_2, \ldots H_r$. 
If $C_S = \emptyset$, then we can pick $i$ to be any index. Otherwise, we pick $u \in C_S$ to minimize $\dist(u,L_0)$ in $H_{i^\ast}[V_{i^\ast} \setminus A_{i^\ast}]$. Let $i = \dist(u,L_0)$. Because $C_S$ is a clique in $H_{i^\ast}[V_{i^\ast} \setminus A_{i^\ast}]$, we have $C_S \subseteq L_i \cup L_{i+1}$.
\end{proof}

For simplicity, we write $f(S) = \sum_{u \in S}f(u)$ for each subset $S \subseteq V_{i^\ast} \setminus A_{i^\ast}$. Consider the following definitions.
\begin{itemize}
    \item $i_a$ is the smallest index $i$ with $f(L_0 \cup L_1 \cup \cdots \cup L_i) \geq f(V_{i^\ast} \setminus A_{i^\ast})/3$.
     \item $i_b$ is the largest index $i$ with $f(L_i \cup L_{i+1} \cup \cdots \cup L_d) \geq f(V_{i^\ast} \setminus A_{i^\ast})/3$.
\end{itemize}
 It is clear that $0 \leq i_a \leq i_b \leq d$. 
 
\paragraph{Heavy and Light Layers} We say that $L_i$ is \emph{heavy} if \[\vol(L_i) \geq \sqrt{\Delta n} \ \  \text{ or } \ \ \sum_{v \in L_i} f(v) \geq (1/18)n,\] otherwise  $L_i$ is \emph{light}, where the volume is measure with respect to the original graph $G$. 

\begin{lemma}\label{lem:easy-case-aux}
Let $0 \leq i \leq d$ be an index. Let $S$ be an $(A_{i^\ast} \cup L_{i})$-flap. Then the following holds.
\begin{itemize}
    \item If $S \cap L_j \neq \emptyset$ for some $0 \leq j < i$, then $|S| \leq  f(L_{1} \cup L_{2} \cup \cdots \cup L_{i-1})$.
    \item If $S \cap L_j \neq \emptyset$ for some $i < j \leq d$, then $|S| \leq  f(L_{i+1} \cup L_{i+2} \cup \cdots \cup L_{d})$.    
\end{itemize}
\end{lemma}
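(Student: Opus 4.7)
The plan is to prove the two bullets by symmetric arguments; I focus on the first. The core is a layer-confinement claim: if $S$ is an $(A_{i^\ast} \cup L_i)$-flap that intersects $L_j$ for some $j < i$, then $S$ is disjoint from $L_{j'}$ for every $j' > i$. To establish this by contradiction, I would take a path $P$ in $G[V \setminus (A_{i^\ast} \cup L_i)]$ from $L_{<i}$ to $L_{>i}$ and decompose $P$ into maximal subpaths lying in $V_{i^\ast} \setminus (A_{i^\ast} \cup L_i)$, joined by excursions through $V_{i^\ast}$-flaps $T \subseteq V \setminus V_{i^\ast}$. Within each $V_{i^\ast}$-subpath, consecutive vertices differ by at most one in BFS layer, since the BFS is run on $H_{i^\ast}[V_{i^\ast} \setminus A_{i^\ast}]$ and $G[V_{i^\ast} \setminus A_{i^\ast}]$ is its subgraph. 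Each excursion enters and exits through $C_T \setminus (A_{i^\ast} \cup L_i)$, and by \cref{lem:flap-intersection} we have $C_T \subseteq L_k \cup L_{k+1}$ for some $k$, so $C_T \setminus L_i$ lies entirely on one side of layer $i$. Neither move type can jump from below $L_i$ to above $L_i$, so $P$ must meet $L_i$, a contradiction.

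With the confinement in hand, every $V_{i^\ast}$-flap $T \subseteq S$ satisfies $C_T \subseteq L_k \cup L_{k+1}$ with $k \leq i-1$, and $S \cap (V_{i^\ast} \setminus A_{i^\ast}) \subseteq L_0 \cup \cdots \cup L_{i-1}$. I would then bound $|S|$ by splitting it as $|S \cap (V_{i^\ast} \setminus A_{i^\ast})| + \sum_{T \subseteq S} |T|$. Each $v \in S \cap (V_{i^\ast} \setminus A_{i^\ast})$ contributes $1 \leq f(v)$ at a vertex in the correct layer range. For each $V_{i^\ast}$-flap $T$ contained in $S$, we must have $C_T \neq \emptyset$, since otherwise $T$ would be an isolated component of $G$ that could not belong to a flap containing vertices of $V_{i^\ast}$. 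Each $u \in T$ then contributes $\sum_{v \in C_T} f_{u \rightarrow v} = 1$, so $|T|$ is absorbed into the $f$-contribution on $C_T$.

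The hardest step is the bookkeeping when $C_T \subseteq L_{i-1} \cup L_i$: part of $u$'s unit of mass may be distributed to some $v \in L_i \cap C_T$, which lies outside the target layer range. To reroute this mass into $L_{i-1} \cap C_T$ without double-charging, I would invoke \cref{lem:function-aux} (the local optimality property of $f$): if $f_{u \rightarrow v} > 0$ for some $v \in L_i \cap C_T$, then every $v' \in L_{i-1} \cap C_T$ satisfies $f(v') \geq f(v)$, so the mass accumulated on $L_{i-1}$ pointwise dominates the mass on $L_i$ and can absorb it. Combining the $V_{i^\ast}$-part, the in-range $V_{i^\ast}$-flap contribution, and the rebalanced out-of-range contribution yields $|S| \leq f(L_0 \cup L_1 \cup \cdots \cup L_{i-1})$, matching the target bound. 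The second bullet follows by the mirror argument applied above layer $i$, where there is no analogue of $L_0$ on that side.
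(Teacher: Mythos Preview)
Your layer-confinement argument and the basic charging of $|S|$ to $f$-mass are correct and essentially match the paper's approach: the paper encapsulates confinement in the line ``it is clear that $S \subseteq U$'', where $U$ is the union of the layers below $i$ together with all $V_{i^\ast}$-flaps adjacent to them, and then bounds $|U|$ by an $f$-sum exactly as you do.

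The genuine gap is the rerouting step. From \cref{lem:function-aux} you correctly deduce $f(v') \geq f(v)$ whenever $v \in L_i \cap C_T$ carries positive mass and $v' \in L_{i-1} \cap C_T$, but this does \emph{not} let you absorb the $L_i$-mass into $f$ restricted to $L_{\le i-1}$. The quantity $f(v')$ is already fully committed: it equals $1 + \sum_u f_{u \rightarrow v'}$, and every term in that sum is needed to pay for flap vertices whose mass went directly to $v'$; there is no slack left to additionally absorb mass that went to $v$. Concretely, take $i=2$, $L_0=\{r\}$, $L_1=\{v'\}$, $L_2=\{v\}$, and attach $2k$ single-vertex flaps $T_j$ with $C_{T_j}=\{v',v\}$, each splitting its unit as $\tfrac12$ to $v'$ and $\tfrac12$ to $v$. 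This is the $\sum f^2$-minimizer, it satisfies \cref{lem:function-aux} since $f(v')=f(v)=k+1$, yet the $(A_{i^\ast}\cup L_2)$-flap $S$ containing $r,v'$ and all $2k$ flap vertices has $|S|=2+2k$, while $f(L_0\cup L_1)=2+k$. So the bound you are aiming for is false in general, and no use of \cref{lem:function-aux} can rescue it.

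The paper does not attempt this rerouting. Its proof concludes with $|S| \leq f(L_1 \cup \cdots \cup L_i)$, i.e.\ with $L_i$ \emph{included}, and that is precisely the form invoked downstream in \cref{lem:easy-case} and \cref{lem:main-flap-size}; the printed upper endpoint $L_{i-1}$ in the lemma statement is a typo. If you drop the rerouting paragraph and keep $L_i$ in the target range, your argument coincides with the paper's.
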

\begin{proof}
We only consider the case $S \cap L_j \neq \emptyset$ for some $0 \leq j < i$, as the other case is similar.
 Let $U$ be the union of all vertices in the following sets. It is clear that $S \subseteq U$. 
\begin{itemize}
    \item $L_{1} \cup L_{2} \cup \cdots \cup L_{i-1}$.
    \item All $V_{i^\ast}$-flap $S'$ adjacent to $L_{1} \cup L_{2} \cup \cdots \cup L_{i-1}$.
\end{itemize}

We have 
\[|U| = |L_{1} \cup L_{2} \cup \cdots \cup L_{i-1}|
  + \sum_{\text{$S'$ is a $V_{i^\ast}$-flap adjacent to $L_{1} \cup L_{2} \cup \cdots \cup L_{i-1}$}} |S'|.\]
  
 By the definition of $f$, we have $f_{u \rightarrow v} > 0$ only if $v \in C_{S'}$, where $S'$ is the $V_{i^\ast}$-flap that contains $u$. Furthermore, if $S'$ is a $V_{i^\ast}$-flap adjacent to $L_{1} \cup L_{2} \cup \cdots \cup L_{i-1}$, then $C_{S'} \subseteq L_{1} \cup L_{2} \cup \cdots \cup L_{i}$ due to \cref{lem:flap-intersection}. Hence we have \[|S'| = \sum_{u \in S', \ v \in C_{S'}} f_{u \rightarrow v} = \sum_{u \in S', \ v \in L_{1} \cup L_{2} \cup \cdots \cup L_{i}} f_{u \rightarrow v}.\] 

Since $f(v) = 1 + \sum_{u \in V \setminus  V_{i^\ast}}{f_{u \rightarrow v}}$, we have 
$|S| \leq |U| \leq f(L_{1} \cup L_{2} \cup \cdots \cup L_{i})$, as we can write 
\begin{align*}
f(L_{1} \cup \cdots \cup L_{i}) &= |L_{1} \cup \cdots \cup L_{i}| + \sum_{u \in V \setminus V_{i^\ast}, \ v \in L_{1} \cup \cdots \cup L_{i}} f_{u \rightarrow v}\\
&\geq |L_{1} \cup \cdots \cup L_{i}| + \sum_{\text{$S'$ is a $V_{i^\ast}$-flap adjacent to $L_{1} \cup \cdots \cup L_{i-1}$}} \; \sum_{u \in S', \ v \in C_{S'}} f_{u \rightarrow v}\\
&= |L_{1} \cup  \cdots \cup L_{i}| + \sum_{\text{$S'$ is a $V_{i^\ast}$-flap adjacent to $L_{1} \cup  \cdots \cup L_{i-1}$}}  |S'|\\
&= |U|.\qedhere
\end{align*}
\end{proof}

\begin{lemma}\label{lem:easy-case}
Suppose that there exists an index $i_a < i' < i_b$ such that $L_{i'}$ is light. Then $R = A_{i^\ast} \cup L_{i'}$ satisfies that $\vol(R) = O(\sqrt{\Delta n})$ and each $R$-flap has at most $(2/3)n$ vertices.
\end{lemma}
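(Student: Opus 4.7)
The plan is to verify the two conclusions of the lemma separately. For the volume bound $\vol(R) = O(\sqrt{\Delta n})$, I would observe that $|A_{i^\ast}| \leq h = O(1)$ gives $\vol(A_{i^\ast}) \leq h\Delta = O(\sqrt{\Delta n})$ (we may assume $\Delta \leq n$, since otherwise $n = O(1)$ and the statement is trivial), while the lightness of $L_{i'}$ directly gives $\vol(L_{i'}) < \sqrt{\Delta n}$. Adding these yields $\vol(R) = O(\sqrt{\Delta n})$.

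The bulk of the work is the flap-size bound. The crucial geometric step is to show that every $R$-flap $S$ lies on a single side of the removed layer $L_{i'}$. For this, I would invoke \cref{lem:flap-intersection}: for every $V_{i^\ast}$-flap $S'$, the neighborhood $C_{S'}$ sits inside two consecutive BFS layers $L_j \cup L_{j+1}$, so $C_{S'} \setminus L_{i'}$ lies entirely in $L_0 \cup \cdots \cup L_{i'-1}$ or entirely in $L_{i'+1} \cup \cdots \cup L_d$. Since $A_{i^\ast}$ is also deleted in $R$, any path in $G[V \setminus R]$ from a vertex of $L_j$ with $j < i'$ to a vertex of $L_k$ with $k > i'$ would have to pass either through $L_{i'}$ (impossible, since $L_{i'} \subseteq R$) or through a $V_{i^\ast}$-flap (ruled out by the previous observation). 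Consequently $S$ intersects at most one of $L_0 \cup \cdots \cup L_{i'-1}$ and $L_{i'+1} \cup \cdots \cup L_d$.

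With one-sidedness in hand, I would split into three cases. If $S$ contains no vertex of $V_{i^\ast}$, then $S$ is contained in a single $V_{i^\ast}$-flap $S'$, and $|S| \leq |S'| \leq n/2 \leq (2/3) n$ by the choice of $i^\ast$. Otherwise $S$ is a genuine $(A_{i^\ast} \cup L_{i'})$-flap containing a vertex of $L_j$ for some $j \neq i'$, and applying \cref{lem:easy-case-aux} with $i = i'$ gives $|S| \leq f(L_0 \cup \cdots \cup L_{i'-1})$ or $|S| \leq f(L_{i'+1} \cup \cdots \cup L_d)$ according to which side the intersection lies on. Since $i' < i_b$ gives $L_{i'+1} \cup \cdots \cup L_d \supseteq L_{i_b} \cup \cdots \cup L_d$, and hence $f(L_{i'+1} \cup \cdots \cup L_d) \geq f(V_{i^\ast} \setminus A_{i^\ast})/3$, I conclude $f(L_0 \cup \cdots \cup L_{i'-1}) \leq (2/3)\, f(V_{i^\ast} \setminus A_{i^\ast}) \leq (2/3) n$. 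A symmetric argument using $i' > i_a$ bounds the other side by $(2/3) n$.

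The main obstacle is establishing the one-sidedness of $R$-flaps, because this is where the clique-sum structure, via \cref{lem:flap-intersection}, couples the BFS layering on $V_{i^\ast} \setminus A_{i^\ast}$ with the global structure of $G$ and makes $L_{i'}$ an actual separator once $A_{i^\ast}$ is also removed. Once this structural fact is in place, the rest of the argument is straightforward arithmetic on the weight function $f$, using only the definitions of $i_a$, $i_b$ and the total bound $f(V_{i^\ast} \setminus A_{i^\ast}) \leq n$.
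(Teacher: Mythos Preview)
Your proof is correct and follows essentially the same approach as the paper's. The paper's argument is slightly terser: it directly invokes \cref{lem:easy-case-aux} (whose own proof already contains the one-sidedness claim ``it is clear that $S\subseteq U$''), so your explicit discussion of why an $R$-flap cannot straddle both sides of $L_{i'}$ is redundant once \cref{lem:easy-case-aux} is granted, though it does no harm and arguably clarifies a step the paper glosses over. The arithmetic bounding $f(L_0\cup\cdots\cup L_{i'-1})$ via $i'<i_b$ and the definition of $i_b$ is the same in both proofs.
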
   
\begin{proof}
We already have $\vol(L_{i'}) = O(\sqrt{\Delta n})$ by the definition of a light layer. 
Since $|A_{i^\ast}| \leq h = O(1)$, we also have $\vol(A_{i^\ast}) = O(\Delta) = O(\sqrt{\Delta n})$. Therefore, $\vol(R) = O(\sqrt{\Delta n})$.

For the rest of the proof, we verify that each $R$-flap has at most $(2/3)n$ vertices. Let $S$ be any $R$-flap. If $S \cap V_{i^\ast} = \emptyset$, then we have $S \subseteq S'$ for some $V_{i^\ast}$-flap $S'$. However, our choice of $V_{i^\ast}$ guarantees that $|S'| \leq (1/2)n$, due to  \cref{lem:cliquesum-aux}, so  $|S'| \leq |S| \leq (1/2)n < (2/3)n$. 

Suppose that $S \cap V_{i^\ast} \neq \emptyset$. Let $u \in S \cap V_{i^\ast}$. Then $u \in L_i$ for some $i \neq i'$. We assume that $i < i'$, as the case of $i > i'$ is similar. We have 
\[|S| \leq f(L_{1} \cup L_{2} \cup \cdots \cup L_{i'}) \leq (2/3)f(V_{i^\ast} \setminus A_{i^\ast}) \leq (2/3)n,\]  where the first inequality is due to \cref{lem:easy-case-aux} and the second inequality is due to $i' < i_b$.  
\end{proof}

\paragraph{Diameter Reduction} In view of \cref{lem:easy-case}, from now on, we focus on the case where $L_{i'}$ is heavy for all $i_a < i' < i_b$.  We define two indices $i_a'$ and $i_b'$ as follows.
  
\begin{itemize}
    \item  The index $0 \leq i_a' \leq i_a$ is defined as follows.
    \begin{itemize}
        \item If $i_a = 0$ or $L_{i_a - 1}$ is light, then $i_a' = i_a$.
        \item Otherwise,  $i_a'$ is the smallest index $0 \leq i \leq i_a-1$ such that  $L_i, L_{i+1}, \ldots, L_{i_a - 1}$ are heavy.
    \end{itemize}
    \item  The index $i_b \leq i_b' \leq d$ is defined as follows.
    \begin{itemize}
        \item If $i_b = d$ or $L_{i_b + 1}$ is light, then $i_b' = i_b$.
        \item Otherwise,  $i_b'$ is the largest index $i_b+1 \leq i \leq d$ such that  $L_{i_b + 1}, L_{i_b+2}, \ldots, L_{i}$ are heavy.
    \end{itemize}
\end{itemize}

\begin{lemma}\label{lem:diam-reduction}
Given that $L_{i'}$ is heavy for all $i_a < i' < i_b$, we have $i_b' - i_a' =  O\left(\sqrt{n / \Delta}\right)$.
\end{lemma}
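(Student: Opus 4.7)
The plan is to bound the number of heavy layers in the range $[i_a', i_b']$ by a simple counting argument using the two global budgets: the total volume of the graph and the total $f$-mass.

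First, I would observe that by construction every layer index $i$ in the range $[i_a', i_b'] \setminus \{i_a, i_b\}$ indexes a heavy layer. Indeed, the layers strictly between $i_a$ and $i_b$ are heavy by the hypothesis of the lemma, while the layers in $[i_a', i_a - 1]$ and $[i_b+1, i_b']$ are heavy by the very definitions of $i_a'$ and $i_b'$. Hence the range contains at least $i_b' - i_a' - 1$ heavy layers. Split these heavy layers into two classes: class $\mathcal{A}$ consists of those layers with $\vol(L_i) \geq \sqrt{\Delta n}$, and class $\mathcal{B}$ consists of those layers with $\sum_{v \in L_i} f(v) \geq n/18$.

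Next, I would bound $|\mathcal{A}|$ and $|\mathcal{B}|$ separately. Since $G$ is $H$-minor-free and therefore has edge density $O(1)$, we have $\sum_{i} \vol(L_i) \leq \vol(V) = 2|E| = O(n)$. Since the layers $L_i$ are pairwise disjoint, it follows that
\[
|\mathcal{A}| \cdot \sqrt{\Delta n} \;\leq\; \sum_{L_i \in \mathcal{A}} \vol(L_i) \;\leq\; O(n),
\]
and thus $|\mathcal{A}| = O(\sqrt{n/\Delta})$. For $\mathcal{B}$, I would use the global bound $\sum_{v \in V_{i^\ast} \setminus A_{i^\ast}} f(v) \leq n$ established earlier (immediately after the definition of the weight function $f$). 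Since the layers are disjoint subsets of $V_{i^\ast} \setminus A_{i^\ast}$, this yields $|\mathcal{B}| \cdot (n/18) \leq n$, so $|\mathcal{B}| \leq 18$.

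Combining these, I get $i_b' - i_a' - 1 \leq |\mathcal{A}| + |\mathcal{B}| = O(\sqrt{n/\Delta}) + O(1) = O(\sqrt{n/\Delta})$, which is exactly the claimed bound. The argument is essentially a pigeonhole on the two budgets, so I do not expect any real obstacle; the only thing to be careful about is handling the two boundary layers $L_{i_a}$ and $L_{i_b}$, which might or might not be heavy, but absorbing them into an additive $O(1)$ term is harmless.
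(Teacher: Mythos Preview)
Your proposal is correct and follows essentially the same approach as the paper: both observe that all layers in $[i_a', i_b']$ except possibly $L_{i_a}$ and $L_{i_b}$ are heavy, then use the total $f$-mass bound $\sum_v f(v) \leq n$ to cap the number of layers heavy for the $f$-reason at $18$, and finally use the $H$-minor-free edge bound $|E| = O(n)$ to cap the number of layers heavy for the volume reason at $O(\sqrt{n/\Delta})$. The only cosmetic difference is that you explicitly name the two classes $\mathcal{A}$ and $\mathcal{B}$, whereas the paper combines the counts directly into a single inequality.
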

\begin{proof}
By the assumption of the lemma and the definition of $i_a'$ and $i_b'$, all of $L_{i_a'}, L_{i_a' + 1}, \ldots, L_{i_b'}$ must be heavy,  except that $L_{i_a}$ and $L_{i_b}$ might be light. Among all heavy  $L_i$, all of them are heavy due to the reason $\vol(L_i) \geq \sqrt{\Delta n}$, except that at most $18$ of them are heavy only because $\sum_{v \in L_i} f(v) \geq n/18$, as $\sum_{v \in V_{i^\ast} \setminus A_{i^\ast}} f(v) \leq n$.

Hence  we have $\vol(L_{i_a'} \cup L_{i_a' + 1} \cup \cdots \cup L_{i_b'}) \geq (i_b' - i_a' - 2 - 18) \sqrt{ \Delta n}$. Since the original graph $G=(V,E)$ is $H$-minor free, we have $|E| = O(n)$. Therefore, $(i_b' - i_a' - 20) \sqrt{\Delta n} = O(n)$, so $i_b' - i_a' =  O\left(\sqrt{n / \Delta}\right)$.
\end{proof}

We define the set $R_2$  as follows.
\[R_2 = \begin{cases}
L_{i_a' - 1} \cup L_{i_b' + 1}, & \text{if $i_a' > 0$ and $i_b' < d$.}\\
L_{i_a' - 1}, & \text{if $i_a' > 0$ and $i_b' = d$.}\\
L_{i_b' + 1}, & \text{if $i_a' = 0$ and $i_b' < d$.}\\
\emptyset, & \text{if $i_a' = 0$ and $i_b' = d$.}
\end{cases}\]
Since both $L_{i_a' - 1}$ and $L_{i_b' + 1}$ are light by our choice of $i_a'$ and $i_b'$, we have $\vol(R_2) = O(\sqrt{\Delta n})$.

 We define the graph $G_s$ as follows.
\begin{itemize}
    \item Start with $H_{i^\ast}[V_{i^\ast} \setminus A_{i^\ast}]$. 
    \item For each connected component of $H_{i^\ast}[V_{i^\ast} \setminus A_{i^\ast}]$, contract all vertices in $L_0, L_1, \ldots, L_{i_a'-1}$ into a vertex $r^\ast$.
    \item Remove all vertices in $L_{i_b' + 1}, L_{i_b' + 2}, \ldots, L_d$.
\end{itemize}

By \cref{lem:diam-reduction}, the diameter of $G_s$ is $O\left(\sqrt{n / \Delta}\right)$. Furthermore,  $G_s$  is a minor of the graph $H_{i^\ast}[V_{i^\ast} \setminus A_{i^\ast}]$ which is apex-free $h$-almost-embeddable on surface $\Sigma$. By \cref{lem:treewidth-diameter-rel}, the treewidth of $G_s$ is also $\tw(G_s) = O\left(\sqrt{n / \Delta}\right)$. Apply \cref{lem:treewidth-aux2} with the following weight function.
\begin{itemize}
    \item $w(r^\ast) = 0$ for each connected component of $H_{i^\ast}[V_{i^\ast} \setminus A_{i^\ast}]$.
    \item $w(v) = f(v)$ for each $v \in L_{i_a'} \cup L_{i_a' + 1} \cup \cdots \cup L_{i_b'}$.
\end{itemize}

Observe that $w(v)  = f(v) \geq 1$ for each $v \in L_{i_a'} \cup L_{i_a' + 1} \cup \cdots \cup L_{i_b'}$ and the summation of $w(v)$ over all vertices in $G_s$ is $\sum_{v \in L_{i_a'} \cup L_{i_a' + 1} \cup \cdots \cup L_{i_b'}} f(v) \leq \sum_{v \in V_{i^\ast}} f(v) \leq n$.
From \cref{lem:treewidth-aux2} we obtain a set $R_3 \subseteq L_{i_a'} \cup L_{i_a' + 1} \cup \cdots \cup L_{i_b'}$ meeting the following conditions.
\begin{itemize}
    \item $|R_3| = O\left(\sqrt{n / \Delta}\right)$, so $\vol(R_3) = O(\sqrt{\Delta n})$ in $G$.
    \item In $H_{i^\ast}[V_{i^\ast} \setminus A_{i^\ast}]$, each connected component of $L_{i_a'} \cup L_{i_a' + 1} \cup \cdots \cup L_{i_b'} \setminus R_3$ has at most $(1/2)n$ vertices.
\end{itemize}

\begin{lemma}\label{lem:main-flap-size}
Suppose that $L_{i'}$ is heavy for all $i_a < i' < i_b$.
Then $R = A_{i^\ast} \cup R_1 \cup R_2 \cup R_3$ satisfies have $\vol(R) = O(\sqrt{\Delta n})$  and each $R$-flap has at most $(2/3)n$ vertices.
\end{lemma}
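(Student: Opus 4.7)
The plan is straightforward for the volume bound: simply observe $\vol(R) \le \vol(A_{i^\ast}) + \vol(R_1) + \vol(R_2) + \vol(R_3)$ and reuse the four individual $O(\sqrt{\Delta n})$ estimates already established (respectively from $|A_{i^\ast}| \le h$, the mass-counting bound on $|R_1|$, the lightness of $L_{i_a' - 1}$ and $L_{i_b' + 1}$, and $|R_3| = O(\sqrt{n/\Delta})$).

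For the flap bound, let $S$ be an $R$-flap. If $S \cap V_{i^\ast} = \emptyset$, then because $R \subseteq V_{i^\ast}$, $S$ is an entire $V_{i^\ast}$-flap, so $|S| \le n/2$ by \cref{lem:cliquesum-aux}. Otherwise pick $u \in S \cap V_{i^\ast}$; since $A_{i^\ast} \subseteq R$, $u$ lies in some layer $L_i$, and since $R_2 \supseteq L_{i_a' - 1} \cup L_{i_b' + 1}$, we have $i \notin \{i_a' - 1, i_b' + 1\}$. If $i \leq i_a' - 2$ (or symmetrically $i \geq i_b' + 2$), the flap $S$ is contained in an $(A_{i^\ast} \cup L_{i_a' - 1})$-flap, so \cref{lem:easy-case-aux} together with $i_a' \le i_a$ yields $|S| \le f(L_1 \cup \cdots \cup L_{i_a' - 1}) \le f(L_0 \cup \cdots \cup L_{i_a - 1}) < n/3$.

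The main case is $i_a' \le i \le i_b'$. My first step will be to show that $S_{\mathrm{in}} := S \cap V_{i^\ast}$ is contained in a single connected component $C$ of $(L_{i_a'} \cup \cdots \cup L_{i_b'}) \setminus R_3$ in $H_{i^\ast}[V_{i^\ast} \setminus A_{i^\ast}]$. The key observations are that any path in $G[V \setminus R]$ between two vertices of $V_{i^\ast}$ decomposes into direct $G$-edges (which are also $H_{i^\ast}$-edges) and detours through $V_{i^\ast}$-flaps $S'$ whose entry/exit points lie in $C_{S'}$, a clique in $H_{i^\ast}$; along each such hop the BFS layer changes by at most one, so the deleted layers $L_{i_a' - 1}, L_{i_b' + 1}$ cannot be crossed, and $R_3$ separates the remaining components exactly as guaranteed when applying \cref{lem:treewidth-aux2}. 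In particular $\sum_{v \in C} f(v) \le n/2$.

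Next I will distribute the mass of $S \setminus V_{i^\ast}$ back to layer vertices via the weights $f$. For each $V_{i^\ast}$-flap $S' \subseteq S$, since $C_{S'} \cap S_{\mathrm{in}} \neq \emptyset$ and $S_{\mathrm{in}} \cap R_1 = \emptyset$, \cref{lem:heavy} forces $f_{u \rightarrow v} = 0$ for every $u \in S'$ and $v \in R_1$. Writing $|S'| = \sum_{u \in S'} \sum_v f_{u \rightarrow v}$ and swapping summations yields $|S| \le \sum_{v \in V^{\ast\ast} \setminus R_1} f(v)$, where $V^{\ast\ast}$ is $S_{\mathrm{in}}$ together with every vertex appearing in some $C_{S'}$ with $S' \subseteq S$. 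Because each $C_{S'}$ is an $H_{i^\ast}$-clique, $V^{\ast\ast}$ lies inside the closed $H_{i^\ast}[V_{i^\ast} \setminus A_{i^\ast}]$-neighborhood of $S_{\mathrm{in}}$, hence in $L_{i_a' - 1} \cup L_{i_a'} \cup \cdots \cup L_{i_b'} \cup L_{i_b' + 1}$. Splitting at the boundary, the middle part $V^{\ast\ast} \cap (L_{i_a'} \cup \cdots \cup L_{i_b'})$ lies in $C \cup R_3$ and contributes at most $\sum_{v \in C} f(v) + |R_3| \cdot \tau = n/2 + O(\eps n)$; the boundary part lies in the two light layers and contributes less than $2 \cdot (n/18) = n/9$. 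Adding gives $|S| < n/2 + n/9 + O(\eps n) = 11n/18 + O(\eps n)$, which is at most $(2/3)n = 12n/18$ once $\eps$ is fixed small enough (depending only on $H$). The main obstacle is exactly this middle case: one must justify that paths in $G[V \setminus R]$ respect the BFS layering, which relies on $C_{S'}$ being an $H_{i^\ast}$-clique confined to two consecutive layers, and one must tolerate the ``leakage'' of $V^{\ast\ast}$ into $R_3$ and into the boundary layers, absorbed by the $R_1$ threshold and the lightness criterion respectively.
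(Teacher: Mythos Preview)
Your proposal is correct and follows essentially the same approach as the paper's own proof: bound $\vol(R)$ term by term, dispose of the cases $S\cap V_{i^\ast}=\emptyset$ and $i\le i_a'-2$ (or $i\ge i_b'+2$) via \cref{lem:cliquesum-aux} and \cref{lem:easy-case-aux}, and in the main case trap $S\cap V_{i^\ast}$ inside a single component $C$ of the slab $L_{i_a'}\cup\cdots\cup L_{i_b'}\setminus R_3$ in $H_{i^\ast}[V_{i^\ast}\setminus A_{i^\ast}]$, then charge $|S|$ to $f(C)+f(R_3\setminus R_1)+f(L_{i_a'-1})+f(L_{i_b'+1})$ using \cref{lem:heavy} and the lightness bounds. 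Your write-up is in fact slightly more careful than the paper in one spot: by restricting attention to $V_{i^\ast}$-flaps $S'\subseteq S$ (rather than all $S'$ adjacent to $C$) you guarantee $C_{S'}\cap S_{\mathrm{in}}\neq\emptyset$ and hence $C_{S'}\not\subseteq R_1$, which is exactly what is needed to invoke \cref{lem:heavy}.
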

\begin{proof} The fact that $\vol(R) = O(\sqrt{\Delta n})$ is clear. Let $S$ be any $R$-flap. If $S \cap V_{i^\ast} = \emptyset$, then we have $S \subseteq S'$ for some $V_{i^\ast}$-flap $S'$, and we know that $|S'| \leq |S| \leq (1/2)n < (2/3)n$ by   \cref{lem:cliquesum-aux}.

Suppose that $S \cap V_{i^\ast} \neq \emptyset$. Let $u_0 \in S \cap V_{i^\ast}$. Then $u_0 \in L_i$ for some $i$. 
By our choice of $R_2$, there are three cases: $1 \leq i \leq i_a' - 2$, $i_a' \leq i \leq i_b'$, and $i_b'+2 \leq i \leq d$.

For the case $1 \leq i \leq i_a' - 2$,  \cref{lem:easy-case-aux} shows that any $(A_{i^\ast} \cup L_{i_a'-1})$-flap $S'$ that contains $u_0$ satisfies 
\[|S'| \leq f(L_{1} \cup L_{2} \cup \cdots \cup L_{i_a'-1}) \leq
f(L_{1} \cup L_{2} \cup \cdots \cup L_{i_a-1}) \leq
(1/3)f(V_{i^\ast} \setminus A_{i^\ast}) \leq (1/3)n < (2/3)n.\] 
Since $A_{i^\ast} \cup L_{i_a'-1} \subseteq R$, any $R$-flap $S$ must be a subset of some $(A_{i^\ast} \cup L_{i_a'-1})$-flap $S'$. Hence we also have $|S| \leq (2/3)n$. For the case $i_b'+2 \leq i \leq d$, a similar analysis also shows that $|S| \leq (2/3)n$.

For the rest of the proof, we assume that $i_a' \leq i \leq i_b'$. Then $u_0 \in W$ for some connected component $W$ of the subgraph of $H_{i^\ast}[V_{i^\ast} \setminus A_{i^\ast}]$ induced by $L_{i_a'} \cup L_{i_a' + 1} \cup \cdots \cup L_{i_b'} \setminus R_3$. Here $W$ can be seen as a vertex subset of both $H_{i^\ast}[V_{i^\ast} \setminus A_{i^\ast}]$ and $G$.
In the graph $G$, let $U$ be the union of $W$ and all $V_{i^\ast}$-flap $S'$ that is adjacent to $W$. 
It is clear that $S \subseteq U$. 

For any $V_{i^\ast}$-flap $S'$, the set $C_{S'}$ of vertices in $V_{i^\ast} \setminus A_{i^\ast}$ adjacent to $S'$ is a clique of size at most $h$. Given that $S'$  is adjacent to $W$, we have $C_{S'} \subseteq W \cup R_3 \cup L_{i_a'-1} \cup L_{i_b'+1}$. Therefore, for each $u \in S'$, we have $f_{u \rightarrow v} > 0$ only if $v \in W \cup R_3 \cup L_{i_a'-1} \cup L_{i_b'+1} \setminus R_1$. 
The reason that $R_1$ can be excluded is that $C_{S'}$ is not a subset of $R_1$, so \cref{lem:heavy} ensures that $f_{u \rightarrow v} = 0$ for all $v \in R_1$.

Similar to the proof of \cref{lem:easy-case-aux}, we can upper bound $|U|$ by
\[|U| \leq f(W) + f(R_3 \setminus R_1) + f(L_{i_a'-1}) + f(L_{i_b'+1}).\] 

By our choice of $R_3$, we have $f(W) \leq (1/2)n$. To upper bound $f(R_3 \setminus R_1)$,  recall that $|R_3 \setminus R_1| \leq |R_3| =  O\left(\sqrt{n / \Delta}\right)$ and each $v  \notin R_1$ has $f(v) < \eps \sqrt{\Delta n}$, so  $f(R_3 \setminus R_1)  = O(\eps n) \leq (1/18)n$ by selecting $\eps$ to be a sufficiently small constant. We also have $f(L_{i_a'-1}) \leq (1/18)n$  and $f(L_{i_b'+1}) \leq (1/18)n$ as $L_{i_a'-1}$ and $L_{i_b'+1}$ are light. To sum up,  $|S| \leq |U| \leq (1/2)n + 3 \cdot (1/18)n = (2/3)n$.
\end{proof}

Combining \cref{lem:reduction1,lem:easy-case,lem:main-flap-size} we conclude \cref{thm:edge-separator}.


{\small
\bibliographystyle{alpha}
\bibliography{references}
}

\end{document}